
\documentclass[peerreview,a4paper,12pt]{IEEEtran}


\usepackage{amsthm}
\usepackage{amsmath}
\usepackage{dsfont}
\usepackage{amsfonts,amssymb,amsmath}
\usepackage{graphicx}
\usepackage{epsfig}
\usepackage[applemac]{inputenc}
\usepackage{latexsym}

\newtheorem{mydef}{Definition}
\newtheorem{mycor}{Corollary}
\newtheorem{myprop}{Proposition}
\newtheorem{mythe}{Theorem}
\newtheorem{mylem}{Lemma}
\newtheorem{mynot}{Notation}
\newtheorem{myrem}{Remark}

\newtheorem{myconj}{Conjecture}

\DeclareMathOperator*{\DMC}{DMC}
\DeclareMathOperator*{\Proj}{Proj}

\DeclareMathOperator*{\srank}{srank}

\DeclareMathOperator*{\opt}{opt}
\DeclareMathOperator*{\ach}{ach}
\DeclareMathOperator*{\CPC}{CPC}
\DeclareMathOperator*{\PC}{PC}

\begin{document}

\sloppy

\title{A Characterization of the Shannon Ordering of Communication Channels}

\author{
  \IEEEauthorblockN{Rajai Nasser\\}
  \IEEEauthorblockA{
EPFL, Lausanne, Switzerland\\
Email: rajai.nasser@epfl.ch} 
}



\maketitle

\begin{abstract}
The ordering of communication channels was first introduced by Shannon. In this paper, we aim to find a characterization of the Shannon ordering. We show that $W'$ contains $W$ if and only if $W$ is the skew-composition of $W'$ with a convex-product channel. This fact is used to derive a characterization of the Shannon ordering that is similar to the Blackwell-Sherman-Stein theorem. Two channels are said to be Shannon-equivalent if each one is contained in the other. We investigate the topologies that can be constructed on the space of Shannon-equivalent channels. We introduce the strong topology and the BRM metric on this space. Finally, we study the continuity of a few channel parameters and operations under the strong topology.
\end{abstract}

\section{Introduction}

The ordering of communication channels was first introduced by Shannon in \cite{ShannonDegrad}. A channel $W'$ is said to contain another channel $W$ if $W$ can be simulated from $W'$ by randomization at the input and the output using a shared randomness between the transmitter and the receiver. Shannon showed that the existence of an $(n,M,\epsilon)$ code for $W$ implies the existence of an $(n,M,\epsilon)$ code for $W'$.

Another ordering that has been well studied is the degradedness between channels. A channel $W$ is said to be degraded from another channel $W'$ if $W$ can be simulated from $W'$ by randomization at the output, or more precisely, if $W$ can be obtained from $W'$ by composing it with another channel. It is easy to see that degradedness is a special case of Shannon's ordering. One can trace the roots of the notion of degradedness to the seminal work of Blackwell in the 1950's about comparing statistical experiments \cite{blackwell1951}. Note that in the Shannon ordering, the input and output alphabets need not be the same, whereas in the degradedness definition, we have to assume that $W$ and $W'$ share the same input alphabet $\mathcal{X}$ but they can have different output alphabets. A characterization of degradedness is given by the famous Blackwell-Sherman-Stein (BSS) theorem \cite{blackwell1951}, \cite{Sherman}, \cite{Stein}.

In \cite{RajInputDegrad}, we introduced the input-degradedness ordering of communication channels. A channel $W$ is said to be input-degraded from another channel $W'$ if $W$ can be simulated from $W'$ by randomization at the input. Note that $W$ and $W'$ must have the same output alphabet, but they can have different input alphabets. In \cite{RajInputDegrad}, we provided two characterizations of input-degradedness, one of which is similar to the BSS theorem. The main purpose of this paper is to find a characterization of the Shannon ordering that is similar to the BSS theorem.

In \cite{RaginskyShannon}, Raginsky introduced the Shannon deficiency which compares a particular channel with the Shannon-equivalence class of another channel. The Shannon deficiency is not a metric that compares two Shannon-equivalence classes of channels.

In \cite{RajDMCTop} and \cite{RajContTop}, we constructed topologies for the space of equivalent channels and studied the continuity of various channel parameters and operations under these topologies. In this paper, we show that some of the results in \cite{RajDMCTop} and \cite{RajContTop} can be replicated (with some variation) for the space of Shannon-equivalent channels.

\section{Preliminaries}

We assume that the reader is familiar with the basic concepts of general topology. The main concepts and theorems that we need can be found in the preliminaries section of \cite{RajDMCTop}.

\subsection{Set-theoretic notations}

For every integer $n>0$, we denote the set $\{1,\ldots,n\}$ as $[n]$.

Let $(A_i)_{i\in I}$ be a collection of arbitrary sets indexed by $I$. The \emph{disjoint union} of $(A_i)_{i\in I}$ is defined as $\displaystyle \coprod_{i\in I} A_i=\bigcup_{i\in I}(A_i\times\{i\})$. For every $i\in I$, the $i^{th}$-\emph{canonical injection} is the mapping $\phi_i:A_i\rightarrow \displaystyle\coprod_{j\in I} A_j$ defined as $\phi_i(x_i)=(x_i,i)$. If no confusions can arise, we can identify $A_i$ with $A_i\times\{i\}$ through the canonical injection. Therefore, we can see $A_i$ as a subset of $\displaystyle\coprod_{j\in I} A_j$ for every $i\in I$.

Let $R$ be an equivalence relation on $T$. For every $x\in T$, the set $\hat{x}=\{y\in T:\; x R y\}$ is the \emph{$R$-equivalence class} of $x$. The collection of $R$-equivalence classes, which we denote as $T/R$, forms a partition of $T$, and it is called the \emph{quotient space of $T$ by $R$}. The mapping $\Proj_R:T\rightarrow T/R$ defined as $\Proj_R(x)=\hat{x}$ for every $x\in T$ is the \emph{projection mapping onto $T/R$}.

\subsection{Measure theoretic notations}

The set of probability measures on a measurable space $(M,\Sigma)$ is denoted as $\mathcal{P}(M,\Sigma)$. For every $P_1,P_2\in\mathcal{P}(M,\Sigma)$, the \emph{total variation distance} between $P_1$ and $P_2$ is defined as:
$$\|P_1-P_2\|_{TV}=\sup_{A\in\Sigma}|P_1(A)-P_2(A)|.$$

If $\mathcal{X}$ is a finite set, we denote the set of probability distributions on $\mathcal{X}$ as $\Delta_{\mathcal{X}}$. We always endow $\Delta_{\mathcal{X}}$ with the total variation distance and its induced topology.

\subsection{Quotient topology}

Let $(T,\mathcal{U})$ be a topological space and let $R$ be an equivalence relation on $T$. The \emph{quotient topology} on $T/R$ is the finest topology that makes the projection mapping $\Proj_R$ onto the equivalence classes continuous. It is given by
$$\mathcal{U}/R=\left\{\hat{U}\subset T/R:\;\textstyle\Proj_R^{-1}(\hat{U})\in \mathcal{U}\right\}.$$

\begin{mylem}
\label{lemQuotientFunction}
Let $f:T\rightarrow S$ be a continuous mapping from $(T,\mathcal{U})$ to $(S,\mathcal{V})$. If $f(x)=f(x')$ for every $x,x'\in T$ satisfying $x R x'$, then we can define a \emph{transcendent mapping} $f:T/R\rightarrow S$ such that $f(\hat{x})=f(x')$ for any $x'\in\hat{x}$. $f$ is well defined on $T/R$ . Moreover, $f$ is a continuous mapping from $(T/R,\mathcal{U}/R)$ to $(S,\mathcal{V})$.
\end{mylem}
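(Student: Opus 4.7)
The plan is to handle the two claims of the lemma in sequence: first verify that the transcendent map is well defined, then verify its continuity using the defining property of the quotient topology.

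For well-definedness, I would argue as follows. Fix any equivalence class $\hat{x}\in T/R$. If $x_1,x_2\in\hat{x}$, then by definition of the equivalence class we have $x_1 R x_2$, and the hypothesis on $f$ gives $f(x_1)=f(x_2)$. Thus the value $f(\hat{x}):=f(x')$ does not depend on the representative $x'\in\hat{x}$ that is chosen, so the transcendent map (which, abusing notation, I continue to denote by $f$) is a well-defined function from $T/R$ to $S$. By construction, the identity $f\circ \Proj_R = f_{\text{orig}}$ holds on $T$, where $f_{\text{orig}}$ denotes the original continuous map.

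For continuity, I would take an arbitrary open set $V\in\mathcal{V}$ and show that $f^{-1}(V)\in \mathcal{U}/R$, where $f^{-1}$ now refers to the transcendent map. By the explicit description of the quotient topology recalled just above the lemma, this reduces to showing that $\Proj_R^{-1}(f^{-1}(V))\in\mathcal{U}$. But the identity $f\circ\Proj_R=f_{\text{orig}}$ gives
$$\Proj_R^{-1}\bigl(f^{-1}(V)\bigr)=(f\circ\Proj_R)^{-1}(V)=f_{\text{orig}}^{-1}(V),$$
and the right-hand side is open in $T$ because $f_{\text{orig}}$ is continuous from $(T,\mathcal{U})$ to $(S,\mathcal{V})$. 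Hence $f^{-1}(V)\in\mathcal{U}/R$, establishing continuity.

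There is essentially no obstacle here; the result is the standard universal property of the quotient topology, and the only thing one must be careful about is the notational overloading of $f$ for the original map on $T$ and its transcendent version on $T/R$. The core of the argument is the single identity $f\circ\Proj_R=f_{\text{orig}}$, which makes both the well-definedness (on the level of points) and the continuity (on the level of preimages of open sets) immediate.
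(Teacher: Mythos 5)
Your proof is correct and is exactly the standard argument the paper implicitly relies on (the lemma is stated in the preliminaries without proof, as the universal property of the quotient topology): well-definedness from the compatibility hypothesis, and continuity via the identity $\Proj_R^{-1}(f^{-1}(V))=f_{\text{orig}}^{-1}(V)$ together with the defining description of $\mathcal{U}/R$. Nothing is missing.
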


Let $(T,\mathcal{U})$ and $(S,\mathcal{V})$ be two topological spaces and let $R$ be an equivalence relation on $T$. Consider the equivalence relation $R'$ on $T\times S$ defined as $(x_1,y_1) R' (x_2,y_2)$ if and only if $x_1 R x_2$ and $y_1=y_2$. A natural question to ask is whether the canonical bijection between $\big((T/R)\times S,(\mathcal{U}/R)\otimes \mathcal{V} \big)$ and $\big((T\times S)/R',(\mathcal{U}\otimes \mathcal{V})/R' \big)$ is a homeomorphism. It turns out that this is not the case in general. The following theorem, which is widely used in algebraic topology, provides a sufficient condition:

\begin{mythe}
\label{theQuotientProd}
\cite{Engelking}
If $(S,\mathcal{V})$ is locally compact and Hausdorff, then the canonical bijection between $\big((T/R)\times S,(\mathcal{U}/R)\otimes \mathcal{V} \big)$ and $\big((T\times S)/R',(\mathcal{U}\otimes \mathcal{V})/R' \big)$ is a homeomorphism.
\end{mythe}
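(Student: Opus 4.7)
The canonical bijection here is
\[
\phi: (T/R)\times S \to (T\times S)/R', \qquad \phi(\hat{x}, y) = \widehat{(x,y)},
\]
well defined because $x\mathrel{R} x'$ forces $(x,y)\mathrel{R'}(x',y)$. My plan is to prove $\phi$ is simultaneously open and continuous, with the two halves having very different flavors.

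Openness of $\phi$, equivalently continuity of $\phi^{-1}$, does \emph{not} require local compactness. For a basic open rectangle $\hat{U}\times V$ of the product topology on $(T/R)\times S$, I would compute
\[
\Proj_{R'}^{-1}\bigl(\phi(\hat{U}\times V)\bigr) \;=\; \Proj_R^{-1}(\hat{U})\times V,
\]
which is open in $T\times S$ since $\Proj_R^{-1}(\hat{U})$ is open by definition of $\mathcal{U}/R$. The definition of the quotient topology on $(T\times S)/R'$ then gives $\phi(\hat{U}\times V)$ open, and since $\phi$ commutes with unions, it sends every open set to an open set.

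Continuity of $\phi$ is the delicate half, and is equivalent to proving that the continuous surjection $\Proj_R \times \mathrm{id}_S \colon T\times S \to (T/R)\times S$ is itself a quotient map. So suppose $A\subset (T/R)\times S$ with $B:=(\Proj_R\times \mathrm{id}_S)^{-1}(A)$ open in $T\times S$; I must show $A$ is open. Pick $(\hat{x}_0, y_0)\in A$ and a representative $x_0\in\hat{x}_0$. Local compactness plus Hausdorffness of $S$ supply a compact neighborhood $K$ of $y_0$ with $\{x_0\}\times K\subset B$. Define the saturated tube candidate
\[
U = \bigl\{\,x\in T : \{x\}\times K \subset B\,\bigr\}.
\]
Two things need checking: $U$ is $R$-saturated, because $B$ is $R'$-saturated by construction and so membership of $x$ in $U$ depends only on $\hat{x}$; and $U$ is open, which is the tube lemma, since for each $x\in U$ one covers the compact slice $\{x\}\times K\subset B$ by finitely many basic open rectangles inside $B$ and intersects their first factors. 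Then $\Proj_R(U)$ is open in $T/R$, and $\Proj_R(U)\times \mathrm{int}(K)$ is an open neighborhood of $(\hat{x}_0,y_0)$ contained in $A$.

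The main obstacle is concentrated entirely in the second half, specifically in constructing the open saturated tube $U$: this is precisely where the local compactness hypothesis enters and cannot be sidestepped. Familiar counterexamples (e.g.\ with $S=\mathbb{Q}$) show that, absent local compactness, the product of a quotient map with the identity can fail to be a quotient map, so any proof is forced to route the hypothesis through a tube-lemma argument of this form.
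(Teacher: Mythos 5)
Your proof is correct. The paper gives no proof of its own for this theorem---it simply cites Engelking---and your argument (openness of $\phi$ holding automatically, continuity reduced to showing $\Proj_R\times\mathrm{id}_S$ is a quotient map via the saturated-tube/tube-lemma construction using local compactness of $S$) is precisely the standard Whitehead-lemma proof found in that reference, so there is nothing to correct or compare.
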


\begin{mycor}
\label{corQuotientProd}
\cite{RajContTop} Let $(T,\mathcal{U})$ and $(S,\mathcal{V})$ be two topological spaces, and let $R_T$ and $R_S$ be two equivalence relations on $T$ and $S$ respectively. Define the equivalence relation $R$ on $T\times S$ as $(x_1,y_1) R (x_2,y_2)$ if and only if $x_1 R_T x_2$ and $y_1R_S y_2$. If $(S,\mathcal{V})$ and $(T/R_T,\mathcal{U}/R_T)$ are locally compact and Hausdorff, then the canonical bijection between $\big((T/R_T)\times (S/R_S),(\mathcal{U}/R_T)\otimes (\mathcal{V}/R_S) \big)$ and $\big((T\times S)/R,(\mathcal{U}\otimes \mathcal{V})/R \big)$ is a homeomorphism.
\end{mycor}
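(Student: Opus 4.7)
The plan is to apply Theorem~\ref{theQuotientProd} twice, factoring the quotient of $T\times S$ by $R$ through the intermediate space $(T/R_T)\times S$. First, I would define the equivalence relation $R_1$ on $T\times S$ by $(x_1,y_1)R_1(x_2,y_2)$ iff $x_1 R_T x_2$ and $y_1=y_2$. Since $(S,\mathcal{V})$ is locally compact and Hausdorff, Theorem~\ref{theQuotientProd} yields a canonical homeomorphism $\Phi_1 \colon (T\times S)/R_1 \to (T/R_T)\times S$.

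Next, on $(T/R_T)\times S$ define $R_2'$ by $(\hat{x}_1,y_1)R_2'(\hat{x}_2,y_2)$ iff $\hat{x}_1=\hat{x}_2$ and $y_1 R_S y_2$. The coordinate swap $(y,\hat{x})\mapsto(\hat{x},y)$ is a homeomorphism $S\times(T/R_T)\to(T/R_T)\times S$. Applying Theorem~\ref{theQuotientProd} with the pair $(S,R_S)$ in the role of $(T,R)$ and the locally compact Hausdorff space $(T/R_T,\mathcal{U}/R_T)$ in the role of the fixed factor, then composing with this swap, gives a canonical homeomorphism $\Phi_2 \colon ((T/R_T)\times S)/R_2' \to (T/R_T)\times (S/R_S)$.

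It remains to identify $((T/R_T)\times S)/R_2'$ with $(T\times S)/R$. The composition of the canonical projection $T\times S\to(T\times S)/R_1$, the homeomorphism $\Phi_1$, and the canonical projection $(T/R_T)\times S\to((T/R_T)\times S)/R_2'$ is a composition of quotient maps, hence itself a quotient map. Its fiber over the class of $(\hat{x}_1,y_1)$ is exactly $\{(x_2,y_2):x_2 R_T x_1 \text{ and } y_2 R_S y_1\}$, which is the $R$-equivalence class of $(x_1,y_1)$. The universal property of the quotient topology (equivalently, Lemma~\ref{lemQuotientFunction} applied in both directions to the canonical bijection) then identifies $((T/R_T)\times S)/R_2'$ canonically with $(T\times S)/R$, and chaining the three homeomorphisms produces the desired one.

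The main obstacle is this third step: one must verify that pulling $R_2'$ back through $\Phi_1$ yields exactly $R$, and that the two-stage quotient collapses to the single quotient $(T\times S)/R$. The first two steps are essentially direct invocations of Theorem~\ref{theQuotientProd}, modulo the harmless coordinate swap in Step~2.
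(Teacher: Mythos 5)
Your proof is correct, and it is essentially the intended derivation: the paper states this corollary without proof (citing \cite{RajContTop}), but the hypothesis that both $(S,\mathcal{V})$ and $(T/R_T,\mathcal{U}/R_T)$ be locally compact Hausdorff is tailored exactly for your two-stage factorization through $(T/R_T)\times S$, applying Theorem~\ref{theQuotientProd} once for each factor. Your third step is also sound, since the composite $T\times S\rightarrow((T/R_T)\times S)/R_2'$ is a quotient map whose fibers are precisely the $R$-classes, which canonically identifies its target with $(T\times S)/R$.
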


\subsection{The space of channels from $\mathcal{X}$ to $\mathcal{Y}$}

Let $\DMC_{\mathcal{X},\mathcal{Y}}$ be the set of all channels having $\mathcal{X}$ as input alphabet and $\mathcal{Y}$ as output alphabet. For every $W,W'\in\DMC_{\mathcal{X},\mathcal{Y}}$, define the distance between $W$ and $W'$ as:
$$d_{\mathcal{X},\mathcal{Y}}(W,W')=\frac{1}{2} \max_{x\in\mathcal{X}}\sum_{y\in\mathcal{Y}}|W'(y|x)-W(y|x)|.$$

Throughout this paper, we always associate the space $\DMC_{\mathcal{X},\mathcal{Y}}$ with the metric distance $d_{\mathcal{X},\mathcal{Y}}$ and the metric topology $\mathcal{T}_{\mathcal{X},\mathcal{Y}}$ induced by it. It is easy to see that $\mathcal{T}_{\mathcal{X},\mathcal{Y}}$ is the same as the topology inherited from the Euclidean topology of $\mathbb{R}^{\mathcal{X}\times\mathcal{Y}}$ by relativization. It is also easy to see that the metric space $\DMC_{\mathcal{X},\mathcal{Y}}$ is compact and path-connected (see \cite{RajDMCTop}).

For every $W\in\DMC_{\mathcal{X},\mathcal{Y}}$ and every $V\in\DMC_{\mathcal{Y},\mathcal{Z}}$, define the composition $V\circ W\in\DMC_{\mathcal{X},\mathcal{Z}}$ as
$$(V\circ W)(z|x)=\sum_{y\in\mathcal{Y}}V(z|y)W(y|x),\;\;\forall x\in\mathcal{X},\;\forall z\in\mathcal{Z}.$$

For every mapping $f:\mathcal{X}\rightarrow\mathcal{Y}$, define the deterministic channel $D_f\in\DMC_{\mathcal{X},\mathcal{Y}}$ as $$D_f(y|x)=\begin{cases}1\quad&\text{if}\;y=f(x),\\0\quad&\text{otherwise}.\end{cases}$$
It is easy to see that if $f:\mathcal{X}\rightarrow \mathcal{Y}$ and $g:\mathcal{Y}\rightarrow \mathcal{Z}$, then $D_g\circ D_f=D_{g\circ f}$.

\subsection{Channel parameters}

The \emph{capacity} of a channel $W\in\DMC_{\mathcal{X},\mathcal{Y}}$ is denoted as $C(W)$.

An \emph{$(n,M)$-encoder} on the alphabet $\mathcal{X}$  is a mapping $\mathcal{E}:\mathcal{M}\rightarrow\mathcal{X}^n$ such that $|\mathcal{M}|=M$. The set $\mathcal{M}$ is the \emph{message set} of $\mathcal{E}$, $n$ is the \emph{blocklength} of $\mathcal{E}$, $M$ is the \emph{size} of $\mathcal{E}$, and $\frac{1}{n}\log M$ is the \emph{rate} of $\mathcal{E}$ (measured in nats). The \emph{error probability of the ML decoder for the encoder $\mathcal{E}$ when it is used for a channel $W\in\DMC_{\mathcal{X},\mathcal{Y}}$} is given by:
$$P_{e,\mathcal{E}}(W)=1-\frac{1}{M}\sum_{y_1^n\in\mathcal{Y}^n} \max_{m\in\mathcal{M}}\left\{\prod_{i=1}^n W(y_i|\mathcal{E}_i(m))\right\},$$
where $(\mathcal{E}_1(m),\ldots,\mathcal{E}_n(m))=\mathcal{E}(m)$.

The \emph{optimal error probability of $(n,M)$-encoders for a channel $W$} is given by:
$$P_{e,n,M}(W)=\min_{\substack{\mathcal{E}\;\text{is an}\\(n,M)\text{-encoder}}}P_{e,\mathcal{E}}(W).$$

\subsection{Channel operations}

For every $W_1\in \DMC_{\mathcal{X}_1,\mathcal{Y}_1}$ and $W_2\in \DMC_{\mathcal{X}_2,\mathcal{Y}_2}$, define the \emph{channel sum} $W_1 \oplus W_2\in \DMC_{\mathcal{X}_1\coprod\mathcal{X}_2,\mathcal{Y}_1\coprod\mathcal{Y}_2}$ of $W_1$ and $W_2$ as:
$$(W_1\oplus W_2)(y,i|x,j)=\begin{cases}W_i(y|x)\quad&\text{if}\;i=j,\\
0&\text{otherwise},\end{cases}$$
where $\mathcal{X}_1\coprod\mathcal{X}_2=(\mathcal{X}_1\times\{1\})\cup(\mathcal{X}_2\times\{2\})$ is the disjoint union of $\mathcal{X}_1$ and $\mathcal{X}_2$. $W_1\oplus W_2$ arises when the transmitter has two channels $W_1$ and $W_2$ at his disposal and he can use exactly one of them at each channel use.

We define the \emph{channel product} $W_1\otimes W_2\in \DMC_{\mathcal{X}_1\times\mathcal{X}_2,\mathcal{Y}_1\times\mathcal{Y}_2}$ of $W_1$ and $W_2$ as:
$$(W_1\otimes W_2)(y_1,y_2|x_1,x_2)=W_1(y_1|x_1)W_2(y_2|x_2).$$
$W_1\otimes W_2$ arises when the transmitter has two channels $W_1$ and $W_2$ at his disposal and he uses both of them at each channel use. Channel sums and products were first introduced by Shannon in \cite{ChannelSumProduct}.

\section{Shannon ordering and Shannon-equivalence}

\label{secShannonOrd}

Let $\mathcal{X},\mathcal{X}',\mathcal{Y}$ and $\mathcal{Y}'$ be three finite sets. Let $W\in \DMC_{\mathcal{X},\mathcal{Y}}$ and $W'\in \DMC_{\mathcal{X}',\mathcal{Y}'}$. We say that $W'$ contains $W$ if there exist $n$ pairs of channels $(R_i,T_i)_{1\leq i\leq n}$ and a probability distribution $\alpha\in\Delta_{[n]}$ such that $R_i\in\DMC_{\mathcal{X},\mathcal{X}'}$ and $T_i\in\DMC_{\mathcal{Y}',\mathcal{Y}}$ for every $1\leq i\leq n$, and $\displaystyle W=\sum_{i=1}^n \alpha(i)T_i\circ W'\circ R_i$, i.e.,
$$W(y|x)=\sum_{i=1}^n\alpha(i)\sum_{\substack{x'\in\mathcal{X}',\\y'\in\mathcal{Y}'}}T_i(y|y')W'(y'|x')R_i(x'|x).$$
The channels $W$ and $W'$ are said to be \emph{Shannon-equivalent} if each one contains the other.

A channel $V\in\DMC_{\mathcal{X}\times\mathcal{Y}',\mathcal{X}'\times\mathcal{Y}}$ is said to be a \emph{convex-product channel} if it is the convex combination of the products of channels in $\DMC_{\mathcal{X},\mathcal{X}'}$ with channels in $\DMC_{\mathcal{Y}',\mathcal{Y}}$. More precisely, $V\in\DMC_{\mathcal{X}\times\mathcal{Y}',\mathcal{X}'\times\mathcal{Y}}$ is a convex-product channel if there exist $n$ pairs of channels $(R_i,T_i)_{1\leq i\leq n}$ and a probability distribution $\alpha\in\Delta_{[n]}$ such that $R_i\in\DMC_{\mathcal{X},\mathcal{X}'}$ and $T_i\in\DMC_{\mathcal{Y}',\mathcal{Y}}$ for every $1\leq i\leq n$, and
$$V(x',y|x,y')=\sum_{i=1}^n\alpha(i)R_i(x'|x)T_i(y|y').$$
We denote the set of convex-product channels from $\mathcal{X}\times\mathcal{Y}'$ to $\mathcal{X}'\times\mathcal{Y}$ as $\CPC_{\mathcal{X}\times\mathcal{Y}',\mathcal{X}'\times\mathcal{Y}}$. 

\begin{myprop}
\label{propCPCCompactConvex}
The space $\CPC_{\mathcal{X}\times\mathcal{Y}',\mathcal{X}'\times\mathcal{Y}}$ is a compact and convex subset of $\DMC_{\mathcal{X}\times\mathcal{Y}',\mathcal{X}'\times\mathcal{Y}}$.
\end{myprop}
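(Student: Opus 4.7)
The plan is to verify convexity directly from the definition, and then obtain compactness by exhibiting $\CPC_{\mathcal{X}\times\mathcal{Y}',\mathcal{X}'\times\mathcal{Y}}$ as the convex hull of a compact subset of a finite-dimensional Euclidean space, to which Carath\'eodory's theorem applies.

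For convexity, suppose $V_1,V_2\in\CPC_{\mathcal{X}\times\mathcal{Y}',\mathcal{X}'\times\mathcal{Y}}$, written as $V_k=\sum_{i=1}^{n_k}\alpha_k(i)\,R_{k,i}\otimes T_{k,i}$ in the obvious notation, and fix $\lambda\in[0,1]$. Then $\lambda V_1+(1-\lambda)V_2$ is the convex combination of the $n_1+n_2$ channel products $\{R_{1,i}\otimes T_{1,i}\}_i\cup\{R_{2,j}\otimes T_{2,j}\}_j$ with coefficients $\lambda\alpha_1(i)$ and $(1-\lambda)\alpha_2(j)$, which sum to $1$. Hence $\lambda V_1+(1-\lambda)V_2\in\CPC_{\mathcal{X}\times\mathcal{Y}',\mathcal{X}'\times\mathcal{Y}}$.

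For compactness, I would first introduce the set
$$\mathcal{P}=\bigl\{V_{R,T}\in\DMC_{\mathcal{X}\times\mathcal{Y}',\mathcal{X}'\times\mathcal{Y}}:\;R\in\DMC_{\mathcal{X},\mathcal{X}'},\;T\in\DMC_{\mathcal{Y}',\mathcal{Y}}\bigr\},$$
where $V_{R,T}(x',y|x,y')=R(x'|x)T(y|y')$. The map $(R,T)\mapsto V_{R,T}$ is continuous (it is bilinear on a bounded set), so $\mathcal{P}$ is the continuous image of the compact product space $\DMC_{\mathcal{X},\mathcal{X}'}\times\DMC_{\mathcal{Y}',\mathcal{Y}}$ and is therefore compact. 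By definition, $\CPC_{\mathcal{X}\times\mathcal{Y}',\mathcal{X}'\times\mathcal{Y}}=\conv(\mathcal{P})$.

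Finally, $\DMC_{\mathcal{X}\times\mathcal{Y}',\mathcal{X}'\times\mathcal{Y}}$ sits inside the finite-dimensional Euclidean space $\mathbb{R}^{(\mathcal{X}\times\mathcal{Y}')\times(\mathcal{X}'\times\mathcal{Y})}$, and in finite dimensions the convex hull of a compact set is compact (an immediate consequence of Carath\'eodory's theorem: every point in $\conv(\mathcal{P})$ is a convex combination of at most $d+1$ points of $\mathcal{P}$, so $\conv(\mathcal{P})$ is the image of the compact set $\Delta_{[d+1]}\times\mathcal{P}^{d+1}$ under a continuous map). This yields compactness of $\CPC_{\mathcal{X}\times\mathcal{Y}',\mathcal{X}'\times\mathcal{Y}}$. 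There is no genuine obstacle; the only point to be careful about is that the convex hull is taken in a finite-dimensional ambient space, which is precisely what makes compactness automatic once $\mathcal{P}$ is compact.
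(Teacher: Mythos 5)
Your proof is correct and follows essentially the same route as the paper: both arguments identify $\CPC_{\mathcal{X}\times\mathcal{Y}',\mathcal{X}'\times\mathcal{Y}}$ as the convex hull of the compact set of product channels and then invoke Carath\'eodory's theorem to write it as the continuous image of a compact parameter space (the paper uses $\Delta_{[n]}\times(\DMC_{\mathcal{X},\mathcal{X}'}\times\DMC_{\mathcal{Y}',\mathcal{Y}})^n$ directly, while you factor through the compactness of $\mathcal{P}$ first, which is only a cosmetic difference). No gaps.
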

\begin{proof}
Define the set of product channels $${\PC}_{\mathcal{X}\times\mathcal{Y}',\mathcal{X}'\times\mathcal{Y}}=\{R\otimes T:\;R\in{\DMC}_{\mathcal{X},\mathcal{X}'},\;T\in{\DMC}_{\mathcal{Y}',\mathcal{Y}}\}.$$
Clearly, $\CPC_{\mathcal{X}\times\mathcal{Y}',\mathcal{X}'\times\mathcal{Y}}$ is the convex hull of $\PC_{\mathcal{X}\times\mathcal{Y}',\mathcal{X}'\times\mathcal{Y}}$ and so $\CPC_{\mathcal{X}\times\mathcal{Y}',\mathcal{X}'\times\mathcal{Y}}$ is convex. Now since $\PC_{\mathcal{X}\times\mathcal{Y}',\mathcal{X}'\times\mathcal{Y}}$ can be seen as a subset of $\mathbb{R}^{\mathcal{X}\times\mathcal{Y}'\times\mathcal{X}'\times\mathcal{Y}}$, it follows from the Carath\'{e}odory theorem that every channel $V$ in $\CPC_{\mathcal{X}\times\mathcal{Y}',\mathcal{X}'\times\mathcal{Y}}$ can be written as a convex combination of at most $$n=|\mathcal{X}\times\mathcal{Y}'\times\mathcal{X}'\times\mathcal{Y}|+1$$ product channels in $\PC_{\mathcal{X}\times\mathcal{Y}',\mathcal{X}'\times\mathcal{Y}}$. Define the mapping $$f:\Delta_{[n]}\times({\DMC}_{\mathcal{X},\mathcal{X}'}\times {\DMC}_{\mathcal{Y}',\mathcal{Y}})^n\rightarrow {\DMC}_{\mathcal{X}\times\mathcal{Y}',\mathcal{X}'\times\mathcal{Y}}$$ as
$$f\big(\alpha,(R_i,T_i)_{1\leq i\leq n}\big) = \sum_{i=1}^n \alpha(i)R_i\otimes T_i.$$
Since $\Delta_{[n]}$, ${\DMC}_{\mathcal{X},\mathcal{X}'}$ and ${\DMC}_{\mathcal{Y}',\mathcal{Y}}$ are compact, the space $\Delta_{[n]}\times({\DMC}_{\mathcal{X},\mathcal{X}'}\times {\DMC}_{\mathcal{Y}',\mathcal{Y}})^n$ is compact. Moreover, since $f$ is continuous, it follows that $${\CPC}_{\mathcal{X}\times\mathcal{Y}',\mathcal{X}'\times\mathcal{Y}}=f\big(\Delta_{[n]}\times({\DMC}_{\mathcal{X},\mathcal{X}'}\times {\DMC}_{\mathcal{Y}',\mathcal{Y}})^n\big)$$
is compact.
\end{proof}

\vspace*{3mm}

Let $\mathcal{X},\mathcal{X}',\mathcal{X}'',\mathcal{Y},\mathcal{Y}'$ and $\mathcal{Y}''$ be finite sets. For every $V\in\CPC_{\mathcal{X}\times\mathcal{Y}',\mathcal{X}'\times\mathcal{Y}}$ and every $V'\in\DMC_{\mathcal{X}'\times\mathcal{Y}'',\mathcal{X}''\times\mathcal{Y}'}$, define the \emph{skew-composition} $V\circ_s V'\in\DMC_{\mathcal{X}\times\mathcal{Y}'',\mathcal{X}''\times\mathcal{Y}}$ of $V'$ with $V$ as follows:
\begin{equation}
\label{eqSkewComposition}
(V\circ_s V')(x'',y|x,y'')=\sum_{\substack{x'\in\mathcal{X}',\\y'\in\mathcal{Y}'}}V(x',y|x,y')V'(x'',y'|x',y''),
\end{equation}
for every $x''\in\mathcal{X}''$, $y\in \mathcal{Y}$, $x\in\mathcal{X}$ and $y''\in\mathcal{Y}''$. It may not be immediately clear from \eqref{eqSkewComposition} that $V\circ_s V'$ is a valid channel in $\DMC_{\mathcal{X}\times\mathcal{Y}'',\mathcal{X}''\times\mathcal{Y}}$. In the following, we show that $V\circ_s V'\in\DMC_{\mathcal{X}\times\mathcal{Y}'',\mathcal{X}''\times\mathcal{Y}}$.

Let $n\geq 1$, $\alpha\in\Delta_{[n]}$, $(R_i,T_i)_{1\leq i\leq n}$ be such that $R_i\in\DMC_{\mathcal{X},\mathcal{X}'}$ and $T_i\in\DMC_{\mathcal{Y}',\mathcal{Y}}$ for every $1\leq i\leq n$, and
$$V=\sum_{i=1}^n\alpha(i)R_i\otimes T_i.$$
For every $(x,y'')\in\mathcal{X}\times\mathcal{Y}''$, we have
\begin{align*}
\sum_{\substack{x''\in\mathcal{X}'',\\y\in\mathcal{Y}}}(V\circ_s V')(x'',y|x,y'')&= \sum_{\substack{x''\in\mathcal{X}'',\\y\in\mathcal{Y}}}\sum_{\substack{x'\in\mathcal{X}',\\y'\in\mathcal{Y}'}}V(x',y|x,y')V'(x'',y'|x',y'')\\
&=\sum_{\substack{x''\in\mathcal{X}'',\\y\in\mathcal{Y}}}\sum_{\substack{x'\in\mathcal{X}',\\y'\in\mathcal{Y}'}}\sum_{i=1}^n\alpha(i) R_i(x'|x)T_i(y|y')V'(x'',y'|x',y'')\\
&=\sum_{i=1}^n\alpha(i) \sum_{\substack{x''\in\mathcal{X}'',\\y\in\mathcal{Y}}}\sum_{\substack{x'\in\mathcal{X}',\\y'\in\mathcal{Y}'}}R_i(x'|x)T_i(y|y')V'(x'',y'|x',y'')\\
&=\sum_{i=1}^n\alpha(i) \sum_{\substack{x''\in\mathcal{X}''}}\sum_{\substack{x'\in\mathcal{X}',\\y'\in\mathcal{Y}'}}R_i(x'|x)V'(x'',y'|x',y'')\\
&=\sum_{i=1}^n\alpha(i) \sum_{\substack{x'\in\mathcal{X}'}}R_i(x'|x)=\sum_{i=1}^n\alpha(i)=1.
\end{align*}
Therefore, $V\circ_s V' \in\DMC_{\mathcal{X}\times\mathcal{Y}'',\mathcal{X}''\times\mathcal{Y}}$. Note that if $V\in \DMC_{\mathcal{X}\times\mathcal{Y}',\mathcal{X}'\times\mathcal{Y}}$ and $V\notin  \CPC_{\mathcal{X}\times\mathcal{Y}',\mathcal{X}'\times\mathcal{Y}}$, then the skew-composition of $V'$ with $V$ as defined in Equation \eqref{eqSkewComposition} does not always yield a valid channel in $\DMC_{\mathcal{X}\times\mathcal{Y}'',\mathcal{X}''\times\mathcal{Y}}$.

\begin{mylem}
\label{lemSkewCompositionCPC}
If $V\in\CPC_{\mathcal{X}\times\mathcal{Y}',\mathcal{X}'\times\mathcal{Y}}$ and $V'\in\CPC_{\mathcal{X}'\times\mathcal{Y}'',\mathcal{X}''\times\mathcal{Y}'}$, then $V\circ_s V' \in\CPC_{\mathcal{X}\times\mathcal{Y}'',\mathcal{X}''\times\mathcal{Y}}$.
\end{mylem}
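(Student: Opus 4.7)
The plan is to expand both $V$ and $V'$ in their convex-product representations, substitute them into the defining formula \eqref{eqSkewComposition}, and rearrange the resulting multi-index sum until it appears as a convex combination of products of a channel in $\DMC_{\mathcal{X},\mathcal{X}''}$ with a channel in $\DMC_{\mathcal{Y}'',\mathcal{Y}}$.

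Concretely, I write
\[
V=\sum_{i=1}^{n}\alpha(i)\, R_i\otimes T_i,\qquad V'=\sum_{j=1}^{m}\beta(j)\, R'_j\otimes T'_j,
\]
with $R_i\in\DMC_{\mathcal{X},\mathcal{X}'}$, $T_i\in\DMC_{\mathcal{Y}',\mathcal{Y}}$, $R'_j\in\DMC_{\mathcal{X}',\mathcal{X}''}$, $T'_j\in\DMC_{\mathcal{Y}'',\mathcal{Y}'}$, and $\alpha\in\Delta_{[n]}$, $\beta\in\Delta_{[m]}$. Substituting into \eqref{eqSkewComposition}, the summand $V(x',y|x,y')V'(x'',y'|x',y'')$ becomes $\sum_{i,j}\alpha(i)\beta(j)R_i(x'|x)T_i(y|y')R'_j(x''|x')T'_j(y'|y'')$. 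The crucial observation is that the variables $x'$ and $y'$ appear in completely disjoint groups of factors, so after swapping the order of summation the inner double sum over $(x',y')$ factorises. Specifically, I will pull $\alpha(i)\beta(j)$ out and split the sum over $x'$ (which only touches $R_i$ and $R'_j$) from the sum over $y'$ (which only touches $T_i$ and $T'_j$).

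The key step is then to recognise these two inner sums as ordinary channel compositions in the sense defined earlier in the paper:
\[
\sum_{x'\in\mathcal{X}'}R'_j(x''|x')R_i(x'|x)=(R'_j\circ R_i)(x''|x),\qquad \sum_{y'\in\mathcal{Y}'}T_i(y|y')T'_j(y'|y'')=(T_i\circ T'_j)(y|y'').
\]
Therefore
\[
(V\circ_s V')(x'',y|x,y'')=\sum_{i=1}^{n}\sum_{j=1}^{m}\alpha(i)\beta(j)\,(R'_j\circ R_i)(x''|x)\,(T_i\circ T'_j)(y|y''),
\]
i.e., $V\circ_s V'=\sum_{i,j}\alpha(i)\beta(j)\,(R'_j\circ R_i)\otimes(T_i\circ T'_j)$. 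Since $R'_j\circ R_i\in\DMC_{\mathcal{X},\mathcal{X}''}$ and $T_i\circ T'_j\in\DMC_{\mathcal{Y}'',\mathcal{Y}}$, and since $\{\alpha(i)\beta(j)\}_{i,j}$ is a probability distribution on $[n]\times[m]$ (after re-indexing, on $[nm]$), this exhibits $V\circ_s V'$ as a convex combination of products, hence as an element of $\CPC_{\mathcal{X}\times\mathcal{Y}'',\mathcal{X}''\times\mathcal{Y}}$.

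I do not expect any genuine obstacle here: the argument is purely algebraic bookkeeping. The only subtlety to state carefully is the separation of the $x'$- and $y'$-sums; this is possible precisely because each $R_i\otimes T_i$ factorises, which is exactly the hypothesis that $V$ (and symmetrically $V'$) is a convex-product channel rather than a general channel. The remark just before the lemma already warns that this factorisation fails for arbitrary $V\in\DMC_{\mathcal{X}\times\mathcal{Y}',\mathcal{X}'\times\mathcal{Y}}$, so the assumption $V\in\CPC$ is used exactly at this step.
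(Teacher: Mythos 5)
Your proposal is correct and follows exactly the same route as the paper's proof: expand $V$ and $V'$ as convex combinations of product channels, factorize the inner sums over $x'$ and $y'$, and identify them as the compositions $R'_j\circ R_i$ and $T_i\circ T'_j$, yielding $V\circ_s V'=\sum_{i,j}\alpha(i)\beta(j)\,(R'_j\circ R_i)\otimes(T_i\circ T'_j)$. No gaps; nothing further to add.
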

\begin{proof}
Let $n\geq 1$, $\alpha\in\Delta_{[n]}$, $(R_i,T_i)_{1\leq i\leq n}$ be such that $R_i\in\DMC_{\mathcal{X},\mathcal{X}'}$ and $T_i\in\DMC_{\mathcal{Y}',\mathcal{Y}}$ for every $1\leq i\leq n$, and
$$V=\sum_{i=1}^n\alpha(i)R_i\otimes T_i.$$
Let $n'\geq 1$, $\alpha'\in\Delta_{[n']}$, $(R_j',T_j')_{1\leq j\leq n'}$ be such that $R_j'\in\DMC_{\mathcal{X}',\mathcal{X}''}$ and $T_j'\in\DMC_{\mathcal{Y}'',\mathcal{Y}'}$ for every $1\leq j\leq n'$, and
$$V'=\sum_{j=1}^{n'}\alpha'(j)R_j'\otimes T_j'.$$
We have
\begin{align*}
(V\circ_s V')(x'',y|x,y'')&= \sum_{\substack{x'\in\mathcal{X}',\\y'\in\mathcal{Y}'}}V(x',y|x,y')V'(x'',y'|x',y'')\\
&=\sum_{\substack{x'\in\mathcal{X}',\\y'\in\mathcal{Y}'}}\sum_{i=1}^n\alpha(i)R_i(x'|x)T_i(y|y')\sum_{j=1}^{n'}\alpha'(j)R_j'(x''|x')T_j'(y'|y'')\\
&=\sum_{i=1}^n\sum_{j=1}^{n'} \alpha(i)\alpha'(j) \sum_{\substack{x'\in\mathcal{X}',\\y'\in\mathcal{Y}'}}R_i(x'|x)T_i(y|y')R_j'(x''|x')T_j'(y'|y'')\\
&=\sum_{i=1}^n\sum_{j=1}^{n'} \alpha(i)\alpha'(j) (R_j'\circ R_i)(x''|x)(T_i\circ T_j')(y|y'').
\end{align*}
Therefore, $V\circ_s V' \in\CPC_{\mathcal{X}\times\mathcal{Y}'',\mathcal{X}''\times\mathcal{Y}}$.
\end{proof}

\vspace*{3mm}

For every $W'\in\DMC_{\mathcal{X}',\mathcal{Y}'}$ and every $V\in  \CPC_{\mathcal{X}\times\mathcal{Y}',\mathcal{X}'\times\mathcal{Y}}$, we define the \emph{skew-composition} $V\circ_s W'\in\DMC_{\mathcal{X},\mathcal{Y}}$ of $W'$ with $V$ as follows:
\begin{equation}
\label{eqSkewCompositionChan}
(V\circ_s W')(y|x)=\sum_{\substack{x'\in\mathcal{X}',\\y'\in\mathcal{Y'}}}V(x',y|x,y')W'(y'|x').
\end{equation}
Note that Equation \eqref{eqSkewCompositionChan} can be seen as a particular case of Equation \eqref{eqSkewComposition} if we let $\mathcal{X}''=\mathcal{Y}''=\{0\}$ (i.e., a singleton) and we identify $\DMC_{\mathcal{X}',\mathcal{Y}'}$ with $\DMC_{\mathcal{X}'\times\mathcal{Y''},\mathcal{X}''\times\mathcal{Y}'}$.

The following lemma is trivial:
\begin{mylem}
\label{lemContainSkew}
Let $W\in\DMC_{\mathcal{X},\mathcal{Y}}$ and $W'\in\DMC_{\mathcal{X}',\mathcal{Y}'}$. $W'$ contains $W$ if and only if there exists $V\in\CPC_{\mathcal{X}\times\mathcal{Y}',\mathcal{X}'\times\mathcal{Y}}$ such that $W=V\circ_s W'$.
\end{mylem}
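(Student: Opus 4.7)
The plan is to verify both directions of the equivalence by direct unpacking of the definitions; no substantive obstacle is expected, since every ingredient has already been assembled above.

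For the forward direction, I start from the assumption that $W'$ contains $W$. By definition, this gives an integer $n$, a distribution $\alpha\in\Delta_{[n]}$, and pairs $(R_i,T_i)_{1\leq i\leq n}$ with $R_i\in\DMC_{\mathcal{X},\mathcal{X}'}$ and $T_i\in\DMC_{\mathcal{Y}',\mathcal{Y}}$ such that $W=\sum_i\alpha(i)\,T_i\circ W'\circ R_i$. I would then set
$$V=\sum_{i=1}^n\alpha(i)\,R_i\otimes T_i,$$
which lies in $\CPC_{\mathcal{X}\times\mathcal{Y}',\mathcal{X}'\times\mathcal{Y}}$ by the very definition of the set of convex-product channels. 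Substituting this $V$ into \eqref{eqSkewCompositionChan} and interchanging the order of summation recovers $\sum_i\alpha(i)\sum_{x',y'}T_i(y|y')W'(y'|x')R_i(x'|x)$, which is exactly $W(y|x)$. Hence $W=V\circ_s W'$.

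For the converse, suppose $V\in\CPC_{\mathcal{X}\times\mathcal{Y}',\mathcal{X}'\times\mathcal{Y}}$ and $W=V\circ_s W'$. By definition of $\CPC$, I can write $V=\sum_i\alpha(i)\,R_i\otimes T_i$ for some $n$, $\alpha\in\Delta_{[n]}$, and pairs $(R_i,T_i)$ in $\DMC_{\mathcal{X},\mathcal{X}'}\times\DMC_{\mathcal{Y}',\mathcal{Y}}$. Plugging this expression into the formula \eqref{eqSkewCompositionChan} for the skew-composition and again swapping the order of summation yields
$$W(y|x)=\sum_{i=1}^n\alpha(i)\sum_{x',y'}R_i(x'|x)T_i(y|y')W'(y'|x')=\sum_{i=1}^n\alpha(i)\,(T_i\circ W'\circ R_i)(y|x),$$
which is precisely the statement that $W'$ contains $W$.

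Since both directions reduce to a one-line manipulation of sums using the explicit forms of $V$ and of $\circ_s$, there is no real obstacle; the lemma is indeed trivial as claimed, and the only thing to be careful about is making the identification $V=\sum_i\alpha(i)R_i\otimes T_i$ explicit so that the computation can be written symmetrically in the two directions.
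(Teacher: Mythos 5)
Your proof is correct and is exactly the routine definition-unpacking that the paper has in mind when it calls the lemma trivial (the paper omits the proof entirely): the correspondence $V=\sum_i\alpha(i)R_i\otimes T_i$ translates the containment identity $W=\sum_i\alpha(i)\,T_i\circ W'\circ R_i$ into $W=V\circ_s W'$ and vice versa. Nothing further is needed.
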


\section{A characterization of the Shannon ordering}

A \emph{blind randomized in the middle (BRM) game} is a 6-tuple $\mathcal{G}= (\mathcal{U},\mathcal{X},\mathcal{Y},\mathcal{V}, l,W)$ such that $\mathcal{U},\mathcal{X},\mathcal{Y}$ and $\mathcal{V}$ are finite sets, $l$ is a mapping from $\mathcal{U}\times\mathcal{V}$ to $\mathbb{R}$, and $W\in\DMC_{\mathcal{X},\mathcal{Y}}$. The mapping $l$ is called the \emph{payoff function} of the BRM game $\mathcal{G}$, and the channel $W$ is called the \emph{randomizer} of $\mathcal{G}$. The BRM game consists of two players that we call Alice and Bob. The BRM game takes place in two stages:
\begin{itemize}
\item Alice chooses a symbol $u\in\mathcal{U}$ and writes her choice on a piece of paper. Bob chooses two functions $f:\mathcal{U}\rightarrow\mathcal{X}$ and $g:\mathcal{Y}\rightarrow\mathcal{V}$, and writes a description of $f$ and $g$ on a piece of paper. At this stage, no player has knowledge of the choice of the other player.
\item Alice and Bob simultaneously reveal their papers. They compute $x=f(u)\in\mathcal{X}$ and then randomly generate a symbol $y\in\mathcal{Y}$ according to the conditional probability distribution $W(y|x)$. Finally, $v=g(y)$ is computed and then Alice pays\footnote{If $l(u,v)<0$, then Bob pays Alice an amount of money that is equal to $-l(u,v)$.} Bob an amount of money that is equal to $l(u,v)$.
\end{itemize}

A \emph{strategy} (for Bob) in the BRM game $\mathcal{G}$ is a 4-tuple $S=(n,\alpha, \mathbf{f}, \mathbf{g})$ satisfying:
\begin{itemize}
\item $n\geq 1$ is a strictly positive integer.
\item $\alpha\in\Delta_{[n]}$.
\item $\mathbf{f}=(f_i)_{1\leq i\leq n}\in (\mathcal{X}^{\mathcal{U}})^n$, where $\mathcal{X}^{\mathcal{U}}$ is the set of functions from $\mathcal{U}$ to $\mathcal{X}$.
\item $\mathbf{g}=(g_i)_{1\leq i\leq n}\in (\mathcal{V}^{\mathcal{Y}})^n$.
\end{itemize}
We denote $n$ and $\alpha$ as $n_S$ and $\alpha_S$ respectively. For every $1\leq i\leq n=n_S$, we denote $f_i$ and $g_i$ as $f_{i,S}$ and $g_{i,S}$ respectively. The set of strategies is denoted as $\mathcal{S}_{\mathcal{U},\mathcal{X},\mathcal{Y},\mathcal{V}}$.

Bob implements the strategy $S$ as follows: he randomly picks an index $i\in\{1,\ldots,n_S\}$ according to the distribution $\alpha_S$, and then commits to the choice $(f_{i,S},g_{i,S})$.

For every $u\in\mathcal{U}$, the \emph{payoff gained by the strategy $S$ for $u$ in the BRM game $\mathcal{G}$} is given by:
$$\$(u,S,\mathcal{G})=\sum_{i=1}^{n_S} \alpha_S(i) \sum_{y\in\mathcal{Y}} W(y|f_{i,S}(u))l(u,g_{i,S}(y)).$$
The \emph{payoff vector gained by the strategy $S$ in the game $\mathcal{G}$} is given by:
$$\vec{\$}(S,\mathcal{G})=\big(\$(u,S,\mathcal{G})\big)_{u\in\mathcal{U}}\in\mathbb{R}^{\mathcal{U}}.$$

The \emph{achievable payoff region for the game $\mathcal{G}$} is given by:
$$\$_{\ach}(\mathcal{G})=\Big\{\vec{\$}(S,\mathcal{G}):\; S\in\mathcal{S}_{\mathcal{U},\mathcal{X},\mathcal{Y},\mathcal{V}}\Big\}\subset \mathbb{R}^{\mathcal{U}}.$$

The \emph{average payoff for the strategy $S\in\mathcal{S}_{\mathcal{U},\mathcal{X},\mathcal{Y},\mathcal{V}}$ in the game $\mathcal{G}$} is given by:
$$\hat{\$}(S,\mathcal{G})=\frac{1}{|\mathcal{U}|}\sum_{u\in\mathcal{U}} \$(u,S,\mathcal{G}).$$
$\hat{\$}(S,\mathcal{G})$ is the expected gain of Bob assuming that Alice chooses $u\in\mathcal{U}$ uniformly at random.

The \emph{optimal average payoff for the game $\mathcal{G}$} is given by
$$\$_{\opt}(\mathcal{G})= \sup_{S\in\mathcal{S}_{\mathcal{U},\mathcal{X},\mathcal{Y},\mathcal{V}}} \hat{\$}(S,\mathcal{G}).$$

For every $S\in\mathcal{S}_{\mathcal{U},\mathcal{X},\mathcal{Y},\mathcal{V}}$, we associate the convex-product channel $V_S\in\CPC_{\mathcal{U}\times\mathcal{Y},\mathcal{X}\times\mathcal{V}}$ defined as
$$V_S=\sum_{i=1}^{n_S}\alpha_S(i)D_{f_{i,S}}\otimes D_{g_{i,S}}.$$
For every $u\in\mathcal{U}$, we have
\begin{equation}
\label{eqDollarVs}
\begin{aligned}
\$(u,S,\mathcal{G})&=\sum_{i=1}^{n_S} \alpha_S(i) \sum_{y\in\mathcal{Y}} W(y|f_{i,S}(u))l(u,g_{i,S}(y))\\
&=\sum_{i=1}^{n_S} \alpha_S(i) \sum_{\substack{x\in\mathcal{X},\\y\in\mathcal{Y},\\v\in\mathcal{V}}} D_{f_{i,S}}(x|u) W(y|x) D_{g_{i,S}}(v|y) l(u,v)\\
&=\sum_{\substack{x\in\mathcal{X},\\y\in\mathcal{Y},\\v\in\mathcal{V}}} \left(\sum_{i=1}^{n_S} \alpha_S(i)  D_{f_{i,S}}(x|u)D_{g_{i,S}}(v|y)\right) W(y|x) l(u,v)\\
&=\sum_{\substack{x\in\mathcal{X},\\y\in\mathcal{Y},\\v\in\mathcal{V}}} V_S(x,v|u,y) W(y|x) l(u,v).
\end{aligned}
\end{equation}

\begin{mylem}
\label{lemCPCisStrategy}
For every $V\in\CPC_{\mathcal{U}\times\mathcal{Y},\mathcal{X}\times\mathcal{V}}$, there exists $S\in\mathcal{S}_{\mathcal{U},\mathcal{X},\mathcal{Y},\mathcal{V}}$ such that $V=V_S$.
\end{mylem}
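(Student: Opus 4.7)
The plan is to reduce an arbitrary convex-product channel $V$ to a convex combination of tensor products of \emph{deterministic} channels, since the associated channel $V_S=\sum_{i}\alpha_S(i)D_{f_{i,S}}\otimes D_{g_{i,S}}$ is precisely an object of that form, with the combining weights giving $\alpha_S$ and the two deterministic factors giving $f_{i,S}$ and $g_{i,S}$. By the definition of $\CPC_{\mathcal{U}\times\mathcal{Y},\mathcal{X}\times\mathcal{V}}$, one can write
$$V=\sum_{i=1}^{n}\alpha(i)\,R_i\otimes T_i$$
with $R_i\in\DMC_{\mathcal{U},\mathcal{X}}$, $T_i\in\DMC_{\mathcal{Y},\mathcal{V}}$ and $\alpha\in\Delta_{[n]}$.

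The key intermediate fact is the classical observation that every stochastic channel on finite alphabets is a convex combination of deterministic channels. For $R_i$, I would make this explicit by setting, for each $f\in\mathcal{X}^{\mathcal{U}}$,
$$\beta_{i,f}=\prod_{u\in\mathcal{U}}R_i(f(u)|u).$$
Since each $R_i(\cdot|u)$ is a probability distribution, $(\beta_{i,f})_{f}$ is a probability distribution on $\mathcal{X}^{\mathcal{U}}$, and marginalizing over all coordinates other than a fixed $u$ gives $\sum_{f:\,f(u)=x}\beta_{i,f}=R_i(x|u)$, whence $R_i=\sum_{f}\beta_{i,f}D_f$. The analogous construction applied to $T_i$ yields weights $\gamma_{i,g}$ on $\mathcal{V}^{\mathcal{Y}}$ with $T_i=\sum_{g}\gamma_{i,g}D_g$. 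Expanding by bilinearity of $\otimes$,
$$V=\sum_{i=1}^{n}\sum_{f\in\mathcal{X}^{\mathcal{U}}}\sum_{g\in\mathcal{V}^{\mathcal{Y}}}\alpha(i)\beta_{i,f}\gamma_{i,g}\,D_f\otimes D_g.$$

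To finish, I would re-index the triples $(i,f,g)$ as a single index $j\in\{1,\ldots,N\}$ with $N=n\,|\mathcal{X}|^{|\mathcal{U}|}\,|\mathcal{V}|^{|\mathcal{Y}|}$ and set $\alpha_S(j)=\alpha(i)\beta_{i,f}\gamma_{i,g}$, $f_{j,S}=f$, $g_{j,S}=g$; the resulting weights are nonnegative and sum to $1$, so $S=(N,\alpha_S,\mathbf{f},\mathbf{g})$ is a valid strategy in $\mathcal{S}_{\mathcal{U},\mathcal{X},\mathcal{Y},\mathcal{V}}$ and $V_S=V$ by construction. I do not expect any substantive obstacle: the only nontrivial step is the decomposition of a stochastic channel as a mixture of deterministic channels, and the product-form weights $\beta_{i,f}=\prod_u R_i(f(u)|u)$ supply a clean explicit witness, with the remainder being pure bookkeeping.
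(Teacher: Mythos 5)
Your proposal is correct and follows essentially the same route as the paper: decompose each $R_i$ and $T_i$ into convex combinations of deterministic channels, expand by bilinearity, and re-index the resulting mixture of $D_f\otimes D_g$ terms as a strategy. The only difference is that you supply an explicit product-form witness $\beta_{i,f}=\prod_u R_i(f(u)|u)$ for the deterministic decomposition, where the paper simply cites Shannon's classical result; your verification of the marginalization identity is correct.
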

\begin{proof}
Let $n\geq 1$, $\alpha\in\Delta_{[n]}$, $(R_i,T_i)_{1\leq i\leq n}$ be such that $R_i\in\DMC_{\mathcal{U},\mathcal{X}}$ and $T_i\in\DMC_{\mathcal{Y},\mathcal{V}}$ for every $1\leq i\leq n$, and
\begin{equation}
\label{eqVVVVVVV}
V=\sum_{i=1}^n\alpha(i)R_i\otimes T_i.
\end{equation}
Since every channel can be written as a convex combination of deterministic channels \cite{ShannonDegrad}, we can rewrite \eqref{eqVVVVVVV} as a convex combination of products of deterministic channels. Therefore, there exists $S\in\mathcal{S}_{\mathcal{U},\mathcal{X},\mathcal{Y},\mathcal{V}}$ such that $V=V_S$.
\end{proof}

\vspace*{3mm}

Equation \eqref{eqDollarVs} and Lemma \ref{lemCPCisStrategy} imply that $\$_{\ach}(\mathcal{G})$ is the image of $\CPC_{\mathcal{U}\times\mathcal{Y},\mathcal{X}\times\mathcal{V}}$ by a linear function. Since $\CPC_{\mathcal{U}\times\mathcal{Y},\mathcal{X}\times\mathcal{V}}$ is convex and compact (Proposition \ref{propCPCCompactConvex}), $\$_{\ach}(\mathcal{G})$ is convex and compact as well.

Let $\mathcal{U}$ and $\mathcal{V}$ be two finite sets and let $l:\mathcal{U}\times\mathcal{V}\rightarrow\mathbb{R}$ be a payoff function. We say that $l$ is \emph{normalized and positive} if $l(u,v)\geq 0$ for every $u\in\mathcal{U}$ and every $v\in\mathcal{V}$, and $$\sum_{\substack{u\in\mathcal{U},\\v\in\mathcal{V}}}l(u,v)=1.$$
In other words, $l$ is normalized and positive if $l\in\Delta_{\mathcal{U}\times\mathcal{V}}$.

The following theorem provides a characterization of the Shannon ordering of communication channels that is similar to the BSS theorem.
\begin{mythe}
\label{theGameShannonCharac}
Let $\mathcal{X},\mathcal{X}',\mathcal{Y}$ and $\mathcal{Y}'$ be four finite sets. Let $W\in\DMC_{\mathcal{X},\mathcal{Y}}$ and $W'\in\DMC_{\mathcal{X}',\mathcal{Y}'}$. The following conditions are equivalent:
\begin{itemize}
\item[(a)] $W'$ contains $W$.
\item[(b)] For every two finite sets $\mathcal{U}$ and $\mathcal{V}$, and every payoff function $l:\mathcal{U}\times\mathcal{V}\rightarrow\mathbb{R}$, we have
$$\$_{\ach}(\mathcal{U},\mathcal{X},\mathcal{Y},\mathcal{V},l,W)\subset \$_{\ach}(\mathcal{U},\mathcal{X}',\mathcal{Y}',\mathcal{V},l,W').$$
\item[(c)] For every two finite sets $\mathcal{U}$ and $\mathcal{V}$, and every payoff function $l:\mathcal{U}\times\mathcal{V}\rightarrow\mathbb{R}$, we have
$$\$_{\opt}(\mathcal{U},\mathcal{X},\mathcal{Y},\mathcal{V},l,W)\leq \$_{\opt}(\mathcal{U},\mathcal{X}',\mathcal{Y}',\mathcal{V},l,W').$$
\item[(d)] For every two finite sets $\mathcal{U}$ and $\mathcal{V}$, and every normalized and positive payoff function $l\in\Delta_{\mathcal{U}\times\mathcal{V}}$, we have
$$\$_{\ach}(\mathcal{U},\mathcal{X},\mathcal{Y},\mathcal{V},l,W)\subset \$_{\ach}(\mathcal{U},\mathcal{X}',\mathcal{Y}',\mathcal{V},l,W').$$
\item[(e)] For every two finite sets $\mathcal{U}$ and $\mathcal{V}$, and every normalized and positive payoff function $l\in\Delta_{\mathcal{U}\times\mathcal{V}}$, we have
$$\$_{\opt}(\mathcal{U},\mathcal{X},\mathcal{Y},\mathcal{V},l,W)\leq \$_{\opt}(\mathcal{U},\mathcal{X}',\mathcal{Y}',\mathcal{V},l,W').$$
\end{itemize}
\end{mythe}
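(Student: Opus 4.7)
The plan is to establish the circle (a)$\Rightarrow$(b)$\Rightarrow$\{(c),(d)\}, (c)$\Rightarrow$(e), (d)$\Rightarrow$(e), and then close with (e)$\Rightarrow$(a), which is the BSS-type direction and the only one requiring genuine work. For (a)$\Rightarrow$(b), I would invoke Lemma \ref{lemContainSkew} to obtain $V\in\CPC_{\mathcal{X}\times\mathcal{Y}',\mathcal{X}'\times\mathcal{Y}}$ with $W=V\circ_s W'$. Given $S\in\mathcal{S}_{\mathcal{U},\mathcal{X},\mathcal{Y},\mathcal{V}}$ with associated convex-product $V_S$, Lemma \ref{lemSkewCompositionCPC} ensures $V_S\circ_s V\in\CPC_{\mathcal{U}\times\mathcal{Y}',\mathcal{X}'\times\mathcal{V}}$, and Lemma \ref{lemCPCisStrategy} then delivers $S'\in\mathcal{S}_{\mathcal{U},\mathcal{X}',\mathcal{Y}',\mathcal{V}}$ with $V_{S'}=V_S\circ_s V$. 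Substituting the definition of skew-composition into \eqref{eqDollarVs} and collapsing the inner sum $\sum_{x',y'}V(x',y|x,y')W'(y'|x')$ to $W(y|x)$ yields $\$(u,S',\mathcal{G}')=\$(u,S,\mathcal{G})$ for every $u$. The implications (b)$\Rightarrow$(d) and (c)$\Rightarrow$(e) are immediate restrictions to the subclass $\Delta_{\mathcal{U}\times\mathcal{V}}$, while (b)$\Rightarrow$(c) and (d)$\Rightarrow$(e) follow because averaging matched payoff vectors preserves the supremum.

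The crux is (e)$\Rightarrow$(a), proved by contraposition via a separating hyperplane. Set $C(W'):=\{V\circ_s W':\,V\in\CPC_{\mathcal{X}\times\mathcal{Y}',\mathcal{X}'\times\mathcal{Y}}\}\subset\DMC_{\mathcal{X},\mathcal{Y}}$. Since $V\mapsto V\circ_s W'$ is linear in $V$ and $\CPC_{\mathcal{X}\times\mathcal{Y}',\mathcal{X}'\times\mathcal{Y}}$ is compact and convex by Proposition \ref{propCPCCompactConvex}, the set $C(W')$ is compact and convex, and by Lemma \ref{lemContainSkew} it is exactly the set of channels in $\DMC_{\mathcal{X},\mathcal{Y}}$ contained in $W'$. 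If $W\notin C(W')$, strict hyperplane separation in $\mathbb{R}^{\mathcal{X}\times\mathcal{Y}}$ produces $c\in\mathbb{R}^{\mathcal{X}\times\mathcal{Y}}$ and $\beta\in\mathbb{R}$ with $\sum_{x,y}c(x,y)W(y|x)>\beta\geq\sum_{x,y}c(x,y)W''(y|x)$ for every $W''\in C(W')$. To reach the hypothesis of (e), I would replace $c$ by $c+K$ with $K$ large enough to make every entry positive; because each channel row sums to one, this shifts both sides of the inequality by $K|\mathcal{X}|$ and preserves the gap. Dividing by $\sum_{x,y}(c(x,y)+K)$ then yields $l\in\Delta_{\mathcal{X}\times\mathcal{Y}}$ that still strictly separates $W$ from $C(W')$.

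Finally I would take $\mathcal{U}=\mathcal{X}$, $\mathcal{V}=\mathcal{Y}$ and contrast $\mathcal{G}=(\mathcal{X},\mathcal{X},\mathcal{Y},\mathcal{Y},l,W)$ with $\mathcal{G}'=(\mathcal{X},\mathcal{X}',\mathcal{Y}',\mathcal{Y},l,W')$. The single-term strategy $f_1=\mathrm{id}_{\mathcal{X}}$, $g_1=\mathrm{id}_{\mathcal{Y}}$ yields $\$_{\opt}(\mathcal{G})\geq\frac{1}{|\mathcal{X}|}\sum_{x,y}l(x,y)W(y|x)$, while for any strategy $S'$ of $\mathcal{G}'$ a direct rearrangement of $\hat{\$}(S',\mathcal{G}')$ gives $\hat{\$}(S',\mathcal{G}')=\frac{1}{|\mathcal{X}|}\sum_{x,y}l(x,y)W''(y|x)$ with $W''=\sum_i\alpha_{S'}(i)\,D_{g_{i,S'}}\circ W'\circ D_{f_{i,S'}}\in C(W')$, whence $\$_{\opt}(\mathcal{G}')\leq\frac{1}{|\mathcal{X}|}\sup_{W''\in C(W')}\sum_{x,y}l(x,y)W''(y|x)<\$_{\opt}(\mathcal{G})$. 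This contradicts (e). I expect the main obstacle to be precisely this last adjustment, namely coercing a general separating hyperplane into a normalized, positive payoff function while preserving strict separation, which is what justifies the stronger equivalences with (d) and (e).
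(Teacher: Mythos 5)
Your proposal is correct. The chain (a)$\Rightarrow$(b)$\Rightarrow$(c),(d)$\Rightarrow$(e) is handled exactly as in the paper: the same use of Lemmas \ref{lemContainSkew}, \ref{lemSkewCompositionCPC} and \ref{lemCPCisStrategy} together with Equation \eqref{eqDollarVs} to transport a strategy for $W$ into a strategy for $W'$ with an identical payoff vector, and the same trivial restrictions and averaging for the remaining implications. Where you genuinely diverge is the crucial direction (e)$\Rightarrow$(a). The paper argues directly: it derives, for \emph{every} $l\in\Delta_{\mathcal{X}\times\mathcal{Y}}$, the inequality $\inf_{V}\sum_{x,y}\big(W(y|x)-(V\circ_s W')(y|x)\big)l(x,y)\leq 0$, passes to $\max_l\min_V\leq 0$, invokes the minimax theorem to exchange the two operators, and then extracts a single $V$ with $W(y|x)\leq (V\circ_s W')(y|x)$ pointwise, forcing equality by row normalization. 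You instead argue by contraposition: you form the compact convex set $C(W')=\{V\circ_s W'\}$, which by Lemma \ref{lemContainSkew} is the set of channels contained in $W'$, strictly separate $W$ from it by a hyperplane, and then massage the separating functional into a normalized positive payoff $l$ by adding a large constant (which shifts both sides by $K|\mathcal{X}|$ because rows sum to one) and renormalizing; the resulting BRM game violates (e). The two routes are dual faces of the same convexity fact, but yours is arguably more elementary (finite-dimensional strict separation of a point from a compact convex set rather than Sion's minimax theorem) and is closer in spirit to the classical proof of the BSS theorem, while the paper's minimax route produces the witness $V$ constructively from the game-theoretic inequality rather than refuting its absence. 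Both correctly identify that the only delicate point is coercing the separating functional (resp.\ the payoff) into $\Delta_{\mathcal{X}\times\mathcal{Y}}$, which is what makes the equivalence with the restricted classes (d) and (e) nontrivial.
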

\begin{proof}
Assume that (a) is true. Lemma \ref{lemContainSkew} implies that there exists $V\in \CPC_{\mathcal{X}\times\mathcal{Y}',\mathcal{X}'\times\mathcal{Y}}$ such that $W=V\circ_s W'$. Let $\mathcal{U}$ and $\mathcal{V}$ be two finite sets, and let $l:\mathcal{U}\times\mathcal{V}\rightarrow\mathbb{R}$ be a payoff function. Define $\mathcal{G}=(\mathcal{U},\mathcal{X},\mathcal{Y},\mathcal{V},l,W)$ and $\mathcal{G}'=(\mathcal{U},\mathcal{X}',\mathcal{Y}',\mathcal{V},l,W')$.

Fix $\vec{v}\in \$_{\ach}(\mathcal{G})$. There exists $S\in\mathcal{S}_{\mathcal{U},\mathcal{X},\mathcal{Y},\mathcal{V}}$ such that $\vec{v}=\vec{\$}(S,\mathcal{G})=\big(\$(u,S,\mathcal{G})\big)_{u\in\mathcal{U}}$. From equation \eqref{eqDollarVs} we have:
\begin{align*}
\$(u,S,\mathcal{G})&=\sum_{\substack{x\in\mathcal{X},\\y\in\mathcal{Y},\\v\in\mathcal{V}}} V_S(x,v|u,y) W(y|x) l(u,v)\\
&=\sum_{\substack{x\in\mathcal{X},\\y\in\mathcal{Y},\\v\in\mathcal{V}}}  V_S(x,v|u,y) \Bigg(\sum_{\substack{x'\in\mathcal{X}',\\y'\in\mathcal{Y'}}}V(x',y|x,y')W'(y'|x')\Bigg) l(u,v)\\
&=\sum_{\substack{x'\in\mathcal{X}',\\y'\in\mathcal{Y'},\\v\in\mathcal{V}}} \Bigg(\sum_{\substack{x\in\mathcal{X},\\y\in\mathcal{Y}}} V_S(x,v|u,y)V(x',y|x,y')\Bigg)W'(y'|x') l(u,v)\\
&=\sum_{\substack{x'\in\mathcal{X}',\\y'\in\mathcal{Y'},\\v\in\mathcal{V}}} (V_S\circ_s V)(x',v|u,y')W'(y'|x') l(u,v).
\end{align*}
Lemma \ref{lemSkewCompositionCPC} implies that $V_S\circ_s V\in\CPC_{\mathcal{U}\times\mathcal{Y}',\mathcal{X}'\times\mathcal{V}}$ and Lemma \ref{lemCPCisStrategy} implies that there exists $S'\in\mathcal{S}_{\mathcal{U},\mathcal{X}',\mathcal{Y}',\mathcal{V}}$ such that $V_{S'}=V_S\circ_s V$. Therefore,
\begin{align*}
\$(u,S,\mathcal{G})&=\sum_{\substack{x'\in\mathcal{X}',\\y'\in\mathcal{Y'},\\v\in\mathcal{V}}} V_{S'}(x',v|u,y')W'(y'|x') l(u,v)\stackrel{(\ast)}{=}\$(u,S',\mathcal{G}'),
\end{align*}
where $(\ast)$ follows from Equation \eqref{eqDollarVs}. This shows that $\vec{v}=\big(\$(u,S',\mathcal{G}')\big)_{u\in\mathcal{U}}$, hence $\$_{\ach}(\mathcal{G})\subset \$_{\ach}(\mathcal{G}')$. Therefore, (a) implies (b).

Now assume that (b) is true. Let $\mathcal{U}$ and $\mathcal{V}$ be two finite sets, and let $l:\mathcal{U}\times\mathcal{V}\rightarrow\mathbb{R}$ be a payoff function. Define $\mathcal{G}=(\mathcal{U},\mathcal{X},\mathcal{Y},\mathcal{V},l,W)$ and $\mathcal{G}'=(\mathcal{U},\mathcal{X}',\mathcal{Y}',\mathcal{V},l,W')$. We have $\$_{\ach}(\mathcal{G})\subset \$_{\ach}(\mathcal{G}')$. Therefore,
\begin{align*}
\$_{\opt}(\mathcal{G})=\sup_{(v_u)_{u\in\mathcal{U}} \in \$_{\ach}(\mathcal{G})} \frac{1}{|\mathcal{U}|}\sum_{u\in\mathcal{U}} v_u \stackrel{(\ast\ast)}{\leq} \sup_{(v_u')_{u\in\mathcal{U}} \in \$_{\ach}(\mathcal{G}')} \frac{1}{|\mathcal{U}|} \sum_{u\in\mathcal{U}} v_u' = \$_{\opt}(\mathcal{G}'),
\end{align*}
where $(\ast\ast)$ follows from the fact that $\$_{\ach}(\mathcal{G})\subset \$_{\ach}(\mathcal{G}')$. This shows that (b) implies (c). We can show similarly that (d) implies (e).

Trivially, (b) implies (d), and (c) implies (e).

Now assume that (e) is true. For every normalized and positive payoff function $l\in\Delta_{\mathcal{X}\times\mathcal{Y}}$, define the BRM games $\mathcal{G}=(\mathcal{X},\mathcal{X},\mathcal{Y},\mathcal{Y},l,W)$ and $\mathcal{G}'=(\mathcal{X},\mathcal{X}',\mathcal{Y}',\mathcal{Y},l,W')$. We have $\$_{\opt}(\mathcal{G})\leq \$_{\opt}(\mathcal{G}')$.

Fix a strategy $S\in\mathcal{S}_{\mathcal{X},\mathcal{X},\mathcal{Y},\mathcal{Y}}$ satisfying $n_S=1$, $f_{1,S}(x)=x$ for all $x\in\mathcal{X}$ and $g_{1,S}(y)=y$ for all $y\in\mathcal{Y}$. Clearly $\alpha_S(1)=1$, hence
\begin{align*}
\hat{\$}(S,\mathcal{G})&=\frac{1}{|\mathcal{X}|}\sum_{x\in\mathcal{X}} \$(x,S,\mathcal{G})=\frac{1}{|\mathcal{X}|}\sum_{x\in\mathcal{X}} \sum_{y\in\mathcal{Y}}W(y|f_{1,S}(x)) l\big(x,g_{1,S}(y)\big)=\frac{1}{|\mathcal{X}|}\sum_{\substack{x\in\mathcal{X},\\y\in\mathcal{Y}}} W(y|x)l(x,y).
\end{align*}
Therefore,
\begin{align*}
\frac{1}{|\mathcal{X}|}\sum_{\substack{x\in\mathcal{X},\\y\in\mathcal{Y}}} W(y|x)l(x,y)&=\hat{\$}(S,\mathcal{G})\leq \$_{\opt}(\mathcal{G})\leq \$_{\opt}(\mathcal{G}')=\sup_{S'\in\mathcal{S}_{\mathcal{X},\mathcal{X}',\mathcal{Y}',\mathcal{Y}}} \hat{\$}(S',\mathcal{G}')\\
&=\sup_{S'\in\mathcal{S}_{\mathcal{X},\mathcal{X}',\mathcal{Y}',\mathcal{Y}}} \frac{1}{|\mathcal{X}|}\sum_{\substack{x\in\mathcal{X}}} \$(x,S',\mathcal{G}')\\
&=\sup_{S'\in\mathcal{S}_{\mathcal{X},\mathcal{X}',\mathcal{Y}',\mathcal{Y}}} \frac{1}{|\mathcal{X}|}\sum_{\substack{x\in\mathcal{X}}}\sum_{\substack{x'\in\mathcal{X}',\\y'\in\mathcal{Y}',\\y\in\mathcal{Y}}} V_{S'}(x',y|x,y') W'(y'|x') l(x,y)\\
&=\sup_{S'\in\mathcal{S}_{\mathcal{X},\mathcal{X}',\mathcal{Y}',\mathcal{Y}}} \frac{1}{|\mathcal{X}|}\sum_{\substack{x\in\mathcal{X},\\y\in\mathcal{Y}}} (V_{S'}\circ_s W')(y|x) l(x,y)\\
&\stackrel{(\dagger)}{=}\sup_{V\in\CPC_{\mathcal{X}\times\mathcal{Y}',\mathcal{X}'\times\mathcal{Y}}} \frac{1}{|\mathcal{X}|}\sum_{\substack{x\in\mathcal{X},\\y\in\mathcal{Y}}} (V\circ_s W')(y|x) l(x,y),
\end{align*}
where $(\dagger)$ follows from Lemma \ref{lemCPCisStrategy}. Therefore,
\begin{align*}
\inf_{V\in\CPC_{\mathcal{X}\times\mathcal{Y}',\mathcal{X}'\times\mathcal{Y}}} \frac{1}{|\mathcal{X}|}\sum_{\substack{x\in\mathcal{X},\\y\in\mathcal{Y}}}\big(W(y|x) - (V\circ_s W')(y|x)\big)l(x,y)\leq 0.
\end{align*}
Since this is true for every $l\in\Delta_{\mathcal{X}\times\mathcal{Y}}$, we have:
\begin{align*}
\sup_{l\in\Delta_{\mathcal{X}\times \mathcal{Y}}} \inf_{V\in\CPC_{\mathcal{X}\times\mathcal{Y}',\mathcal{X}'\times\mathcal{Y}}} \sum_{\substack{x\in\mathcal{X},\\y\in\mathcal{Y}}}\big(W(y|x) - (V\circ_s W')(y|x)\big)l(x,y)\leq 0.
\end{align*}
Moreover, since $\Delta_{\mathcal{X}\times \mathcal{Y}}$ and $\CPC_{\mathcal{X}\times\mathcal{Y}',\mathcal{X}'\times\mathcal{Y}}$ are compact (see Proposition \ref{propCPCCompactConvex}), the sup and the inf are attainable. Therefore, we can write:
\begin{equation}
\label{eqMinimaxlV}
\max_{l\in\Delta_{\mathcal{X}\times \mathcal{Y}}} \min_{V\in\CPC_{\mathcal{X}\times\mathcal{Y}',\mathcal{X}'\times\mathcal{Y}}} \sum_{\substack{x\in\mathcal{X},\\y\in\mathcal{Y}}}\big(W(y|x) - (V\circ_s W')(y|x)\big)l(x,y)\leq 0.
\end{equation}
Since the function $\displaystyle \sum_{\substack{x\in\mathcal{X},\\y\in\mathcal{Y}}}\big(W(y|x) - (V\circ_s W')(y|x)\big)l(x,y)$ is affine in both $l\in\Delta_{\mathcal{X}\times \mathcal{Y}}$ and $V\in\CPC_{\mathcal{X}\times\mathcal{Y}',\mathcal{X}'\times\mathcal{Y}}$, it is continuous, concave in $l$ and convex in $V$. On the other hand, the sets $\Delta_{\mathcal{X}\times \mathcal{Y}}$ and $\CPC_{\mathcal{X}\times\mathcal{Y}',\mathcal{X}'\times\mathcal{Y}}$ are compact and convex (see Proposition \ref{propCPCCompactConvex}). Therefore, we can apply the minimax theorem \cite{MiniMax} to exchange the max and the min in Equation \eqref{eqMinimaxlV}. We obtain:
\begin{align*}
\min_{V\in\CPC_{\mathcal{X}\times\mathcal{Y}',\mathcal{X}'\times\mathcal{Y}}} \max_{l\in\Delta_{\mathcal{X}\times \mathcal{Y}}} \sum_{\substack{x\in\mathcal{X},\\y\in\mathcal{Y}}}\big(W(y|x) - (V\circ_s W')(y|x)\big)l(x,y)\leq 0.
\end{align*}
Therefore, there exists $V\in\CPC_{\mathcal{X}\times\mathcal{Y}',\mathcal{X}'\times\mathcal{Y}}$ such that
\begin{align*}
0&\geq \max_{l\in\Delta_{\mathcal{X}\times \mathcal{Y}}} \sum_{\substack{x\in\mathcal{X},\\y\in\mathcal{Y}}}\big(W(y|x) - (V\circ_s W')(y|x)\big)l(x,y)\\
&\stackrel{(\dagger\dagger)}{=}\max_{\substack{x\in\mathcal{X},\\y\in\mathcal{Y}}} \big(W(y|x) - (V\circ_s W')(y|x)\big),
\end{align*}
where $(\dagger\dagger)$ follows from the fact that $\displaystyle\sum_{\substack{x\in\mathcal{X},\\y\in\mathcal{Y}}}\big(W(y|x) - (V\circ_s W')(y|x)\big)l(x,y)$ is maximized when we choose $l\in\Delta_{\mathcal{X},\mathcal{Y}}$ in such a way that $l(x_0,y_0)=1$ for any $(x_0,y_0)\in\mathcal{X}\times\mathcal{Y}$ satisfying
$$\big(W(y_0|x_0) - (V\circ_s W')(y_0|x_0)\big)=\max_{\substack{x\in\mathcal{X},\\y\in\mathcal{Y}}} \big(W(y|x) - (V\circ_s W')(y|x)\big).$$
We conclude that for every $(x,y)\in\mathcal{X}\times\mathcal{Y}$, we have
$$W(y|x)\leq (V\circ_s W')(y|x).$$
Now since $\displaystyle \sum_{y\in\mathcal{Y}} W(y|x)=\sum_{y\in\mathcal{Y}} (V\circ_s W')(y|x)$ for every $x\in\mathcal{X}$, we must have $W(y|x)= (V\circ_s W')(y|x)$ for every $(x,y)\in\mathcal{X}\times\mathcal{Y}$. Therefore, $W=V\circ_s W'$. Lemma \ref{lemContainSkew} now implies that $W'$ contains $W$, hence (e) implies (a). We conclude that the conditions (a), (b), (c), (d) and (e) are equivalent.
\end{proof}

\section{Space of Shannon-equivalent channels from $\mathcal{X}$ to $\mathcal{Y}$}

\subsection{The $\DMC_{\mathcal{X},\mathcal{Y}}^{(s)}$ space}

\label{subsecDMCXYs}
Let $\mathcal{X}$ and $\mathcal{Y}$ be two finite sets. Define the equivalence relation $R_{\mathcal{X},\mathcal{Y}}^{(s)}$ on $\DMC_{\mathcal{X},\mathcal{Y}}$ as follows:
$$WR_{\mathcal{X},\mathcal{Y}}^{(s)}W'\;\;\Leftrightarrow\;\;W\;\text{is Shannon-equivalent to}\;W'.$$

\begin{mydef}
The space of Shannon-equivalent channels with input alphabet $\mathcal{X}$ and output alphabet $\mathcal{Y}$ is the quotient of the space of channels from $\mathcal{X}$ to $\mathcal{Y}$ by the Shannon-equivalence relation:
$$\textstyle\DMC_{\mathcal{X},\mathcal{Y}}^{(s)}=\DMC_{\mathcal{X},\mathcal{Y}}/R_{\mathcal{X},\mathcal{Y}}^{(s)}.$$
We define the topology $\mathcal{T}_{\mathcal{X},\mathcal{Y}}^{(s)}$ on $\DMC_{\mathcal{X},\mathcal{Y}}^{(s)}$ as the quotient topology $\mathcal{T}_{\mathcal{X},\mathcal{Y}}/R_{\mathcal{X},\mathcal{Y}}^{(s)}$.
\end{mydef}

\vspace*{3mm}

\begin{mynot}
Let $(\mathcal{U},\mathcal{X},\mathcal{Y},\mathcal{V},l,W)$ be a BRM game. Since $\mathcal{U},\mathcal{X},\mathcal{Y}$ and $\mathcal{V}$ are implicitly determined by $l$ and $W$, we may simply write $\$_{\opt}(l,W)$ to denote $\$_{\opt}(\mathcal{U},\mathcal{X},\mathcal{Y},\mathcal{V},l,W)$.

Let $W,W'\in\DMC_{\mathcal{X},\mathcal{Y}}$. Theorem \ref{theGameShannonCharac} shows that $W'$ contains $W$ if and only if $\$_{\opt}(l,W)\leq \$_{\opt}(l,W')$ for every $l\in\Delta_{\mathcal{U}\times\mathcal{V}}$ and every two finite sets $\mathcal{U}$ and $\mathcal{V}$. Therefore, $W R_{\mathcal{X},\mathcal{Y}}^{(s)}W'$ if and only if $\$_{\opt}(l,W)= \$_{\opt}(l,W')$ for every $l\in\Delta_{\mathcal{U}\times\mathcal{V}}$ and every two finite sets $\mathcal{U}$ and $\mathcal{V}$. This shows that $\$_{\opt}(l,W)$ only depends on the $R_{\mathcal{X},\mathcal{Y}}^{(s)}$-equivalence class of $W$. Therefore, if $\hat{W}\in\DMC_{\mathcal{X},\mathcal{Y}}^{(s)}$, we can define $\$_{\opt}(l,\hat{W}):=\$_{\opt}(l,W')$ for any $W'\in\hat{W}$.
\end{mynot}

Define the \emph{BRM metric} $d_{\mathcal{X},\mathcal{Y}}^{(s)}$ on $\DMC_{\mathcal{X},\mathcal{Y}}^{(s)}$ as follows:
$$d_{\mathcal{X},\mathcal{Y}}^{(s)}(\hat{W}_1,\hat{W}_2)=\sup_{\substack{n,m\geq 1,\\l\in{\Delta}_{[n]\times[m]}}}|\$_{\opt}(l,\hat{W}_1)-\$_{\opt}(l,\hat{W}_2)|.$$

\begin{myprop}
\label{propReldXYdXYs}
Let $W_1,W_2\in\DMC_{\mathcal{X},\mathcal{Y}}$ and let $\hat{W}_1$ and $\hat{W}_2$ be the $R_{\mathcal{X},\mathcal{Y}}^{(s)}$-equivalence classes of $W_1$ and $W_2$ respectively. We have $d_{\mathcal{X},\mathcal{Y}}^{(s)}(\hat{W}_1,\hat{W}_2)\leq d_{\mathcal{X},\mathcal{Y}}(W_1,W_2)$.
\end{myprop}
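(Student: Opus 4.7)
The plan is to bound $|\$_{\opt}(l,W_1)-\$_{\opt}(l,W_2)|$ uniformly in $l\in\Delta_{\mathcal{U}\times\mathcal{V}}$ by $d_{\mathcal{X},\mathcal{Y}}(W_1,W_2)$, using the observation that $\$_{\opt}$ depends only on the Shannon-equivalence class, so it is legitimate to work directly with representatives $W_1,W_2$.

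First, since the BRM game payoff is attained by some strategy and $\$_{\opt}$ is a sup, a standard sup-minus-sup manipulation gives $|\$_{\opt}(l,W_1)-\$_{\opt}(l,W_2)|\le \sup_{S}|\hat{\$}(S,\mathcal{G}_1)-\hat{\$}(S,\mathcal{G}_2)|$, where $\mathcal{G}_i=(\mathcal{U},\mathcal{X},\mathcal{Y},\mathcal{V},l,W_i)$. So I would fix $S=(n_S,\alpha_S,\mathbf{f},\mathbf{g})$ and expand
\[
\hat{\$}(S,\mathcal{G}_1)-\hat{\$}(S,\mathcal{G}_2)=\frac{1}{|\mathcal{U}|}\sum_{u\in\mathcal{U}}\sum_{i=1}^{n_S}\alpha_S(i)\sum_{y\in\mathcal{Y}}\delta_{u,i}(y)\,L_{u,i}(y),
\]
where $\delta_{u,i}(y)=W_1(y|f_{i,S}(u))-W_2(y|f_{i,S}(u))$ and $L_{u,i}(y)=l(u,g_{i,S}(y))$.

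The core step is a tight bound on the innermost sum. I would observe that $\sum_y \delta_{u,i}(y)=0$, so splitting $\mathcal{Y}$ according to the sign of $\delta_{u,i}$ and using $\sum_{y:\delta_{u,i}>0}\delta_{u,i}(y)=-\sum_{y:\delta_{u,i}<0}\delta_{u,i}(y)=\|W_1(\cdot|f_{i,S}(u))-W_2(\cdot|f_{i,S}(u))\|_{TV}$, one obtains
\[
\left|\sum_{y}\delta_{u,i}(y)\,L_{u,i}(y)\right|\le \big(\max_{y}L_{u,i}(y)-\min_{y}L_{u,i}(y)\big)\cdot d_{\mathcal{X},\mathcal{Y}}(W_1,W_2)\le \big(\max_{v\in\mathcal{V}}l(u,v)\big)\cdot d_{\mathcal{X},\mathcal{Y}}(W_1,W_2).
\]
This is the place where the normalization of $l$ eventually pays off, and it is the main technical step: the naive bound $|L_{u,i}(y)|\le 1$ would only give $2d_{\mathcal{X},\mathcal{Y}}(W_1,W_2)$, so using that $\delta_{u,i}$ has zero sum to replace $\max L$ by $\max L-\min L$ is crucial.

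Plugging this back and using $\sum_i \alpha_S(i)=1$ together with $\sum_{u}\max_{v}l(u,v)\le \sum_{u,v}l(u,v)=1$ (since $l\in\Delta_{\mathcal{U}\times\mathcal{V}}$), I would obtain
\[
|\hat{\$}(S,\mathcal{G}_1)-\hat{\$}(S,\mathcal{G}_2)|\le \frac{d_{\mathcal{X},\mathcal{Y}}(W_1,W_2)}{|\mathcal{U}|}\sum_{u\in\mathcal{U}}\max_{v\in\mathcal{V}}l(u,v)\le \frac{d_{\mathcal{X},\mathcal{Y}}(W_1,W_2)}{|\mathcal{U}|}\le d_{\mathcal{X},\mathcal{Y}}(W_1,W_2).
\]
Since the right-hand side is independent of $S$, $\mathcal{U}$, $\mathcal{V}$ and $l$, taking the supremum first over $S$ and then over $n,m,l$ yields $d_{\mathcal{X},\mathcal{Y}}^{(s)}(\hat{W}_1,\hat{W}_2)\le d_{\mathcal{X},\mathcal{Y}}(W_1,W_2)$. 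The only subtle step is the zero-sum/range argument for $\delta_{u,i}$; everything else is a direct expansion plus the probability-mass constraint on $l$.
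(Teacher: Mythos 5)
Your proof is correct and follows essentially the same route as the paper's: fix a strategy, expand the difference of average payoffs, and bound the innermost sum over $y$ by the total variation distance between the relevant rows of $W_1$ and $W_2$, which is at most $d_{\mathcal{X},\mathcal{Y}}(W_1,W_2)$. One small remark: the zero-sum trick you call crucial is not actually needed, because $l\geq 0$ lets you drop the terms with $\delta_{u,i}(y)<0$ and then use $L_{u,i}(y)\leq 1$ on the remaining ones (this is what the paper does, and it already yields the factor $d_{\mathcal{X},\mathcal{Y}}(W_1,W_2)$ rather than $2d_{\mathcal{X},\mathcal{Y}}(W_1,W_2)$); your sharper bound with the extra factor $1/|\mathcal{U}|$ is valid but gets discarded at the end anyway.
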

\begin{proof}
See Appendix \ref{appReldXYdXYs}.
\end{proof}

\begin{mythe}
\label{theDMCXYs}
The topology induced by $d_{\mathcal{X},\mathcal{Y}}^{(s)}$ on $\DMC_{\mathcal{X},\mathcal{Y}}^{(s)}$ is the same as the quotient topology $\mathcal{T}_{\mathcal{X},\mathcal{Y}}^{(s)}$. Moreover, $(\DMC_{\mathcal{X},\mathcal{Y}}^{(s)},d_{\mathcal{X},\mathcal{Y}}^{(s)})$ is compact and path-connected.
\end{mythe}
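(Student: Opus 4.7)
\medskip
\noindent\textbf{Proof plan.} The plan is to establish three things in order: that $d_{\mathcal{X},\mathcal{Y}}^{(s)}$ is a genuine (finite) metric on $\DMC_{\mathcal{X},\mathcal{Y}}^{(s)}$; that its induced topology agrees with the quotient topology $\mathcal{T}_{\mathcal{X},\mathcal{Y}}^{(s)}$; and that the resulting space is compact and path-connected. The metric axioms I expect to be mostly routine: well-definedness on equivalence classes is already noted in the preceding notation paragraph, symmetry is immediate, and the triangle inequality follows by applying the pointwise triangle inequality for $|\cdot|$ and then taking suprema over $(n,m,l)$. Finiteness (and indeed boundedness by $1$) is immediate from Proposition \ref{propReldXYdXYs}. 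The only substantive axiom is separation: if $d_{\mathcal{X},\mathcal{Y}}^{(s)}(\hat{W}_1,\hat{W}_2)=0$, then $\$_{\opt}(l,W_1)=\$_{\opt}(l,W_2)$ for every normalized positive payoff function on every pair of finite sets; applying the implication $(e)\Rightarrow(a)$ of Theorem \ref{theGameShannonCharac} in both directions gives that $W_1$ and $W_2$ Shannon-contain each other, hence $\hat{W}_1=\hat{W}_2$.

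To compare the two topologies, I would use the standard compact-to-Hausdorff trick. Proposition \ref{propReldXYdXYs} makes the projection $\Proj_{R_{\mathcal{X},\mathcal{Y}}^{(s)}}:\DMC_{\mathcal{X},\mathcal{Y}}\to (\DMC_{\mathcal{X},\mathcal{Y}}^{(s)},d_{\mathcal{X},\mathcal{Y}}^{(s)})$ a $1$-Lipschitz, hence continuous, map into the metric space. Applying Lemma \ref{lemQuotientFunction} to this map, the induced identity $(\DMC_{\mathcal{X},\mathcal{Y}}^{(s)},\mathcal{T}_{\mathcal{X},\mathcal{Y}}^{(s)}) \to (\DMC_{\mathcal{X},\mathcal{Y}}^{(s)},d_{\mathcal{X},\mathcal{Y}}^{(s)})$ is continuous. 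Its domain is compact as the continuous image of the compact space $\DMC_{\mathcal{X},\mathcal{Y}}$ under the surjective projection, and its codomain is Hausdorff as a metric space. A continuous bijection from a compact space to a Hausdorff space is automatically a homeomorphism, so the two topologies coincide.

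Compactness of $\DMC_{\mathcal{X},\mathcal{Y}}^{(s)}$ then drops out as a byproduct of the previous step. Path-connectedness is inherited from $\DMC_{\mathcal{X},\mathcal{Y}}$, which is known to be path-connected, since continuous images of path-connected spaces are path-connected and the projection is continuous and surjective.

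The only non-routine step is the separation axiom for the BRM metric, which is where Theorem \ref{theGameShannonCharac} does the essential work; everything else is a standard topological package, with Proposition \ref{propReldXYdXYs} supplying the Lipschitz estimate that forces the metric topology to be coarser than, and therefore (by the compact-Hausdorff argument) equal to, the quotient topology.
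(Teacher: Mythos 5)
Your proposal is correct and follows essentially the same route as the paper: Proposition \ref{propReldXYdXYs} plus Lemma \ref{lemQuotientFunction} give continuity of the identity from $(\DMC_{\mathcal{X},\mathcal{Y}}^{(s)},\mathcal{T}_{\mathcal{X},\mathcal{Y}}^{(s)})$ to $(\DMC_{\mathcal{X},\mathcal{Y}}^{(s)},d_{\mathcal{X},\mathcal{Y}}^{(s)})$, and the compact-to-Hausdorff argument (which the paper spells out via closed/compact sets rather than citing the standard fact) closes the loop, with compactness and path-connectedness passing through the quotient. Your explicit verification of the separation axiom via Theorem \ref{theGameShannonCharac} is a point the paper leaves to the preceding Notation paragraph, but it is the same underlying argument.
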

\begin{proof}
Since $(\DMC_{\mathcal{X},\mathcal{Y}},d_{\mathcal{X},\mathcal{Y}})$ is compact and path-connected, the quotient space $(\DMC_{\mathcal{X},\mathcal{Y}}^{(s)},\mathcal{T}_{\mathcal{X},\mathcal{Y}}^{(s)})$ is compact and path-connected.

Define the mapping $\Proj:\DMC_{\mathcal{X},\mathcal{Y}}\rightarrow\DMC_{\mathcal{X},\mathcal{Y}}^{(s)}$ as $\Proj(W)=\hat{W}$, where $\hat{W}$ is the $R_{\mathcal{X},\mathcal{Y}}^{(s)}$-equivalence class of $W$. Proposition \ref{propReldXYdXYs} implies that $\Proj$ is a continuous mapping from $(\DMC_{\mathcal{X},\mathcal{Y}},d_{\mathcal{X},\mathcal{Y}})$ to $(\DMC_{\mathcal{X},\mathcal{Y}}^{(s)},d_{\mathcal{X},\mathcal{Y}}^{(s)})$. Since $\Proj(W)$ depends only on $\hat{W}$, Lemma \ref{lemQuotientFunction} implies that the transcendent mapping of $\Proj$ defined on the quotient space $(\DMC_{\mathcal{X},\mathcal{Y}}^{(s)},\mathcal{T}_{\mathcal{X},\mathcal{Y}}^{(s)})$ is continuous. But the transcendent mapping of $\Proj$ is nothing but the identity on $\DMC_{\mathcal{X},\mathcal{Y}}^{(s)}$. Therefore, the identity mapping $id$ on $\DMC_{\mathcal{X},\mathcal{Y}}^{(s)}$ is a continuous mapping from $(\DMC_{\mathcal{X},\mathcal{Y}}^{(s)},\mathcal{T}_{\mathcal{X},\mathcal{Y}}^{(s)})$ to $(\DMC_{\mathcal{X},\mathcal{Y}}^{(s)},d_{\mathcal{X},\mathcal{Y}}^{(s)})$.
For every subset $U$ of $\DMC_{\mathcal{X},\mathcal{Y}}^{(s)}$ we have:
\begin{itemize}
\item If $U$ is open in $(\DMC_{\mathcal{X},\mathcal{Y}}^{(s)},d_{\mathcal{X},\mathcal{Y}}^{(s)})$, then $U=id^{-1}(U)$ is open in $(\DMC_{\mathcal{X},\mathcal{Y}}^{(s)},\mathcal{T}_{\mathcal{X},\mathcal{Y}}^{(s)})$.
\item If $U$ is open in $(\DMC_{\mathcal{X},\mathcal{Y}}^{(s)},\mathcal{T}_{\mathcal{X},\mathcal{Y}}^{(s)})$, then its complement $U^c$ is closed in $(\DMC_{\mathcal{X},\mathcal{Y}}^{(s)},\mathcal{T}_{\mathcal{X},\mathcal{Y}}^{(s)})$ which is compact, hence $U^c$ is compact in $(\DMC_{\mathcal{X},\mathcal{Y}}^{(s)},\mathcal{T}_{\mathcal{X},\mathcal{Y}}^{(s)})$. This shows that $U^c=id(U^c)$ is a compact subset of $(\DMC_{\mathcal{X},\mathcal{Y}}^{(s)},d_{\mathcal{X},\mathcal{Y}}^{(s)})$. But $(\DMC_{\mathcal{X},\mathcal{Y}}^{(s)},d_{\mathcal{X},\mathcal{Y}}^{(s)})$ is a metric space, so $U^c$ is closed in $(\DMC_{\mathcal{X},\mathcal{Y}}^{(s)},d_{\mathcal{X},\mathcal{Y}}^{(s)})$. Therefore, $U$ is open $(\DMC_{\mathcal{X},\mathcal{Y}}^{(s)},d_{\mathcal{X},\mathcal{Y}}^{(s)})$.
\end{itemize}
We conclude that $(\DMC_{\mathcal{X},\mathcal{Y}}^{(s)},\mathcal{T}_{\mathcal{X},\mathcal{Y}}^{(s)})$ and $(\DMC_{\mathcal{X},\mathcal{Y}}^{(s)},d_{\mathcal{X},\mathcal{Y}}^{(s)})$ have the same open sets. Therefore, the topology induced by $d_{\mathcal{X},\mathcal{Y}}^{(s)}$ on $\DMC_{\mathcal{X},\mathcal{Y}}^{(s)}$ is the same as the quotient topology $\mathcal{T}_{\mathcal{X},\mathcal{Y}}^{(s)}$. Now since $(\DMC_{\mathcal{X},\mathcal{Y}}^{(s)},\mathcal{T}_{\mathcal{X},\mathcal{Y}}^{(s)})$ is compact and path-connected, $(\DMC_{\mathcal{X},\mathcal{Y}}^{(s)},d_{\mathcal{X},\mathcal{Y}}^{(s)})$ is compact and path-connected as well.
\end{proof}

\vspace*{3mm}

Throughout this paper, we always associate $\DMC_{\mathcal{X},\mathcal{Y}}^{(s)}$ with the BRM metric $d_{\mathcal{X},\mathcal{Y}}^{(s)}$ and the quotient topology $\mathcal{T}_{\mathcal{X},\mathcal{Y}}^{(s)}$.

\subsection{Canonical embedding and canonical identification}

\label{subsecEmbedShanEquiv}

Let $\mathcal{X}_1,\mathcal{X}_2,\mathcal{Y}_1$ and $\mathcal{Y}_2$ be four finite sets such that $|\mathcal{X}_1|\leq |\mathcal{X}_2|$ and $|\mathcal{Y}_1|\leq |\mathcal{Y}_2|$. We will show that there is a canonical embedding from $\DMC_{\mathcal{X}_1,\mathcal{Y}_1}^{(s)}$ to $\DMC_{\mathcal{X}_2,\mathcal{Y}_2}^{(s)}$. In other words, there exists an explicitly constructable compact subset $A$ of $\DMC_{\mathcal{X}_2,\mathcal{Y}_2}^{(s)}$ such that $A$ is homeomorphic to $\DMC_{\mathcal{X}_1,\mathcal{Y}_1}^{(s)}$. $A$ and the homeomorphism depend only on $\mathcal{X}_1,\mathcal{X}_2,\mathcal{Y}_1$ and $\mathcal{Y}_2$ (this is why we say that they are canonical). Moreover, we can show that $A$ depends only on $|\mathcal{X}_1|$, $|\mathcal{Y}_1|$, $\mathcal{X}_2$ and $\mathcal{Y}_2$.

\begin{mylem}
\label{lemEquivChannelSurjInj}
For every $W\in\DMC_{\mathcal{X}_1,\mathcal{Y}_1}$, every surjection $f$ from $\mathcal{X}_2$ to $\mathcal{X}_1$, and  every injection $g$ from $\mathcal{Y}_1$ to $\mathcal{Y}_2$, the channel $W$ is Shannon-equivalent to $D_g\circ W\circ D_f$.
\end{mylem}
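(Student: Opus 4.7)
The plan is to establish Shannon-equivalence by verifying the two containments directly from the definition, using one-term (deterministic) simulations in both directions.

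For the forward direction, $W$ contains $D_g\circ W\circ D_f$: here $D_g\circ W\circ D_f\in\DMC_{\mathcal{X}_2,\mathcal{Y}_2}$ is obtained from $W\in\DMC_{\mathcal{X}_1,\mathcal{Y}_1}$ by a single pre-processing with $D_f\in\DMC_{\mathcal{X}_2,\mathcal{X}_1}$ and a single post-processing with $D_g\in\DMC_{\mathcal{Y}_1,\mathcal{Y}_2}$. Taking $n=1$, $\alpha(1)=1$, $R_1=D_f$, $T_1=D_g$ in the definition of Shannon containment yields the required equality, and no further work is needed.

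For the reverse direction, $D_g\circ W\circ D_f$ contains $W$: the trick is to undo $D_f$ and $D_g$ using deterministic inverses. Because $f:\mathcal{X}_2\to\mathcal{X}_1$ is surjective, we can choose a section $f':\mathcal{X}_1\to\mathcal{X}_2$ with $f\circ f'=\mathrm{id}_{\mathcal{X}_1}$ by selecting, for each $x\in\mathcal{X}_1$, an arbitrary element of the nonempty preimage $f^{-1}(x)$. Because $g:\mathcal{Y}_1\to\mathcal{Y}_2$ is injective, we can choose a retraction $g':\mathcal{Y}_2\to\mathcal{Y}_1$ with $g'\circ g=\mathrm{id}_{\mathcal{Y}_1}$ by setting $g'(g(y))=y$ on the image of $g$ and defining $g'$ arbitrarily outside of $g(\mathcal{Y}_1)$. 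Using the composition identity $D_h\circ D_k=D_{h\circ k}$ from the preliminaries, we get
\[
D_{g'}\circ(D_g\circ W\circ D_f)\circ D_{f'}=D_{g'\circ g}\circ W\circ D_{f\circ f'}=D_{\mathrm{id}_{\mathcal{Y}_1}}\circ W\circ D_{\mathrm{id}_{\mathcal{X}_1}}=W.
\]
Thus, taking $n=1$, $\alpha(1)=1$, $R_1=D_{f'}\in\DMC_{\mathcal{X}_1,\mathcal{X}_2}$, and $T_1=D_{g'}\in\DMC_{\mathcal{Y}_2,\mathcal{Y}_1}$ in the definition of Shannon containment shows that $D_g\circ W\circ D_f$ contains $W$.

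There is no substantial obstacle here; the only thing to be careful about is existence of the section and retraction, which is a direct consequence of surjectivity and injectivity (and, on finite sets, requires no choice principle beyond finite selection). Combining the two containments gives Shannon-equivalence, completing the proof.
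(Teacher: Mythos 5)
Your proposal is correct and follows essentially the same route as the paper: the forward containment is immediate from the definition with a single deterministic pair $(D_f,D_g)$, and the reverse containment uses exactly the same section $f'$ of $f$ and retraction $g'$ of $g$ together with the identity $D_h\circ D_k=D_{h\circ k}$ to recover $W=D_{g'}\circ(D_g\circ W\circ D_f)\circ D_{f'}$. No gaps.
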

\begin{proof}
Clearly $W$ contains $D_g\circ W\circ D_f$. Now let $f'$ be any mapping from $\mathcal{X}_1$ to $\mathcal{X}_2$ such that $f(f'(x_1))=x_1$ for every $x_1\in\mathcal{X}_1$, and let $g'$ be any mapping from $\mathcal{Y}_2$ to $\mathcal{Y}_1$ such that $g'(g(y_1))=y_1$ for every $y_1\in\mathcal{Y}_1$. We have $$W=(D_{g'}\circ D_g)\circ W\circ (D_{f}\circ D_{f'})=D_{g'}\circ(D_g\circ W\circ D_f)\circ D_{f'},$$ and so $D_g\circ W\circ D_f$ also contains $W$. Therefore, $W$ and $D_g\circ W\circ D_f$ are Shannon-equivalent.
\end{proof}

\begin{mycor}
\label{corEquivChannelSurjInj}
For every $W,W'\in\DMC_{\mathcal{X}_1,\mathcal{Y}_1}$, every two surjections $f,f'$ from $\mathcal{X}_2$ to $\mathcal{X}_1$, and every two injections $g,g'$ from $\mathcal{Y}_1$ to $\mathcal{Y}_2$, we have:
$$W R_{\mathcal{X}_1,\mathcal{Y}_1}^{(s)} W'\;\;\Leftrightarrow\;\; (D_g\circ W\circ D_f)R_{\mathcal{X}_2,\mathcal{Y}_2}^{(s)}(D_{g'}\circ W'\circ D_{f'}).$$
\end{mycor}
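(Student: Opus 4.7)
The corollary is essentially an immediate consequence of Lemma \ref{lemEquivChannelSurjInj} combined with the fact that Shannon-equivalence is a genuine equivalence relation, so I would break the argument into two short steps.

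\textbf{Step 1: Transitivity of Shannon-equivalence.} First I would verify (or at least remind the reader) that $R_{\mathcal{X},\mathcal{Y}}^{(s)}$ really is transitive when extended across different alphabet pairs. The statement ``$W'$ contains $W$'' is transitive: if $W'$ contains $W$ and $W''$ contains $W'$, then by Lemma \ref{lemContainSkew} we have $W = V\circ_s W'$ and $W' = V'\circ_s W''$ for convex-product channels $V,V'$, and then Lemma \ref{lemSkewCompositionCPC} (together with associativity of skew-composition, which is immediate from the defining double sum) gives $W = (V\circ_s V')\circ_s W''$ with $V\circ_s V'\in \CPC$. Hence containment is transitive, and Shannon-equivalence (``each contains the other'') is an equivalence relation that makes sense across channels with different input/output alphabets.

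\textbf{Step 2: Double application of Lemma \ref{lemEquivChannelSurjInj}.} Given the surjections $f,f'\colon \mathcal{X}_2\to \mathcal{X}_1$ and the injections $g,g'\colon \mathcal{Y}_1\to \mathcal{Y}_2$, Lemma \ref{lemEquivChannelSurjInj} gives
\[
W \text{ is Shannon-equivalent to } D_g\circ W\circ D_f,\qquad W' \text{ is Shannon-equivalent to } D_{g'}\circ W'\circ D_{f'}.
\]
For the forward direction, assume $W\,R_{\mathcal{X}_1,\mathcal{Y}_1}^{(s)}\,W'$. Then by transitivity across alphabets (Step 1) the chain
\[
D_g\circ W\circ D_f\;\sim\; W\;\sim\; W'\;\sim\; D_{g'}\circ W'\circ D_{f'}
\]
yields $(D_g\circ W\circ D_f)\,R_{\mathcal{X}_2,\mathcal{Y}_2}^{(s)}\,(D_{g'}\circ W'\circ D_{f'})$. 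Conversely, if this latter equivalence holds, the same chain read in the opposite direction gives $W\,R_{\mathcal{X}_1,\mathcal{Y}_1}^{(s)}\,W'$.

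\textbf{Expected difficulty.} There is essentially no obstacle: the entire content lies in Lemma \ref{lemEquivChannelSurjInj}, and the corollary only packages it with transitivity. The only subtlety worth flagging is that transitivity must be used across three different alphabet pairs, but this is handled uniformly by Lemma \ref{lemSkewCompositionCPC}, so no additional machinery is needed.
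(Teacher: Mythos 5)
Your proof is correct and follows essentially the same route as the paper's: the paper also deduces the corollary by applying Lemma \ref{lemEquivChannelSurjInj} to both $W$ and $W'$ and then invoking transitivity of Shannon-equivalence. Your Step 1, making explicit that containment is transitive across different alphabet pairs via Lemmas \ref{lemContainSkew} and \ref{lemSkewCompositionCPC}, is a detail the paper leaves implicit but is a welcome addition.
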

\begin{proof}
Since $W$ is Shannon-equivalent to $D_g\circ W\circ D_f$ and $W'$ is Shannon-equivalent to $D_{g'}\circ W'\circ D_{f'}$, then $W$ is Shannon-equivalent to $W'$ if and only if $D_g\circ W\circ D_f$ is Shannon-equivalent to $D_{g'}\circ W'\circ D_{f'}$.
\end{proof}

\vspace*{3mm}

For every $W\in\DMC_{\mathcal{X}_1,\mathcal{Y}_1}$, we denote the $R_{\mathcal{X}_1,\mathcal{Y}_1}^{(s)}$-equivalence class of $W$ as $\hat{W}$, and for every $W\in\DMC_{\mathcal{X}_2,\mathcal{Y}_2}$, we denote the $R_{\mathcal{X}_2,\mathcal{Y}_2}^{(s)}$-equivalence class of $W$ as $\tilde{W}$.

\begin{myprop}
\label{propEmbedShanEquiv}
Let $\mathcal{X}_1,\mathcal{X}_2, \mathcal{Y}_1$ and $\mathcal{Y}_2$ be four finite sets such that $|\mathcal{X}_1|\leq |\mathcal{X}_2|$ and $|\mathcal{Y}_1|\leq |\mathcal{Y}_2|$. Let $f:\mathcal{X}_2\rightarrow\mathcal{X}_1$ be any fixed surjection from $\mathcal{X}_2$ to $\mathcal{X}_1$, and let $g:\mathcal{Y}_1\rightarrow\mathcal{Y}_2$ be any fixed injection from $\mathcal{Y}_1$ to $\mathcal{Y}_2$. Define the mapping $F:\DMC_{\mathcal{X}_1,\mathcal{Y}_1}^{(s)}\rightarrow \DMC_{\mathcal{X}_2,\mathcal{Y}_2}^{(s)}$ as
$F(\hat{W})=\widetilde{D_g\circ W'\circ D_f}=\Proj_2(D_g\circ W'\circ D_f)$, where $W'\in \hat{W}$, $\widetilde{D_g\circ W'\circ D_f}$ is the $R_{\mathcal{X}_2,\mathcal{Y}_2}^{(s)}$-equivalence class of $D_g\circ W'\circ D_f$, and $\Proj_2$ is the projection onto the $R_{\mathcal{X}_2,\mathcal{Y}_2}^{(s)}$-equivalence classes. We have:
\begin{itemize}
\item $F$ is well defined, i.e., $F(\hat{W})$ does not depend on $W'\in\hat{W}$.
\item $F$ is a homeomorphism between $\DMC_{\mathcal{X}_1,\mathcal{Y}_1}^{(s)}$ and $F\big(\DMC_{\mathcal{X}_1,\mathcal{Y}_1}^{(s)}\big)\subset \DMC_{\mathcal{X}_2,\mathcal{Y}_2}^{(s)}$.
\item $F$ does not depend on the surjection $f$ nor on the injection $g$. It depends only on $\mathcal{X}_1$, $\mathcal{X}_2$, $\mathcal{Y}_1$ and $\mathcal{Y}_2$, hence it is canonical.
\item $F\big(\DMC_{\mathcal{X}_1,\mathcal{Y}_1}^{(s)}\big)$ depends only on $|\mathcal{X}_1|$, $|\mathcal{Y}_1|$, $\mathcal{X}_2$ and $\mathcal{Y}_2$.
\item For every $W'\in\hat{W}$ and every $W''\in F(\hat{W})$, $W'$ is Shannon-equivalent to $W''$.
\end{itemize}
\end{myprop}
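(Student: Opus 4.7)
My plan is to dispatch the five bullet points in a tight sequence using Lemma \ref{lemEquivChannelSurjInj} and Corollary \ref{corEquivChannelSurjInj}, the quotient lifting of Lemma \ref{lemQuotientFunction}, and the standard fact that a continuous injection from a compact space into a Hausdorff space is a homeomorphism onto its image.

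First I would handle well-definedness, the independence of $F$ from the choice of $f$ and $g$, and injectivity together, as all three fall out of a single application of Corollary \ref{corEquivChannelSurjInj}. For well-definedness, two representatives $W_1,W_2 \in \hat{W}$ satisfy $W_1 R_{\mathcal{X}_1,\mathcal{Y}_1}^{(s)} W_2$, so the corollary (with $f=f'$, $g=g'$) gives $(D_g\circ W_1\circ D_f)\, R_{\mathcal{X}_2,\mathcal{Y}_2}^{(s)}\, (D_g\circ W_2\circ D_f)$, making $F(\hat{W})$ unambiguous. The same corollary, applied with $W_1=W_2$ and distinct surjections/injections, shows $F$ does not depend on $f$ or $g$. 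For injectivity, the $\Leftarrow$ direction of the corollary pulls an equality $F(\hat{W}_1)=F(\hat{W}_2)$ back to $W_1' R_{\mathcal{X}_1,\mathcal{Y}_1}^{(s)} W_2'$.

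Continuity follows by noting that $W\mapsto D_g\circ W\circ D_f$ is linear in the entries, hence continuous, from $\DMC_{\mathcal{X}_1,\mathcal{Y}_1}$ to $\DMC_{\mathcal{X}_2,\mathcal{Y}_2}$; postcomposing with the projection $\Proj_2$ gives a continuous map into $\DMC_{\mathcal{X}_2,\mathcal{Y}_2}^{(s)}$ which, by the well-definedness just shown, is constant on $R_{\mathcal{X}_1,\mathcal{Y}_1}^{(s)}$-classes, so Lemma \ref{lemQuotientFunction} lifts it to a continuous $F$ on the quotient. Since $\DMC_{\mathcal{X}_1,\mathcal{Y}_1}^{(s)}$ is compact and $\DMC_{\mathcal{X}_2,\mathcal{Y}_2}^{(s)}$ is metric (hence Hausdorff) by Theorem \ref{theDMCXYs}, the continuous injection $F$ is automatically a homeomorphism onto its image.

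For the last bullet, transitivity of Shannon-equivalence (defined across different alphabets in Section \ref{secShannonOrd}) suffices: $W'$ is Shannon-equivalent to $W$ because $W'\in\hat{W}$; $W$ is Shannon-equivalent to $D_g\circ W\circ D_f$ by Lemma \ref{lemEquivChannelSurjInj}; and $D_g\circ W\circ D_f$ is Shannon-equivalent to $W''$ since both lie in the same $R_{\mathcal{X}_2,\mathcal{Y}_2}^{(s)}$-class $F(\hat{W})$. Finally, for the image-depends-only-on-cardinalities claim, given alternative finite sets $\mathcal{X}_1',\mathcal{Y}_1'$ of the same sizes and bijections $\phi:\mathcal{X}_1\to\mathcal{X}_1'$, $\psi:\mathcal{Y}_1\to\mathcal{Y}_1'$, I would use the surjection $\phi\circ f:\mathcal{X}_2\to\mathcal{X}_1'$ and injection $g\circ\psi^{-1}:\mathcal{Y}_1'\to\mathcal{Y}_2$ for the primed embedding; the telescoping identity $D_{g\circ\psi^{-1}}\circ(D_\psi\circ W\circ D_{\phi^{-1}})\circ D_{\phi\circ f}=D_g\circ W\circ D_f$, combined with the fact that $W\mapsto D_\psi\circ W\circ D_{\phi^{-1}}$ is a bijection between $\DMC_{\mathcal{X}_1,\mathcal{Y}_1}$ and $\DMC_{\mathcal{X}_1',\mathcal{Y}_1'}$, shows that the two embeddings produce the exact same subset of $\DMC_{\mathcal{X}_2,\mathcal{Y}_2}^{(s)}$. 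None of these steps is a real obstacle; the only care required is bookkeeping between the two quotient relations and the cross-alphabet notion of containment.
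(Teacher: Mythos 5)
Your proposal is correct and follows essentially the same route as the paper: Corollary \ref{corEquivChannelSurjInj} for well-definedness, canonicity and injectivity; Lemma \ref{lemQuotientFunction} for continuity; compactness of $\DMC_{\mathcal{X}_1,\mathcal{Y}_1}^{(s)}$ plus Hausdorffness of the metrizable target for the homeomorphism-onto-image claim (the paper merely unpacks this as ``$F$ is a closed map''); the same telescoping identity with bijections for the cardinality statement; and transitivity via Lemma \ref{lemEquivChannelSurjInj} for the last bullet. No gaps.
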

\begin{proof}
See Appendix \ref{appEmbedShanEquiv}.
\end{proof}

\begin{mycor}
\label{corIdentShanEquiv}
If $|\mathcal{X}_1|=|\mathcal{X}_2|$ and $|\mathcal{Y}_1|=|\mathcal{Y}_2|$, there exists a canonical homeomorphism from $\DMC_{\mathcal{X}_1,\mathcal{Y}_1}^{(s)}$ to $\DMC_{\mathcal{X}_2,\mathcal{Y}_2}^{(s)}$ depending only on $\mathcal{X}_1,\mathcal{Y}_1,\mathcal{X}_2$ and $\mathcal{Y}_2$.
\end{mycor}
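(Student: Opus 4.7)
The plan is to derive this directly from Proposition \ref{propEmbedShanEquiv}. Since $|\mathcal{X}_1|=|\mathcal{X}_2|$ and $|\mathcal{Y}_1|=|\mathcal{Y}_2|$, any surjection $f:\mathcal{X}_2\rightarrow\mathcal{X}_1$ is automatically a bijection, and any injection $g:\mathcal{Y}_1\rightarrow\mathcal{Y}_2$ is also a bijection. Fix such $f$ and $g$ and let $F:\DMC_{\mathcal{X}_1,\mathcal{Y}_1}^{(s)}\rightarrow \DMC_{\mathcal{X}_2,\mathcal{Y}_2}^{(s)}$ be the mapping constructed in Proposition \ref{propEmbedShanEquiv}. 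By that proposition, $F$ is a well-defined homeomorphism onto its image $F\big(\DMC_{\mathcal{X}_1,\mathcal{Y}_1}^{(s)}\big)$, and this image depends only on $|\mathcal{X}_1|,|\mathcal{Y}_1|,\mathcal{X}_2,\mathcal{Y}_2$, hence only on $\mathcal{X}_1,\mathcal{Y}_1,\mathcal{X}_2,\mathcal{Y}_2$.

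The only nontrivial step is to verify surjectivity of $F$ in the equal-cardinality case. For this I would exploit that $f$ and $g$ now have two-sided inverses: let $f^{-1}:\mathcal{X}_1\rightarrow\mathcal{X}_2$ and $g^{-1}:\mathcal{Y}_2\rightarrow\mathcal{Y}_1$ be their inverse maps. Given any $\tilde{W}\in\DMC_{\mathcal{X}_2,\mathcal{Y}_2}^{(s)}$ and any representative $W''\in\tilde{W}$, define $W'=D_{g^{-1}}\circ W''\circ D_{f^{-1}}\in\DMC_{\mathcal{X}_1,\mathcal{Y}_1}$. Using the identity $D_g\circ D_{g^{-1}}=D_{g\circ g^{-1}}=D_{\mathrm{id}_{\mathcal{Y}_2}}$ and similarly for the input side, I get
$$D_g\circ W'\circ D_f = D_g\circ D_{g^{-1}}\circ W''\circ D_{f^{-1}}\circ D_f = W'',$$
so $F(\hat{W})=\widetilde{D_g\circ W'\circ D_f}=\tilde{W}$. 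Thus $F$ is surjective.

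Combining these, $F$ is a bijective homeomorphism onto $\DMC_{\mathcal{X}_2,\mathcal{Y}_2}^{(s)}$, and its canonicity is inherited from Proposition \ref{propEmbedShanEquiv}. I do not anticipate any real obstacle here; the entire argument is a one-line consequence of the proposition once one observes that bijective $f$ and $g$ make the embedding hit every point. The only thing to be slightly careful about is ensuring that the resulting $F$ is truly independent of the choice of bijections $f,g$, but this is precisely the content of the third bullet of Proposition \ref{propEmbedShanEquiv}.
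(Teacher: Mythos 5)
Your proposal is correct and follows essentially the same route as the paper: both rest on Proposition \ref{propEmbedShanEquiv} together with the two-sided invertibility of $D_f$ and $D_g$ when $f$ and $g$ are bijections. The only cosmetic difference is that the paper packages the surjectivity argument as an explicit inverse mapping $F'$ on $\DMC_{\mathcal{X}_2,\mathcal{Y}_2}^{(s)}$ and checks $F'\circ F=\mathrm{id}$ and $F\circ F'=\mathrm{id}$, whereas you exhibit a preimage $W'=D_{g^{-1}}\circ W''\circ D_{f^{-1}}$ directly.
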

\begin{proof}
Let $f$ be a bijection from $\mathcal{X}_2$ to $\mathcal{X}_1$, and let $g$ be a bijection from $\mathcal{Y}_1$ to $\mathcal{Y}_2$. Define the mapping $F:\DMC_{\mathcal{X}_1,\mathcal{Y}_1}^{(s)}\rightarrow \DMC_{\mathcal{X}_2,\mathcal{Y}_2}^{(s)}$ as
$F(\hat{W})=\widetilde{D_g\circ W'\circ D_f}=\Proj_2(D_g\circ W'\circ D_f),$ where $W'\in \hat{W}$ and $\Proj_2:\DMC_{\mathcal{X}_2,\mathcal{Y}_2}\rightarrow\DMC_{\mathcal{X}_2,\mathcal{Y}_2}^{(s)}$ is the projection onto the $R_{\mathcal{X}_2,\mathcal{Y}_2}^{(s)}$-equivalence classes.

Also, define the mapping $F':\DMC_{\mathcal{X}_2,\mathcal{Y}_2}^{(s)}\rightarrow \DMC_{\mathcal{X}_1,\mathcal{Y}_1}^{(s)}$ as
$$\textstyle F'(\tilde{V})=\widehat{D_{g^{-1}}\circ V'\circ D_{f^{-1}}}=\Proj_1(D_{g^{-1}}\circ V'\circ D_{f^{-1}}),$$ where $V'\in \tilde{V}$ and $\Proj_1:\DMC_{\mathcal{X}_1,\mathcal{Y}_1}\rightarrow\DMC_{\mathcal{X}_1,\mathcal{Y}_1}^{(s)}$ is the projection onto the $R_{\mathcal{X}_1,\mathcal{Y}_1}^{(s)}$-equivalence classes.

Proposition \ref{propEmbedShanEquiv} shows that $F$ and $F'$ are well defined.

For every $W\in\DMC_{\mathcal{X}_1,\mathcal{Y}_1}$, we have:
\begin{align*}
F'(F(\hat{W}))&\stackrel{(a)}{=}F'(\widetilde{D_g\circ W\circ D_f})\stackrel{(b)}{=}\textstyle\Proj_1(D_{g^{-1}}\circ(D_g\circ W\circ D_f)\circ D_{f^{-1}})=\hat{W},
\end{align*}
where (a) follows from the fact that $W\in \hat{W}$ and (b) follows from the fact that $D_g\circ W\circ D_f\in \widetilde{D_g\circ W\circ D_f}$.

We can similarly show that $F(F'(\tilde{V}))=\tilde{V}$ for every $\tilde{V}\in\DMC_{\mathcal{X}_2,\mathcal{Y}_2}^{(s)}$. Therefore, both $F$ and $F'$ are bijections. Proposition \ref{propEmbedShanEquiv} now implies that $F$ is a homeomorphism from $\DMC_{\mathcal{X}_1,\mathcal{Y}_1}^{(s)}$ to $F\big(\DMC_{\mathcal{X}_1,\mathcal{Y}_1}^{(s)}\big)=\DMC_{\mathcal{X}_2,\mathcal{Y}_2}^{(s)}$. Moreover, $F$ depends only on $\mathcal{X}_1,\mathcal{Y}_1,\mathcal{X}_2$ and $\mathcal{Y}_2$.
\end{proof}

\vspace*{3mm}

Corollary \ref{corIdentShanEquiv} allows us to identify $\DMC_{\mathcal{X},\mathcal{Y}}^{(s)}$ with $\DMC_{[n],[m]}^{(s)}$ through the canonical homeomorphism, where $n=|\mathcal{X}|$, $m=|\mathcal{Y}|$, $[n]=\{1,\ldots,n\}$ and $[m]=\{1,\ldots,m\}$. Moreover, for every $1\leq n\leq n'$ and $1\leq m\leq m'$, Proposition \ref{propEmbedShanEquiv} allows us to identify $\DMC_{[n],[m]}^{(s)}$ with the canonical subspace of $\DMC_{[n'],[m']}^{(s)}$ that is homeomorphic to $\DMC_{[n],[m]}^{(s)}$. In the rest of this paper, we consider that $\DMC_{[n],[m]}^{(s)}$ is a compact subspace of $\DMC_{[n'],[m']}^{(s)}$.

\begin{myconj}
\label{conjShanInterior}
For every $1\leq n<m$, the interior of $\DMC_{[n],[n]}^{(s)}$ in $\DMC_{[m],[m]}^{(s)}$ is empty.
\end{myconj}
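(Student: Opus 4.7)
My approach rests on the fact that capacity is a Shannon-invariant: Shannon containment of $W$ by $W'$ gives $C(W)\le C(W')$ (since any $(n,M,\epsilon)$-code for $W$ yields one for $W'$), so Shannon-equivalent channels have equal capacity. Since every $W\in\DMC_{[n],[n]}$ has $C(W)\le\log n$, the same bound holds on every equivalence class in $\DMC_{[n],[n]}^{(s)}$. By continuity of the projection (Proposition~\ref{propReldXYdXYs}), to prove the conjecture it suffices to show that
\[
S=\{W\in\DMC_{[m],[m]}:\hat W\in\DMC_{[n],[n]}^{(s)}\}
\]
has empty interior in $\DMC_{[m],[m]}$ with respect to $d_{[m],[m]}$, since the projection pulls any quotient open ball back to a Euclidean open ball.

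I would first treat the easy case in which the representative $W'$ of $\hat W'$ has $C(W')=\log n$, for example the canonical embedding of $I_n$. Because $n<m$, the surjection $f:[m]\to[n]$ in the embedding must coalesce at least two inputs, so $W'$ has two identical rows, while the output alphabet contains at least one symbol $y^{*}\notin g([n])$. Tilt one of those rows by $\epsilon$ toward $y^{*}$ to obtain $W''$. Choosing a symmetric input distribution that splits the mass equally between the two coalesced inputs and distributes the rest over representatives of the remaining input classes, a direct mutual-information computation yields $\partial I(X;Y)/\partial\epsilon>0$ at $\epsilon=0$, hence $C(W'')>\log n$ for every small $\epsilon>0$. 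Thus $\hat W''\notin\DMC_{[n],[n]}^{(s)}$, and Proposition~\ref{propReldXYdXYs} produces arbitrarily $d^{(s)}$-close Shannon classes outside $\DMC_{[n],[n]}^{(s)}$.

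The harder case is $C(W')<\log n$, for example when $W'$ is the embedding of a useless or near-useless channel on $[n]$: small perturbations leave the capacity well below $\log n$, so the capacity argument breaks down. Here I would switch to a topological-dimension argument. By Lemma~\ref{lemContainSkew}, $S$ is the continuous image of the compact parameter space $\DMC_{[n],[n]}\times\CPC_{[m]\times[n],[n]\times[m]}$ under the polynomial skew-composition $(W_0,V)\mapsto V\circ_s W_0$, so the covering dimension of $S$ is bounded by the dimension of that parameter space. If one can verify this bound is strictly less than $m(m-1)=\dim\DMC_{[m],[m]}$, then $S$ has empty interior, since a closed subset of a Euclidean domain of strictly smaller dimension cannot contain an open ball.

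The main obstacle is making this dimension count rigorous. The skew-composition map is highly non-injective — many $(W_0,V)$ give the same $W$ — and its fibres are largest precisely at the low-capacity channels where the easy capacity argument fails, so the two tools are in tension: capacity works where the parametrization is tight and fails where it is loose, while the dimension argument needs the opposite. A cleaner substitute might be to construct, for each $\hat W'\in F(\DMC_{[n],[n]}^{(s)})$, an explicit continuous Shannon-separating functional of the form $\hat W\mapsto\$_{\opt}(l,\hat W)$ for a carefully chosen BRM game $l$ with $|\mathcal U|>n$, exploiting that Bob's strategies through an $[n]$-input channel must coalesce elements of $\mathcal U$ and therefore cannot realise every payoff region achievable through an $[m]$-input channel. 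Designing such games that remain sensitive around the low-capacity embedded points — exactly where the naive dimension count is most delicate — is the core of what the conjecture leaves open.
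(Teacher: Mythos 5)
The first thing to say is that the paper does not prove this statement: it is posed as Conjecture~\ref{conjShanInterior} and explicitly left open (Remark~\ref{remNegPropertiesNatTopShan} and the discussion section only draw consequences conditional on it). So there is no proof in the paper to compare yours against; the only question is whether your proposal settles the conjecture, and—as you yourself concede—it does not.

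The parts that are sound: the reduction is correct, since the projection $\Proj_m:\DMC_{[m],[m]}\rightarrow\DMC_{[m],[m]}^{(s)}$ is continuous (this already follows from the definition of the quotient topology, without invoking Proposition~\ref{propReldXYdXYs}), so a nonempty open subset of $\DMC_{[m],[m]}^{(s)}$ contained in $\DMC_{[n],[n]}^{(s)}$ would pull back to a nonempty open subset of $\DMC_{[m],[m]}$ contained in $S$. The capacity perturbation at the embedded noiseless channel also works: the perturbed channel is the Shannon sum of a positive-capacity Z-channel and a noiseless $(n-1)$-ary channel, so $e^{C}>n$, and capacity is monotone under Shannon containment. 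But this only removes from the interior those points whose class attains capacity $\log n$, a thin slice of $S$.

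The genuine gap is the remaining case, and the dimension argument you offer for it fails as stated. The inequality ``$\dim(\text{image})\leq\dim(\text{parameter space})$'' is useless here because the parameter space is too big, not too small: the map $(R,T)\mapsto R\otimes T$ is injective, so $\CPC_{[m]\times[n],[n]\times[m]}$ already contains the set $\PC_{[m]\times[n],[n]\times[m]}$ of dimension $m(n-1)+n(m-1)=2mn-m-n$, and for $n=m-1$ and $m\geq 3$ this alone exceeds $m(m-1)=\dim\DMC_{[m],[m]}$ (the convex hull is generally even larger). So the bound you propose cannot certify empty interior; what is needed is an upper bound on the dimension of the image of the highly non-injective map $(W_0,V)\mapsto V\circ_s W_0$ itself, or equivalently a separating functional of the kind you sketch in your last paragraph—and producing either is precisely the content of the conjecture. (Two smaller points: for a merely continuous map the image can have \emph{larger} dimension than the domain, so you would in any case need to invoke semialgebraicity of the skew-composition; and the set $\{V\circ_s W_0\}$ is the set of channels \emph{contained in} some $n$-ary channel, a superset of $S$, which is fine for your purposes but should be said.) As it stands, your text is a correct map of where the difficulty lies rather than a proof.
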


\section{Space of Shannon-equivalent channels}

\label{secShanEquivSpace}

The previous section showed that if we are interested in Shannon-equivalent channels, it is sufficient to study the spaces $\DMC_{[n],[m]}$ and $\DMC_{[n],[m]}^{(s)}$ for every $n,m\geq 1$. Define the space $$\textstyle\DMC_{\ast,\ast}={\displaystyle\coprod_{\substack{n\geq 1,\\m\geq 1}}} \DMC_{[n],[m]}.$$
The subscripts $\ast$ indicate that the input and output alphabets of the considered channels are arbitrary but finite. We define the equivalence relation $R_{\ast,\ast}^{(s)}$ on $\DMC_{\ast,\ast}$ as follows:
$$WR_{\ast,\ast}^{(s)}W'\;\;\Leftrightarrow\;\;W\;\text{is Shannon-equivalent to}\;W'.$$

\begin{mydef}
The space of Shannon-equivalent channels is the quotient of the space of channels by the Shannon-equivalence relation:
$$\textstyle\DMC_{\ast,\ast}^{(s)}=\DMC_{\ast,\ast}/R_{\ast,\ast}^{(s)}.$$
\end{mydef}

Clearly, $\DMC_{[n],[m]}/R_{\ast,\ast}^{(s)}$ can be canonically identified with $\DMC_{[n],[m]}/R_{[n],[m]}^{(s)}=\DMC_{[n],[m]}^{(s)}$ for every $n,m\geq 1$. Therefore, we can write
\begin{align*}
\textstyle\DMC_{\ast,\ast}^{(s)}={\displaystyle\bigcup_{n,m\geq 1}}\DMC_{[n],[m]}^{(s)}\stackrel{(a)}{=}{\displaystyle\bigcup_{n\geq 1}}\DMC_{[n],[n]}^{(s)}.
\end{align*}
Note that (a) follows from the fact that $\DMC_{[n],[m]}^{(s)}\subset \DMC_{[k],[k]}^{(s)}$ (see Section \ref{subsecEmbedShanEquiv}), where $k=\max\{n,m\}$.

We define the \emph{Shannon-rank} of $\hat{W}\in\DMC_{\ast,\ast}^{(s)}$ as: $$\srank(\hat{W})=\min\{n\geq 1:\; \hat{W}\in{\DMC}_{[n],[n]}^{(s)}\}.$$ Clearly,
$$\textstyle{\DMC}_{[n],[n]}^{(s)}=\{\hat{W}\in{\DMC}_{\ast,\ast}^{(s)}:\;\srank(\hat{W})\leq n\}.$$

A subset $A$ of $\DMC_{\ast,\ast}^{(s)}$ is said to be \emph{rank-bounded} if there exists $n\geq 1$ such that $A\subset \DMC_{[n],[n]}^{(s)}$.

\subsection{Natural topologies on $\DMC_{\ast,\ast}^{(s)}$}

\label{subsecNaturalTopShan}

Since $\DMC_{\ast,\ast}^{(s)}$ is the quotient of $\DMC_{\ast,\ast}$ and since $\DMC_{\ast,\ast}$ was not given any topology, there is no ``standard topology" on $\DMC_{\ast,\ast}^{(s)}$. However, there are many properties that one may require from any ``reasonable" topology on $\DMC_{\ast,\ast}^{(s)}$. In this paper, we focus on one particular requirement that we consider the most basic property required from any ``acceptable" topology on $\DMC_{\ast,\ast}^{(s)}$:

\begin{mydef}
A topology $\mathcal{T}$ on $\DMC_{\ast,\ast}^{(s)}$ is said to be \emph{natural} if it induces the quotient topology $\mathcal{T}_{[n],[m]}^{(s)}$ on $\DMC_{[n],[m]}^{(s)}$ for every $n,m\geq 1$.
\end{mydef}

The reason why we consider such topology as natural is because the quotient topology $\mathcal{T}_{[n],[m]}^{(s)}$ is the ``standard" and ``most natural" topology on $\DMC_{[n],[m]}^{(s)}$. Therefore, we do not want to induce any non-standard topology on $\DMC_{[n],[m]}^{(s)}$ by relativization.

\begin{myprop}
\label{propShanNaturalTopProperties}
Every natural topology is $\sigma$-compact, separable and path-connected.
\end{myprop}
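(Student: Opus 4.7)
The plan is to exploit the decomposition $\DMC_{\ast,\ast}^{(s)} = \bigcup_{n\geq 1}\DMC_{[n],[n]}^{(s)}$ established at the start of Section \ref{secShanEquivSpace}, combined with the fact (Theorem \ref{theDMCXYs}) that each $\DMC_{[n],[n]}^{(s)}$ is compact, path-connected, and metrizable via $d_{[n],[n]}^{(s)}$. The naturalness assumption is what lets us transfer all three of these properties from the pieces to the whole: by definition, the subspace topology that $\mathcal{T}$ induces on $\DMC_{[n],[n]}^{(s)}$ coincides with $\mathcal{T}_{[n],[n]}^{(s)}$, so the inclusion $(\DMC_{[n],[n]}^{(s)},\mathcal{T}_{[n],[n]}^{(s)})\hookrightarrow (\DMC_{\ast,\ast}^{(s)},\mathcal{T})$ is continuous. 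In particular, each $\DMC_{[n],[n]}^{(s)}$ is a compact subset of $(\DMC_{\ast,\ast}^{(s)},\mathcal{T})$.

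For $\sigma$-compactness, I would simply observe that $\DMC_{\ast,\ast}^{(s)}$ is the countable union $\bigcup_{n\geq 1}\DMC_{[n],[n]}^{(s)}$ of these compact subsets.

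For separability, I would use that each $(\DMC_{[n],[n]}^{(s)},\mathcal{T}_{[n],[n]}^{(s)})$, being compact metrizable by Theorem \ref{theDMCXYs}, is separable, so admits a countable dense subset $D_n$. Let $D=\bigcup_{n\geq 1}D_n$, which is countable. To verify density in $(\DMC_{\ast,\ast}^{(s)},\mathcal{T})$, take any non-empty open $U\in\mathcal{T}$ and any $\hat W\in U$; then $\hat W\in \DMC_{[n],[n]}^{(s)}$ for some $n$, so $U\cap\DMC_{[n],[n]}^{(s)}$ is a non-empty open subset of $\DMC_{[n],[n]}^{(s)}$ in its relative topology, which equals $\mathcal{T}_{[n],[n]}^{(s)}$ by naturalness; hence it meets $D_n\subset D$.

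For path-connectedness, given $\hat W_1,\hat W_2\in\DMC_{\ast,\ast}^{(s)}$, pick $n$ larger than both $\srank(\hat W_1)$ and $\srank(\hat W_2)$, so both points lie in $\DMC_{[n],[n]}^{(s)}$. By Theorem \ref{theDMCXYs}, $(\DMC_{[n],[n]}^{(s)},\mathcal{T}_{[n],[n]}^{(s)})$ is path-connected, hence there is a continuous path $\gamma\colon[0,1]\to (\DMC_{[n],[n]}^{(s)},\mathcal{T}_{[n],[n]}^{(s)})$ joining them; composing with the continuous inclusion gives a continuous path in $(\DMC_{\ast,\ast}^{(s)},\mathcal{T})$. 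No step is truly hard here; the only thing one must be careful with is to use naturalness at exactly the right place, namely to conclude that the ambient-open sets restrict to $\mathcal{T}_{[n],[n]}^{(s)}$-open sets so that compactness, separability of $D_n$, and path-connectedness of the pieces can actually be invoked inside the ambient space.
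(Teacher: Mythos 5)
Your proposal is correct and follows essentially the same route as the paper: both arguments rest on the decomposition $\DMC_{\ast,\ast}^{(s)}=\bigcup_{n\geq 1}\DMC_{[n],[n]}^{(s)}$ together with the compactness, separability, and path-connectedness of each piece from Theorem \ref{theDMCXYs}, with naturalness guaranteeing that the subspace topologies agree. Your treatment is merely more explicit (e.g., verifying density of $\bigcup_n D_n$ and joining two points inside a single large piece rather than through the common subspace $\DMC_{[1],[1]}^{(s)}$), but the substance is identical.
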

\begin{proof}
Since $\DMC_{\ast,\ast}^{(s)}$ is the countable union of compact and separable subspaces (namely $\{\DMC_{[n],[n]}^{(s)}\}_{n\geq 1}$), $\DMC_{\ast,\ast}^{(s)}$ is $\sigma$-compact and separable.

On the other hand, since $\displaystyle\bigcap_{n\geq 1}\textstyle\DMC_{[n],[n]}^{(s)}=\DMC_{[1],[1]}^{(s)}\neq\o$ and since $\DMC_{[n],[n]}^{(s)}$ is path-connected for every $n\geq1$, the union $\textstyle\DMC_{\ast,\ast}^{(s)}={\displaystyle\bigcup_{n\geq 1}}\DMC_{[n],[n]}^{(s)}$ is path-connected.
\end{proof}

\begin{myrem}
\label{remNegPropertiesNatTopShan}
It is possible to show that if Conjecture \ref{conjShanInterior} is true, then for every natural topology $\mathcal{T}$ on $\DMC_{\ast,\ast}^{(s)}$, we have:
\begin{itemize}
\item Every open set is rank-unbounded.
\item For every $n\geq 1$, the interior of $\DMC_{[n],[n]}^{(s)}$ in $(\DMC_{\ast,\ast}^{(s)},\mathcal{T})$ is empty.
\item If $\mathcal{T}$ is Hausdorff, then
\begin{itemize}
\item $(\DMC_{\ast,\ast}^{(s)},\mathcal{T})$ is not a Baire space, hence no natural topology can be completely metrized.
\item $(\DMC_{\ast,\ast}^{(s)},\mathcal{T})$ is not locally compact anywhere.
\end{itemize}
\end{itemize}
\end{myrem}

\section{Strong topology on $\DMC_{\ast,\ast}^{(s)}$}

\label{subsecInStrongTop}

Since the spaces $\{\DMC_{[n],[m]}\}_{n,m\geq 1}$ are disjoint and since there is no a priori way to (topologically) compare channels in $\DMC_{[n],[m]}$ with channels in $\DMC_{[n'],[m']}$ for $(n,m)\neq(n',m')$, the ``most natural" topology that we can define on $\DMC_{\ast,\ast}$ is the disjoint union topology $\mathcal{T}_{s,\ast,\ast}:=\displaystyle\bigoplus_{n,m\geq 1}\mathcal{T}_{[n],[m]}$. Clearly, the space $(\DMC_{\ast,\ast},\mathcal{T}_{s,\ast,\ast})$ is disconnected. Moreover, $\mathcal{T}_{s,\ast,\ast}$ is metrizable because it is the disjoint union of metrizable spaces. It is also $\sigma$-compact because it is the union of countably many compact spaces.

We added the subscript $s$ to emphasize the fact that $\mathcal{T}_{s,\ast,\ast}$ is a strong topology (remember that the disjoint union topology is the \emph{finest} topology that makes the canonical injections continuous).

\begin{mydef}
We define the strong topology $\mathcal{T}_{s,\ast,\ast}^{(s)}$ on $\DMC_{\ast,\ast}^{(s)}$ as the quotient topology $\mathcal{T}_{s,\ast,\ast}/R_{\ast,\ast}^{(s)}$.

We call open and closed sets in $(\DMC_{\ast,\ast}^{(s)},\mathcal{T}_{s,\ast,\ast}^{(s)})$ as strongly open and strongly closed sets respectively.
\end{mydef}

Let $\Proj:\DMC_{\ast,\ast}\rightarrow\DMC_{\ast,\ast}^{(s)}$ be the projection onto the $R_{\ast,\ast}^{(s)}$-equivalence classes, and for every $n,m\geq 1$ let $\Proj_{n,m}:\DMC_{[n],[m]}\rightarrow\DMC_{[n],[m]}^{(s)}$ be the projection onto the $R_{[n],[m]}^{(s)}$-equivalence classes. Due to the identifications that we made in Section \ref{secShanEquivSpace}, we have $\Proj(W)=\Proj_{n,m}(W)$ for every $W\in\DMC_{[n],[m]}$. Therefore, for every $U\subset \DMC_{\ast,\ast}^{(s)}$, we have
$$\textstyle\Proj^{-1}(U)={\displaystyle\coprod_{n,m\geq 1}}\Proj_{n,m}^{-1}(U\cap\DMC_{[n],[m]}^{(s)}).$$
Hence,
\begin{align*}
\textstyle U\in\mathcal{T}_{s,\ast,\ast}^{(s)}\;\;&\stackrel{(a)}{\Leftrightarrow}\;\;\textstyle\Proj^{-1}(U)\in\mathcal{T}_{s,\ast,\ast}\\
&\textstyle\stackrel{(b)}{\Leftrightarrow}\;\;\Proj^{-1}(U)\cap\DMC_{[n],[m]} \in\mathcal{T}_{[n],[m]},\;\;\forall n,m\geq 1\\
&\textstyle\Leftrightarrow\;\;\left({\displaystyle\coprod_{n',m'\geq 1}}\Proj_{n',m'}^{-1}(U\cap\DMC_{[n'],[m']}^{(s)})\right)\cap\DMC_{[n],[m]} \in\mathcal{T}_{[n],[m]},\;\;\forall n,m\geq 1\\
&\textstyle\Leftrightarrow\;\;\Proj_{n,m}^{-1}(U\cap\DMC_{[n],[m]}^{(s)}) \in\mathcal{T}_{[n],[m]},\;\;\forall n,m\geq 1\\
&\textstyle\stackrel{(c)}{\Leftrightarrow}\;\;U\cap\DMC_{[n],[m]}^{(s)} \in\mathcal{T}_{[n],[m]}^{(s)},\;\;\forall n,m\geq 1,
\end{align*}
where (a) and (c) follow from the properties of the quotient topology, and (b) follows from the properties of the disjoint union topology.

We conclude that $U\subset \DMC_{\ast,\ast}^{(s)}$ is strongly open in $\DMC_{\ast,\ast}^{(s)}$ if and only if $U\cap \DMC_{[n],[m]}^{(s)}$ is open in $\DMC_{[n],[m]}^{(s)}$ for every $n,m\geq 1$. This shows that the topology on $\DMC_{[n],[m]}^{(s)}$ that is inherited from $(\DMC_{\ast,\ast}^{(s)},\mathcal{T}_{s,\ast,\ast}^{(s)})$ is exactly $\mathcal{T}_{[n],[m]}^{(s)}$. Therefore, $\mathcal{T}_{s,\ast,\ast}^{(s)}$ is a natural topology. On the other hand, if $\mathcal{T}$ is an arbitrary natural topology and $U\in\mathcal{T}$, then $U\cap\DMC_{[n],[m]}^{(s)}$ is open in $\DMC_{[n],[m]}^{(s)}$ for every $n,m\geq 1$, so $U\in\mathcal{T}_{s,\ast,\ast}^{(s)}$. We conclude that $\mathcal{T}_{s,\ast,\ast}^{(s)}$ is the finest natural topology.

\vspace*{3mm}

We can also characterize the strongly closed subsets of $\DMC_{\ast,\ast}^{(s)}$ in terms of the closed sets of the $\DMC_{[n],[m]}^{(s)}$ spaces:
\begin{align*}
F\;\text{is strongly closed in}\;&\textstyle\DMC_{\ast,\ast}^{(s)}\;\;\\
&\Leftrightarrow\;\;\textstyle\DMC_{\ast,\ast}^{(s)}\setminus F\;\text{is strongly open in}\;\textstyle\DMC_{\ast,\ast}^{(s)}\\
&\textstyle\Leftrightarrow\;\;\left(\DMC_{\ast,\ast}^{(s)}\setminus F\right)\cap\DMC_{[n],[m]}^{(s)}\;\text{is open in}\;\DMC_{[n],[m]}^{(s)},\;\;\forall n,m\geq 1\\
&\textstyle\Leftrightarrow\;\;\DMC_{[n],[m]}^{(s)}\setminus \left(F\cap\DMC_{[n],[m]}^{(s)}\right)\;\text{is open in}\;\DMC_{[n],[m]}^{(s)},\;\;\forall n,m\geq 1\\
&\textstyle\Leftrightarrow\;\;F\cap\DMC_{[n],[m]}^{(s)}\;\text{is closed in}\;\DMC_{[n],[m]}^{(s)},\;\;\forall n,m\geq 1.
\end{align*}

\begin{mylem}
For every subset $U$ of $\DMC_{\ast,\ast}^{(s)}$, we have:
\begin{itemize}
\item $U$ is strongly open if and only if $U\cap\DMC_{[n],[n]}^{(s)}$ is open in $\DMC_{[n],[n]}^{(s)}$ for every $n\geq 1$.
\item $U$ is strongly closed if and only if $U\cap\DMC_{[n],[n]}^{(s)}$ is closed in $\DMC_{[n],[n]}^{(s)}$ for every $n\geq 1$.
\end{itemize}
\end{mylem}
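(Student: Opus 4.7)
The plan is to reduce the characterization of strongly open/closed sets on rectangles $\DMC_{[n],[m]}^{(s)}$ (already established just above the lemma) to the diagonal case $\DMC_{[n],[n]}^{(s)}$, using the canonical embedding of Section \ref{subsecEmbedShanEquiv}. One direction is essentially trivial in both parts: if $U \cap \DMC_{[n],[m]}^{(s)}$ is open (resp.\ closed) in $\DMC_{[n],[m]}^{(s)}$ for all $n,m \geq 1$, then specializing to $m = n$ gives the desired conclusion on the diagonal. So the work is in the converse.

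For the converse in the open case, I would fix $n, m \geq 1$, set $k = \max\{n,m\}$, and use the identification from Proposition \ref{propEmbedShanEquiv} that makes $\DMC_{[n],[m]}^{(s)}$ a (compact) subspace of $\DMC_{[k],[k]}^{(s)}$ whose inherited topology coincides with the intrinsic $\mathcal{T}_{[n],[m]}^{(s)}$. By the diagonal hypothesis, $U \cap \DMC_{[k],[k]}^{(s)}$ is open in $\DMC_{[k],[k]}^{(s)}$. Intersecting with $\DMC_{[n],[m]}^{(s)}$ and using the subspace characterization yields
\[
\bigl(U \cap \DMC_{[k],[k]}^{(s)}\bigr) \cap \DMC_{[n],[m]}^{(s)} = U \cap \DMC_{[n],[m]}^{(s)},
\]
which is therefore open in $\DMC_{[n],[m]}^{(s)}$. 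Combined with the already-established characterization of strongly open sets in terms of $\DMC_{[n],[m]}^{(s)}$ for all $n,m$, this finishes the open part.

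For the closed case, the same reduction works, but I also need the fact that $\DMC_{[n],[m]}^{(s)}$ sits as a \emph{closed} subset of $\DMC_{[k],[k]}^{(s)}$. This follows because $\DMC_{[n],[m]}^{(s)}$ is compact by Theorem \ref{theDMCXYs} and $\DMC_{[k],[k]}^{(s)}$ is Hausdorff (being metrizable via $d_{[k],[k]}^{(s)}$), so compact subsets are closed. Then if $F \cap \DMC_{[k],[k]}^{(s)}$ is closed in $\DMC_{[k],[k]}^{(s)}$, its intersection $F \cap \DMC_{[n],[m]}^{(s)}$ with the closed subspace $\DMC_{[n],[m]}^{(s)}$ is closed in $\DMC_{[k],[k]}^{(s)}$, hence closed in the subspace topology on $\DMC_{[n],[m]}^{(s)}$.

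The only mild subtlety, and what I would treat most carefully, is the invocation of Proposition \ref{propEmbedShanEquiv} to justify that the subspace topology induced on $\DMC_{[n],[m]}^{(s)}$ from $\DMC_{[k],[k]}^{(s)}$ agrees with $\mathcal{T}_{[n],[m]}^{(s)}$; but this is precisely the content of the canonical identification made at the end of Section \ref{subsecEmbedShanEquiv}. No minimax, compactness argument, or analytic estimate is needed; the proof is purely a point-set manipulation built on the previous paragraph's explicit characterization.
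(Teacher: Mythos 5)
Your proposal is correct and follows essentially the same route as the paper: the trivial direction by specializing to $m=n$, and the converse by fixing $n,m$, setting $k=\max\{n,m\}$, and intersecting the open (resp.\ closed) set $U\cap\DMC_{[k],[k]}^{(s)}$ with the canonically embedded subspace $\DMC_{[n],[m]}^{(s)}\subset\DMC_{[k],[k]}^{(s)}$. The only difference is cosmetic: in the closed case you invoke closedness of $\DMC_{[n],[m]}^{(s)}$ in $\DMC_{[k],[k]}^{(s)}$, which is true but unnecessary, since a closed set intersected with any subspace is closed in the subspace topology.
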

\begin{proof}
If $U$ is strongly open then $U\cap\DMC_{[n],[m]}^{(s)}$ is open in $\DMC_{[n],[m]}^{(s)}$ for every $n,m\geq 1$. This implies that $U\cap\DMC_{[n],[n]}^{(s)}$ is open in $\DMC_{[n],[n]}^{(s)}$ for every $n\geq 1$.

Conversely, assume that $U\cap\DMC_{[n],[n]}^{(s)}$ is open in $\DMC_{[n],[n]}^{(s)}$ for every $n\geq 1$. Fix $n,m\geq 1$ and let $k=\max\{n,m\}$. We have $\DMC_{[n],[m]}^{(s)}\subset\DMC_{[k],[k]}^{(s)}$. Since $U\cap \DMC_{[k],[k]}^{(s)}$ is open in $\DMC_{[k],[k]}^{(s)}$, the set $U\cap\DMC_{[n],[m]}^{(s)}=(U\cap \DMC_{[k],[k]}^{(s)})\cap \DMC_{[n],[m]}^{(s)}$ is open in $\DMC_{[n],[m]}^{(s)}$. Therefore, $U\cap\DMC_{[n],[m]}^{(s)}$ is open in $\DMC_{[n],[m]}^{(s)}$ for every $n,m\geq 1$, which implies that $U$ is strongly open.

We can similarly show that $U$ is strongly closed if and only if $U\cap\DMC_{[n],[n]}^{(s)}$ is closed in $\DMC_{[n],[n]}^{(s)}$ for every $n\geq 1$.
\end{proof}

\vspace*{3mm}

Since $\DMC_{[n],[n]}^{(s)}$ is metrizable for every $n\geq 1$, it is also normal. We can use this fact to prove that the strong topology on $\DMC_{\ast,\ast}^{(s)}$ is normal:

\begin{mylem}
\label{lemDMCXsNorm}
$(\DMC_{\ast,\ast}^{(s)},\mathcal{T}_{s,\ast,\ast}^{(s)})$ is normal.
\end{mylem}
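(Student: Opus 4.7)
The plan is to prove normality by an inductive construction across the filtration $\DMC_{[1],[1]}^{(s)}\subset\DMC_{[2],[2]}^{(s)}\subset\cdots$, exploiting the fact that each $\DMC_{[n],[n]}^{(s)}$ is compact metrizable (by Theorem~\ref{theDMCXYs}) and therefore normal. The overall pattern is the standard argument that a strict inductive limit of a nested sequence of normal spaces, in which each stage is closed in the next, is itself normal.

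Let $F_1,F_2$ be disjoint strongly closed subsets of $\DMC_{\ast,\ast}^{(s)}$. By the characterization of strongly closed sets established just before the lemma, $F_i\cap\DMC_{[n],[n]}^{(s)}$ is closed in $\DMC_{[n],[n]}^{(s)}$ for every $n\geq 1$. The plan is to build inductively two sequences of open sets $U_1^{(n)},U_2^{(n)}\subset\DMC_{[n],[n]}^{(s)}$ satisfying the following invariants for every $n$:
\begin{itemize}
\item $F_i\cap\DMC_{[n],[n]}^{(s)}\subset U_i^{(n)}$ for $i=1,2$;
\item the closures $\overline{U_1^{(n)}}$ and $\overline{U_2^{(n)}}$ (taken in $\DMC_{[n],[n]}^{(s)}$) are disjoint;
\item $\overline{U_i^{(n)}}\subset U_i^{(n+1)}$ for $i=1,2$.
\end{itemize}
For the base case, I use normality of $\DMC_{[1],[1]}^{(s)}$ to separate $F_1\cap\DMC_{[1],[1]}^{(s)}$ and $F_2\cap\DMC_{[1],[1]}^{(s)}$ by open sets with disjoint closures (obtained by applying the standard "normal $\Rightarrow$ open sets with disjoint closures" refinement: first shrink around $F_1$, then shrink around $F_2$ inside the complement of the first closure).

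For the inductive step, I define $A_i=\overline{U_i^{(n)}}\cup\bigl(F_i\cap\DMC_{[n+1],[n+1]}^{(s)}\bigr)$. Since $\DMC_{[n],[n]}^{(s)}$ is compact and therefore closed in $\DMC_{[n+1],[n+1]}^{(s)}$, the set $\overline{U_i^{(n)}}$ is also closed in $\DMC_{[n+1],[n+1]}^{(s)}$, so $A_1$ and $A_2$ are closed in $\DMC_{[n+1],[n+1]}^{(s)}$. The disjointness of $A_1$ and $A_2$ breaks into four cases: $\overline{U_1^{(n)}}\cap\overline{U_2^{(n)}}=\emptyset$ by the inductive invariant; $F_1\cap F_2=\emptyset$ by hypothesis; and the two cross-terms $\overline{U_i^{(n)}}\cap F_j$ (for $i\neq j$) are empty because $\overline{U_i^{(n)}}\subset\DMC_{[n],[n]}^{(s)}$ and $\overline{U_i^{(n)}}\cap F_j\cap\DMC_{[n],[n]}^{(s)}\subset\overline{U_i^{(n)}}\cap U_j^{(n)}=\emptyset$. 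Applying normality of $\DMC_{[n+1],[n+1]}^{(s)}$ to the disjoint closed pair $A_1,A_2$ then yields the desired $U_1^{(n+1)},U_2^{(n+1)}$ with disjoint closures and containing $A_1,A_2$ respectively.

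Finally I set $U_i=\bigcup_{n\geq 1}U_i^{(n)}$. To verify $U_i$ is strongly open, I fix $m\geq 1$ and note that since $U_i^{(n)}\subset U_i^{(n+1)}$ for all $n$, one has $U_i\cap\DMC_{[m],[m]}^{(s)}=\bigcup_{n\geq m}\bigl(U_i^{(n)}\cap\DMC_{[m],[m]}^{(s)}\bigr)$, which is a union of relatively open subsets of $\DMC_{[m],[m]}^{(s)}$ and hence open. The inclusion $F_i\subset U_i$ holds because every $\hat{W}\in F_i$ belongs to some $\DMC_{[n],[n]}^{(s)}$, hence to $U_i^{(n)}$. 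Disjointness $U_1\cap U_2=\emptyset$ follows because any point in the intersection would lie in $U_1^{(n)}\cap U_2^{(n')}$ for some $n,n'$; taking $N=\max(n,n')$ and using the chain $U_i^{(n)}\subset U_i^{(N)}$ places the point in $U_1^{(N)}\cap U_2^{(N)}\subset\overline{U_1^{(N)}}\cap\overline{U_2^{(N)}}=\emptyset$, a contradiction.

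The main subtlety, and the only place where genuine care is needed, is the coherence of the construction: one must ensure $U_i^{(n+1)}$ actually extends $U_i^{(n)}$ rather than merely separating at level $n+1$ in an unrelated way. That is why the invariant $\overline{U_i^{(n)}}\subset U_i^{(n+1)}$ (and not just $U_i^{(n)}\subset U_i^{(n+1)}$) is essential: it is exactly what allows the inductive step to be phrased as a single application of normality to the enlarged closed sets $A_1,A_2$ in $\DMC_{[n+1],[n+1]}^{(s)}$, and it is also what makes the union $\bigcup_n U_i^{(n)}$ behave as a nondecreasing union, which is what guarantees strong openness.
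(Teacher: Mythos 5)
Your proof is correct and follows essentially the same strategy as the paper's: an inductive construction along the filtration $\DMC_{[1],[1]}^{(s)}\subset\DMC_{[2],[2]}^{(s)}\subset\cdots$, enlarging the separating open sets at each stage by applying normality of the compact metrizable level $\DMC_{[n+1],[n+1]}^{(s)}$ to the union of the previous stage's closed ``buffer'' with the new slices of $F_1,F_2$, then taking increasing unions and checking strong openness levelwise. The only cosmetic difference is that you use the closures $\overline{U_i^{(n)}}$ where the paper carries along auxiliary closed sets $K_n\supset U_n$; these play identical roles.
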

\begin{proof}
See Appendix \ref{appDMCXsNorm}.
\end{proof}

\vspace*{3mm}

The following theorem shows that the strong topology satisfies many desirable properties.

\begin{mythe}
\label{theDMCXs}
$(\DMC_{\ast,\ast}^{(s)},\mathcal{T}_{s,\ast,\ast}^{(s)})$ is a compactly generated, sequential and $T_4$ space.
\end{mythe}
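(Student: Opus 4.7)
The plan is to handle the three properties separately, each reducing to characterizations of strongly open and strongly closed subsets of $\DMC_{\ast,\ast}^{(s)}$ in terms of their traces on the compact, metrizable subspaces $\DMC_{[n],[n]}^{(s)}$.

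For the $T_4$ property, normality is exactly Lemma \ref{lemDMCXsNorm}, so only the $T_1$ axiom remains. Fix $\hat{W}\in\DMC_{\ast,\ast}^{(s)}$; by the characterization of strongly closed sets established just before Lemma \ref{lemDMCXsNorm}, it suffices to show that $\{\hat{W}\}\cap \DMC_{[n],[n]}^{(s)}$ is closed in $\DMC_{[n],[n]}^{(s)}$ for every $n\geq 1$. That intersection is either empty or equal to $\{\hat{W}\}$, and since $\DMC_{[n],[n]}^{(s)}$ is metrizable by Theorem \ref{theDMCXYs}, singletons are closed there.

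For the compactly generated property, I would show that $U\subset \DMC_{\ast,\ast}^{(s)}$ is strongly open if and only if $U\cap K$ is open in the subspace topology on $K$ for every compact $K\subset \DMC_{\ast,\ast}^{(s)}$. The forward direction is automatic (openness restricts to every subspace). For the converse, I would apply the hypothesis with $K=\DMC_{[n],[n]}^{(s)}$, which is compact by Theorem \ref{theDMCXYs}, obtaining that $U\cap \DMC_{[n],[n]}^{(s)}$ is open in $\DMC_{[n],[n]}^{(s)}$ for every $n\geq 1$; the characterization lemma preceding Lemma \ref{lemDMCXsNorm} then gives that $U$ is strongly open.

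For sequentiality, the cleanest route is to observe that $(\DMC_{\ast,\ast},\mathcal{T}_{s,\ast,\ast})$, being the disjoint union of countably many compact metric spaces, is itself metrizable, hence first-countable, and in particular sequential. Since $\mathcal{T}_{s,\ast,\ast}^{(s)}$ is by definition the quotient topology, sequentiality is inherited: given a sequentially closed $F\subset \DMC_{\ast,\ast}^{(s)}$, its preimage under the projection onto the $R_{\ast,\ast}^{(s)}$-equivalence classes is sequentially closed in $\DMC_{\ast,\ast}$ (because the projection is continuous and therefore preserves convergent sequences), hence closed by sequentiality of $\DMC_{\ast,\ast}$, which by the quotient-topology characterization forces $F$ to be strongly closed. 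The mildest obstacle is this sequentiality step, where one must invoke the stability of sequential spaces under quotients; the $T_4$ and compactly generated parts follow almost immediately from the lemmas stated above.
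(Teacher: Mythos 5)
Your proof is correct, and two of its three parts ($T_1$ via the trace characterization of strongly closed sets plus Lemma \ref{lemDMCXsNorm} for normality, and sequentiality via metrizability of $(\DMC_{\ast,\ast},\mathcal{T}_{s,\ast,\ast})$ and the stability of sequential spaces under quotients) coincide with the paper's argument; you merely spell out the quotient-preservation of sequentiality, which the paper cites as a known fact. The genuine difference is in the compactly generated part. The paper argues abstractly: $(\DMC_{\ast,\ast},\mathcal{T}_{s,\ast,\ast})$ is metrizable hence compactly generated, the quotient was shown to be Hausdorff, and a Hausdorff quotient of a compactly generated space is compactly generated. You instead verify the $k$-space property directly: the subspace topology that $\DMC_{[n],[n]}^{(s)}$ inherits from the strong topology is $\mathcal{T}_{[n],[n]}^{(s)}$ (established in Section \ref{subsecInStrongTop}), which is compact by Theorem \ref{theDMCXYs}, so any $U$ whose trace on every compact subspace is relatively open has, in particular, open trace on each $\DMC_{[n],[n]}^{(s)}$, and the unnamed lemma preceding Lemma \ref{lemDMCXsNorm} then forces $U$ to be strongly open. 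Your route is more self-contained and exhibits an explicit generating family of compacta, at the cost of a few extra lines; the paper's route is shorter but leans on the general theorem about quotients of compactly generated spaces and on having first established Hausdorffness. Both are valid, and your direct argument has the minor advantage of being insensitive to whether one phrases compact generation via open or closed traces.
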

\begin{proof}
Since $(\DMC_{\ast,\ast},\mathcal{T}_{s,\ast,\ast})$ is metrizable, it is sequential. Therefore, $(\DMC_{\ast,\ast}^{(s)},\mathcal{T}_{s,\ast,\ast}^{(s)})$, which is the quotient of a sequential space, is sequential.

Let us now show that $\DMC_{\ast,\ast}^{(s)}$ is $T_4$. Fix $\hat{W}\in\DMC_{\ast,\ast}^{(s)}$. For every $n\geq 1$, we have $\{\hat{W}\}\cap \DMC_{[n],[n]}^{(s)}$ is either $\o$ or $\{\hat{W}\}$ depending on whether $\hat{W}\in \DMC_{[n],[n]}^{(s)}$ or not. Since $\DMC_{[n],[n]}^{(s)}$ is metrizable, it is $T_1$ and so singletons are closed in $\DMC_{[n],[n]}^{(s)}$. We conclude that in all cases, $\{\hat{W}\}\cap \DMC_{[n],[n]}^{(s)}$ is closed in $\DMC_{[n],[n]}^{(s)}$ for every $n\geq 1$. Therefore, $\{\hat{W}\}$ is strongly closed in $\DMC_{\ast,\ast}^{(s)}$. This shows that $(\DMC_{\ast,\ast}^{(s)},\mathcal{T}_{s,\ast,\ast}^{(s)})$ is $T_1$. On the other hand, Lemma \ref{lemDMCXsNorm} shows that $(\DMC_{\ast,\ast}^{(s)},\mathcal{T}_{s,\ast,\ast}^{(s)})$ is normal. This means that $(\DMC_{\ast,\ast}^{(s)},\mathcal{T}_{s,\ast,\ast}^{(s)})$ is $T_4$, which implies that it is Hausdorff.

Now since $(\DMC_{\ast,\ast},\mathcal{T}_{s,\ast,\ast})$ is metrizable, it is compactly generated. On the other hand, the quotient space $(\DMC_{\ast,\ast}^{(s)},\mathcal{T}_{s,\ast,\ast}^{(s)})$ was shown to be Hausdorff. We conclude that $(\DMC_{\ast,\ast}^{(s)},\mathcal{T}_{s,\ast,\ast}^{(s)})$ is compactly generated.
\end{proof}

\begin{myrem}
It is possible to show that if Conjecture \ref{conjShanInterior} is true, then we have:
\begin{itemize}
\item $\mathcal{T}_{s,\ast,\ast}^{(s)}$ is not first-countable anywhere.
\item A subset of $\DMC_{\ast,\ast}^{(s)}$ is compact in $\mathcal{T}_{s,\ast,\ast}$ if and only if it is rank-bounded and strongly closed.
\end{itemize}
\end{myrem}

\section{The BRM metric on the space of Shannon-equivalent channels}

We define the \emph{BRM metric} on $\DMC_{\ast,\ast}^{(s)}$ as follows:
$$d_{\ast,\ast}^{(s)}(\hat{W}_1,\hat{W}_2)=\sup_{\substack{n,m\geq 1,\\l\in{\Delta}_{[n]\times[m]}}}|\$_{\opt}(l,\hat{W}_1)-\$_{\opt}(l,\hat{W}_2)|.$$

Let $\mathcal{T}_{\ast,\ast}^{(s)}$ be the metric topology on $\DMC_{\ast,\ast}^{(s)}$ that is induced by $d_{\ast,\ast}^{(s)}$.  We call $\mathcal{T}_{\ast,\ast}^{(s)}$ the \emph{BRM topology} on $\DMC_{\ast,\ast}^{(s)}$. 

Clearly, $\mathcal{T}_{\ast,\ast}^{(s)}$ is natural because the restriction of $d_{\ast,\ast}^{(s)}$ on $\DMC_{[n],[m]}^{(s)}$ is exactly $d_{[n],[m]}^{(s)}$, and the topology induced by $d_{[n],[m]}^{(s)}$ is $\mathcal{T}_{[n],[m]}^{(s)}$ (Theorem \ref{theDMCXYs}).

\section{Continuity of channel parameters and operations in the strong topology}

\subsection{Channel parameters}

For every $W\in\DMC_{\ast,\ast}$, $C(W)$ depends only on the Shannon-equivalence class of $W$ \cite{ShannonDegrad}. Therefore, for every $\hat{W}\in\DMC_{\ast,\ast}^{(s)}$, we can define $C(\hat{W}):=C(W')$ for any $W'\in\hat{W}$. We can define $P_{e,n,M}(\hat{W})$ similarly.

\begin{myprop}
\label{propContParamDMCXYsStr}
Let $\mathcal{X}$ and $\mathcal{Y}$ be two finite sets. We have:
\begin{itemize}
\item $C:\DMC_{\mathcal{X},\mathcal{Y}}^{(s)}\rightarrow \mathbb{R}^+$ is continuous on $(\DMC_{\mathcal{X},\mathcal{Y}}^{(s)},\mathcal{T}_{\mathcal{X},\mathcal{Y}}^{(s)})$.
\item For every $n\geq 1$ and every $M\geq 1$, the mapping $P_{e,n,M}:\DMC_{\mathcal{X},\mathcal{Y}}^{(s)}\rightarrow [0,1]$ is continuous on $(\DMC_{\mathcal{X},\mathcal{Y}}^{(s)},\mathcal{T}_{\mathcal{X},\mathcal{Y}}^{(s)})$.
\end{itemize}
\end{myprop}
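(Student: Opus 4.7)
The plan is to exploit the universal property of the quotient topology. Since $\mathcal{T}_{\mathcal{X},\mathcal{Y}}^{(s)}$ is defined as the quotient $\mathcal{T}_{\mathcal{X},\mathcal{Y}}/R_{\mathcal{X},\mathcal{Y}}^{(s)}$, and since the preceding discussion has already established that both $C$ and $P_{e,n,M}$ are well-defined as functions of the Shannon-equivalence class (so they descend along the projection $\Proj : \DMC_{\mathcal{X},\mathcal{Y}} \to \DMC_{\mathcal{X},\mathcal{Y}}^{(s)}$), Lemma \ref{lemQuotientFunction} reduces each claim to showing that the corresponding function on $(\DMC_{\mathcal{X},\mathcal{Y}}, d_{\mathcal{X},\mathcal{Y}})$ — i.e., before quotienting — is continuous. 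Both resulting sub-claims are standard.

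For $P_{e,n,M}$, the point is that $\mathcal{M}$, $\mathcal{X}$, and $n$ are all finite, so there are only finitely many $(n,M)$-encoders $\mathcal{E}:\mathcal{M}\to\mathcal{X}^n$. For each fixed $\mathcal{E}$, the expression
$$P_{e,\mathcal{E}}(W) = 1 - \frac{1}{M}\sum_{y_1^n\in\mathcal{Y}^n}\max_{m\in\mathcal{M}}\prod_{i=1}^n W(y_i|\mathcal{E}_i(m))$$
is a finite sum of pointwise maxima of products of entries of $W$, hence continuous in $W$. Taking the minimum over a finite collection of continuous functions preserves continuity, so $W\mapsto P_{e,n,M}(W)$ is continuous on $\DMC_{\mathcal{X},\mathcal{Y}}$.

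For $C$, I would write $C(W)=\max_{p\in\Delta_{\mathcal{X}}} I(p;W)$, observe that $I(p;W)$ is jointly continuous on the compact product $\Delta_{\mathcal{X}}\times \DMC_{\mathcal{X},\mathcal{Y}}$ under the usual convention $0\log 0 = 0$, and invoke Berge's maximum theorem (with the constant compact-valued correspondence $W\mapsto \Delta_{\mathcal{X}}$) to conclude that $C$ is continuous in $W$. Alternatively, this continuity is already recorded in \cite{RajDMCTop} and can simply be cited.

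Combining the two sub-claims with Lemma \ref{lemQuotientFunction} yields both bullet points of the proposition. There is no serious obstacle here: the proposition is essentially a formal consequence of (i) the universal property of the quotient topology and (ii) the well-known continuity of $C$ and $P_{e,n,M}$ on the unquotiented channel space. The only substantive check is the joint continuity of $I(p;W)$ at boundary points where some entries of $W$ vanish, which is handled by the standard convention on $0\log 0$.
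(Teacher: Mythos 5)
Your proposal is correct and follows exactly the paper's route: both arguments reduce the claim to the continuity of $C$ and $P_{e,n,M}$ on the unquotiented space $(\DMC_{\mathcal{X},\mathcal{Y}},d_{\mathcal{X},\mathcal{Y}})$ together with their invariance under Shannon-equivalence, and then invoke Lemma \ref{lemQuotientFunction}. The only difference is that you spell out the standard continuity facts (finitely many encoders for $P_{e,n,M}$; Berge's maximum theorem for $C$) that the paper simply takes as known.
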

\begin{proof}
Since $C:\DMC_{\mathcal{X},\mathcal{Y}}\rightarrow\mathbb{R}^+$ is continuous, and since $C(W)$ depends only on the $R_{\mathcal{X},\mathcal{Y}}^{(s)}$-equivalence class of $W$, Lemma \ref{lemQuotientFunction} implies that $C:\DMC_{\mathcal{X},\mathcal{Y}}^{(s)}\rightarrow \mathbb{R}^+$ is continuous on $(\DMC_{\mathcal{X},\mathcal{Y}}^{(s)},\mathcal{T}_{\mathcal{X},\mathcal{Y}}^{(s)})$. We can show the continuity of $P_{e,n,M}$ on $(\DMC_{\mathcal{X},\mathcal{Y}}^{(s)},\mathcal{T}_{\mathcal{X},\mathcal{Y}}^{(s)})$ similarly.
\end{proof}

\vspace*{3mm}

The following lemma provides a way to check whether a mapping defined on $(\DMC_{\ast,\ast}^{(s)},\mathcal{T}_{s,\ast,\ast}^{(s)})$ is continuous:

\begin{mylem}
\label{lemContinuityForStrongTopShan}
Let $(S,\mathcal{V})$ be an arbitrary topological space. A mapping $f:\DMC_{\ast,\ast}^{(s)}\rightarrow S$ is continuous on $(\DMC_{\ast,\ast}^{(s)},\mathcal{T}_{s,\ast,\ast}^{(s)})$ if and only if it is continuous on $(\DMC_{[n],[n]}^{(s)},\mathcal{T}_{[n],[n]}^{(s)})$ for every $n\geq 1$.
\end{mylem}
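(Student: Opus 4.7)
The plan is to exploit the characterization of strongly open sets that was proved just before Lemma \ref{lemDMCXsNorm}, namely that $U\subset\DMC_{\ast,\ast}^{(s)}$ is strongly open if and only if $U\cap\DMC_{[n],[n]}^{(s)}$ is open in $\DMC_{[n],[n]}^{(s)}$ for every $n\geq 1$. Both directions of the desired equivalence will follow essentially immediately from this characterization, so the ``lemma'' is really a packaging result rather than something with a substantive obstacle.

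For the forward implication, I would first note that since the strong topology is natural (as shown in Section \ref{subsecInStrongTop}), the subspace topology inherited by $\DMC_{[n],[n]}^{(s)}$ from $(\DMC_{\ast,\ast}^{(s)},\mathcal{T}_{s,\ast,\ast}^{(s)})$ is exactly $\mathcal{T}_{[n],[n]}^{(s)}$. Therefore the inclusion $\iota_n:(\DMC_{[n],[n]}^{(s)},\mathcal{T}_{[n],[n]}^{(s)})\hookrightarrow(\DMC_{\ast,\ast}^{(s)},\mathcal{T}_{s,\ast,\ast}^{(s)})$ is continuous. If $f$ is continuous on $(\DMC_{\ast,\ast}^{(s)},\mathcal{T}_{s,\ast,\ast}^{(s)})$, then the restriction $f|_{\DMC_{[n],[n]}^{(s)}}=f\circ \iota_n$ is a composition of continuous mappings, hence continuous, for every $n\geq 1$.

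For the backward implication, assume that $f$ is continuous on $(\DMC_{[n],[n]}^{(s)},\mathcal{T}_{[n],[n]}^{(s)})$ for every $n\geq 1$. Fix any open set $V\in\mathcal{V}$. For each $n\geq 1$ we have
$$f^{-1}(V)\cap \DMC_{[n],[n]}^{(s)}=\bigl(f|_{\DMC_{[n],[n]}^{(s)}}\bigr)^{-1}(V),$$
which is open in $(\DMC_{[n],[n]}^{(s)},\mathcal{T}_{[n],[n]}^{(s)})$ by the continuity hypothesis. Invoking the characterization of strongly open sets recalled above, it follows that $f^{-1}(V)$ is strongly open in $\DMC_{\ast,\ast}^{(s)}$. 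Since this holds for every open $V\subset S$, the mapping $f$ is continuous on $(\DMC_{\ast,\ast}^{(s)},\mathcal{T}_{s,\ast,\ast}^{(s)})$, completing the equivalence.

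There is no real ``hard step'' here; the work was done in establishing the characterization of strongly open (and closed) subsets of $\DMC_{\ast,\ast}^{(s)}$ in terms of their intersections with the square subspaces $\DMC_{[n],[n]}^{(s)}$. The only minor subtlety is remembering that one may restrict attention to the square indices $(n,n)$ rather than general $(n,m)$, which is justified by the embedding $\DMC_{[n],[m]}^{(s)}\subset\DMC_{[k],[k]}^{(s)}$ for $k=\max\{n,m\}$ described in Section \ref{subsecEmbedShanEquiv}.
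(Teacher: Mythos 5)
Your proof is correct and follows essentially the same route as the paper, which establishes the equivalence in one chain: $f$ is continuous iff $f^{-1}(V)$ is strongly open for all open $V$, iff $f^{-1}(V)\cap\DMC_{[n],[n]}^{(s)}$ is open in $\DMC_{[n],[n]}^{(s)}$ for all $n$ and all $V$, iff each restriction is continuous. Your version merely unpacks the two directions separately and makes explicit the use of the naturality of the strong topology for the forward implication; there is no substantive difference.
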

\begin{proof}
\begin{align*}
\textstyle f\;\text{is continuous on}\;(\DMC_{\ast,\ast}^{(s)},\mathcal{T}_{s,\ast,\ast}^{(s)})\;\;&\textstyle\Leftrightarrow\;\; f^{-1}(V)\in \mathcal{T}_{s,\ast,\ast}^{(s)},\;\;\forall V\in\mathcal{V}\\
&\textstyle\Leftrightarrow\;\; f^{-1}(V)\cap \DMC_{[n],[n]}^{(s)} \in \mathcal{T}_{[n],[n]}^{(s)},\;\;\forall n\geq 1,\; \forall V\in\mathcal{V}\\
&\textstyle\Leftrightarrow\;\; f\;\text{is continuous on}\; (\DMC_{[n],[n]}^{(s)},\mathcal{T}_{[n],[n]}^{(s)}),\;\;\forall n\geq 1.
\end{align*}
\end{proof}

\begin{myprop}
\label{propContParamDMCsStr}
We have:
\begin{itemize}
\item $C:\DMC_{\ast,\ast}^{(s)}\rightarrow \mathbb{R}^+$ is continuous on $(\DMC_{\ast,\ast}^{(s)},\mathcal{T}_{s,\ast,\ast}^{(s)})$.
\item For every $n\geq 1$ and every $M\geq 1$, the mapping $P_{e,n,M}:\DMC_{\ast,\ast}^{(s)}\rightarrow [0,1]$ is continuous on $(\DMC_{\ast,\ast}^{(s)},\mathcal{T}_{s,\ast,\ast}^{(s)})$.
\end{itemize}
\end{myprop}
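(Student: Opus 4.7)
The plan is to reduce the claim to Proposition \ref{propContParamDMCXYsStr} via Lemma \ref{lemContinuityForStrongTopShan}. By that lemma, to establish continuity of $C$ on $(\DMC_{\ast,\ast}^{(s)},\mathcal{T}_{s,\ast,\ast}^{(s)})$ it is enough to verify continuity of the restriction of $C$ to $(\DMC_{[n],[n]}^{(s)},\mathcal{T}_{[n],[n]}^{(s)})$ for every $n\geq 1$; the same reduction applies to $P_{e,n,M}$.

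First, I would check that these restrictions really do coincide with the mappings studied in Proposition \ref{propContParamDMCXYsStr} (taken with $\mathcal{X}=\mathcal{Y}=[n]$). This is a sanity check on the definitions: recall that we defined $C(\hat{W}):=C(W')$ for any $W'\in\hat{W}$, which is well posed because capacity is invariant under Shannon-equivalence (Shannon's theorem cited in the excerpt), and similarly for $P_{e,n,M}$. Under the canonical identifications from Section \ref{subsecEmbedShanEquiv}, the last item of Proposition \ref{propEmbedShanEquiv} guarantees that every representative of the $R_{\ast,\ast}^{(s)}$-class of $\hat{W}\in\DMC_{[n],[n]}^{(s)}$ is Shannon-equivalent to every representative of the corresponding $R_{[n],[n]}^{(s)}$-class. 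Hence $C$ and $P_{e,n,M}$ agree on the two interpretations of $\hat{W}$, so the restriction to $\DMC_{[n],[n]}^{(s)}$ is exactly the map covered by Proposition \ref{propContParamDMCXYsStr}.

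Once this is verified, Proposition \ref{propContParamDMCXYsStr} supplies continuity of both $C$ and $P_{e,n,M}$ on $(\DMC_{[n],[n]}^{(s)},\mathcal{T}_{[n],[n]}^{(s)})$ for every $n\geq 1$, and Lemma \ref{lemContinuityForStrongTopShan} then immediately upgrades this to continuity on $(\DMC_{\ast,\ast}^{(s)},\mathcal{T}_{s,\ast,\ast}^{(s)})$.

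There is no real obstacle here; the argument is purely formal. The only step that requires any care is the bookkeeping of the first paragraph, namely confirming that the capacity and optimal error probability are well defined on $\DMC_{\ast,\ast}^{(s)}$ and that their restrictions to $\DMC_{[n],[n]}^{(s)}$ match the fixed-alphabet maps, which uses only the Shannon-invariance of $C$ and $P_{e,n,M}$ combined with the Shannon-equivalence preservation property of the canonical embedding in Proposition \ref{propEmbedShanEquiv}.
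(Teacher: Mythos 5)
Your argument is exactly the paper's: reduce to continuity on each $(\DMC_{[n],[n]}^{(s)},\mathcal{T}_{[n],[n]}^{(s)})$ via Lemma \ref{lemContinuityForStrongTopShan} and invoke Proposition \ref{propContParamDMCXYsStr}; the extra bookkeeping you include about the canonical identifications is a correct elaboration of details the paper leaves implicit. The proposal is correct.
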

\begin{proof}
The proposition follows from Proposition \ref{propContParamDMCXYsStr} and Lemma \ref{lemContinuityForStrongTopShan}.
\end{proof}

\subsection{Channel operations}

\label{subsecContOperShan}

Channel sums and products can be ``quotiented" by the Shannon-equivalence relation. We just need to realize that the Shannon-equivalence class of the resulting channel depends only on the Shannon-equivalence classes of the channels that were used in the operation \cite{ShannonDegrad}.

\begin{myprop}
\label{propContOperDMCXYs}
We have:
\begin{itemize}
\item The mapping $(\hat{W}_1,\overline{W}_2)\rightarrow \hat{W}_1\oplus \overline{W}_2$ from $\DMC_{\mathcal{X}_1,\mathcal{Y}_1}^{(s)}\times \DMC_{\mathcal{X}_2,\mathcal{Y}_2}^{(s)}$ to $\DMC_{\mathcal{X}_1\coprod \mathcal{X}_2,\mathcal{Y}_1\coprod\mathcal{Y}_2}^{(s)}$ is continuous.
\item The mapping $(\hat{W}_1,\overline{W}_2)\rightarrow \hat{W}_1\otimes \overline{W}_2$ from $\DMC_{\mathcal{X}_1,\mathcal{Y}_1}^{(s)}\times \DMC_{\mathcal{X}_2,\mathcal{Y}_2}^{(s)}$ to $\DMC_{\mathcal{X}_1\times\mathcal{X}_2,\mathcal{Y}_1\times\mathcal{Y}_2}^{(s)}$ is continuous.
\end{itemize}
\end{myprop}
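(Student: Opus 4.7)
The plan is to descend the continuity of $\oplus$ and $\otimes$ from the unquotiented channel spaces to the Shannon-equivalence quotients by combining Lemma \ref{lemQuotientFunction} with the product-quotient interchange provided by Corollary \ref{corQuotientProd}.

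First, I would observe that $\oplus:\DMC_{\mathcal{X}_1,\mathcal{Y}_1}\times \DMC_{\mathcal{X}_2,\mathcal{Y}_2}\to \DMC_{\mathcal{X}_1\coprod \mathcal{X}_2,\mathcal{Y}_1\coprod\mathcal{Y}_2}$ and $\otimes:\DMC_{\mathcal{X}_1,\mathcal{Y}_1}\times \DMC_{\mathcal{X}_2,\mathcal{Y}_2}\to \DMC_{\mathcal{X}_1\times \mathcal{X}_2,\mathcal{Y}_1\times\mathcal{Y}_2}$ are continuous in the metric topologies $\mathcal{T}_{\cdot,\cdot}$, since every entry of the output is a polynomial in the entries of the inputs (linear for $\oplus$, bilinear for $\otimes$). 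Composing each with the projection $\Proj$ onto the Shannon-equivalence classes of the corresponding codomain, which is continuous by the definition of the quotient topology $\mathcal{T}_{\cdot,\cdot}^{(s)}$, yields two continuous maps $F_{\oplus}$ and $F_{\otimes}$ from $\DMC_{\mathcal{X}_1,\mathcal{Y}_1}\times \DMC_{\mathcal{X}_2,\mathcal{Y}_2}$ into the target quotient space.

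Next, by the invariance fact recalled at the start of Subsection \ref{subsecContOperShan} (which can be verified directly by taking the appropriate sum or tensor product of the simulating pairs $(R_i,T_i)$ witnessing each Shannon-equivalence), the Shannon-equivalence class of $W_1\oplus W_2$ and of $W_1\otimes W_2$ depends only on $\hat{W}_1$ and $\overline{W}_2$. Hence $F_{\oplus}$ and $F_{\otimes}$ are constant on the classes of the product equivalence relation $R$ on $\DMC_{\mathcal{X}_1,\mathcal{Y}_1}\times \DMC_{\mathcal{X}_2,\mathcal{Y}_2}$ defined by $(W_1,W_2) R (W_1',W_2')\Leftrightarrow W_1 R_{\mathcal{X}_1,\mathcal{Y}_1}^{(s)} W_1'\text{ and }W_2 R_{\mathcal{X}_2,\mathcal{Y}_2}^{(s)} W_2'$. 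Lemma \ref{lemQuotientFunction} then delivers continuous transcendent maps on the quotient $\big(\DMC_{\mathcal{X}_1,\mathcal{Y}_1}\times \DMC_{\mathcal{X}_2,\mathcal{Y}_2}\big)/R$.

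Finally I would invoke Corollary \ref{corQuotientProd} to identify this quotient canonically and homeomorphically with the product $\DMC_{\mathcal{X}_1,\mathcal{Y}_1}^{(s)}\times \DMC_{\mathcal{X}_2,\mathcal{Y}_2}^{(s)}$. The hypotheses of the corollary hold because $\DMC_{\mathcal{X}_2,\mathcal{Y}_2}$ is a compact metric space and $\DMC_{\mathcal{X}_1,\mathcal{Y}_1}^{(s)}$ is compact and metrizable by Theorem \ref{theDMCXYs}, so both are locally compact and Hausdorff. Transporting $F_{\oplus}$ and $F_{\otimes}$ through this homeomorphism yields the desired continuity on the product of Shannon-equivalence spaces. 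The only non-mechanical step is verifying the hypotheses of Corollary \ref{corQuotientProd}, but as just noted these follow immediately from the compactness and metrizability results already established.
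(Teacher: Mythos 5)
Your proposal is correct and follows essentially the same route as the paper's proof: continuity of the raw operation composed with the projection, invariance under Shannon-equivalence, Lemma \ref{lemQuotientFunction} to pass to the quotient of the product, and Corollary \ref{corQuotientProd} (with the local compactness and Hausdorffness hypotheses verified via compactness and metrizability) to identify that quotient with the product of the quotients. The only cosmetic difference is that you treat $\oplus$ and $\otimes$ in parallel, whereas the paper writes out only the sum and declares the product analogous.
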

\begin{proof}
We only prove the continuity of the channel sum because the proof for the channel product is similar.

Let $\Proj:\DMC_{\mathcal{X}_1\coprod\mathcal{X}_2,\mathcal{Y}_1\coprod\mathcal{Y}_2}\rightarrow \DMC_{\mathcal{X}_1\coprod\mathcal{X}_2,\mathcal{Y}_1\coprod\mathcal{Y}_2}^{(s)}$ be the projection onto the $R_{\mathcal{X}_1\coprod\mathcal{X}_2,\mathcal{Y}_1\coprod\mathcal{Y}_2}^{(s)}$-equivalence classes. Define the mapping $f:\DMC_{\mathcal{X}_1,\mathcal{Y}_1}\times \DMC_{\mathcal{X}_2,\mathcal{Y}_2}\rightarrow \DMC_{\mathcal{X}_1\coprod\mathcal{X}_2,\mathcal{Y}_1\coprod\mathcal{Y}_2}^{(s)}$ as $f(W_1,W_2)=\Proj(W_1\oplus W_2)$. Clearly, $f$ is continuous.

Now define the equivalence relation $R$ on $\DMC_{\mathcal{X}_1,\mathcal{Y}_1}\times \DMC_{\mathcal{X}_2,\mathcal{Y}_2}$ as:
$$(W_1,W_2)R(W_1',W_2')\;\;\Leftrightarrow\;\; W_1 R_{\mathcal{X}_1,\mathcal{Y}_1}^{(s)}W_1'\;\text{and}\;W_2 R_{\mathcal{X}_2,\mathcal{Y}_2}^{(s)}W_2'.$$
The discussion before the proposition shows that $f(W_1,W_2)=\Proj(W_1\oplus W_2)$ depends only on the $R$-equivalence class of $(W_1,W_2)$. Lemma \ref{lemQuotientFunction} now shows that the transcendent map of $f$ defined on $(\DMC_{\mathcal{X}_1,\mathcal{Y}_1}\times \DMC_{\mathcal{X}_2,\mathcal{Y}_2})/R$ is continuous.

Notice that $(\DMC_{\mathcal{X}_1,\mathcal{Y}_1}\times \DMC_{\mathcal{X}_2,\mathcal{Y}_2})/R$ can be identified with $\DMC_{\mathcal{X}_1,\mathcal{Y}_1}^{(s)}\times \DMC_{\mathcal{X}_2,\mathcal{Y}_2}^{(s)}$. Therefore, we can define $f$ on $\DMC_{\mathcal{X}_1,\mathcal{Y}_1}^{(s)}\times \DMC_{\mathcal{X}_2,\mathcal{Y}_2}^{(s)}$ through this identification. Moreover, since $\DMC_{\mathcal{X}_1,\mathcal{Y}_1}$ and $\DMC_{\mathcal{X}_2,\mathcal{Y}_2}^{(s)}$ are locally compact and Hausdorff, Corollary \ref{corQuotientProd} implies that the canonical bijection between $(\DMC_{\mathcal{X}_1,\mathcal{Y}_1}\times \DMC_{\mathcal{X}_2,\mathcal{Y}_2})/R$ and $\DMC_{\mathcal{X}_1,\mathcal{Y}_1}^{(s)}\times \DMC_{\mathcal{X}_2,\mathcal{Y}_2}^{(s)}$ is a homeomorphism. 

Now since the mapping $f$ on $\DMC_{\mathcal{X}_1,\mathcal{Y}_1}^{(s)}\times \DMC_{\mathcal{X}_2,\mathcal{Y}_2}^{(s)}$ is just the channel sum, we conclude that the mapping $(\hat{W}_1,\overline{W}_2)\rightarrow \hat{W}_1\oplus \overline{W}_2$ from $\DMC_{\mathcal{X}_1,\mathcal{Y}_1}^{(s)}\times \DMC_{\mathcal{X}_2,\mathcal{Y}_2}^{(s)}$ to $\DMC_{\mathcal{X}_1\coprod\mathcal{X}_2,\mathcal{Y}_1\coprod\mathcal{Y}_2}^{(s)}$ is continuous.
\end{proof}

\begin{myprop}
\label{propContOperDMCsStr}
Assume that the space $\DMC_{\ast,\ast}^{(s)}$ is endowed with the strong topology. We have:
\begin{itemize}
\item The mapping $(\hat{W}_1,\overline{W}_2)\rightarrow \hat{W}_1\oplus \overline{W}_2$ from $\DMC_{\ast,\ast}^{(s)}\times \DMC_{\mathcal{X}_2,\mathcal{Y}_2}^{(s)}$ to $\DMC_{\ast,\ast}^{(s)}$ is continuous.
\item The mapping $(\hat{W}_1,\overline{W}_2)\rightarrow \hat{W}_1\otimes \overline{W}_2$ from $\DMC_{\ast,\ast}^{(s)}\times \DMC_{\mathcal{X}_2,\mathcal{Y}_2}^{(s)}$ to $\DMC_{\ast,\ast}^{(s)}$ is continuous.
\end{itemize}
\end{myprop}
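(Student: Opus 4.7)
The proof is structurally the same for channel sum and channel product, so I would focus on the sum. The strategy is to pull the question back through the quotient map on the first factor, decompose the resulting domain as a disjoint union of finite-alphabet pieces, and then invoke Proposition \ref{propContOperDMCXYs} on each piece.

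First I would apply Theorem \ref{theQuotientProd}. Since $(\DMC_{\mathcal{X}_2,\mathcal{Y}_2}^{(s)},\mathcal{T}_{\mathcal{X}_2,\mathcal{Y}_2}^{(s)})$ is compact Hausdorff by Theorem \ref{theDMCXYs}, and hence locally compact Hausdorff, the map $\Proj\times\mathrm{id}$ from $(\DMC_{\ast,\ast}\times \DMC_{\mathcal{X}_2,\mathcal{Y}_2}^{(s)},\mathcal{T}_{s,\ast,\ast}\otimes \mathcal{T}_{\mathcal{X}_2,\mathcal{Y}_2}^{(s)})$ onto $(\DMC_{\ast,\ast}^{(s)}\times \DMC_{\mathcal{X}_2,\mathcal{Y}_2}^{(s)},\mathcal{T}_{s,\ast,\ast}^{(s)}\otimes \mathcal{T}_{\mathcal{X}_2,\mathcal{Y}_2}^{(s)})$ is a quotient map. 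By the universal property of the quotient topology, the mapping $\phi:(\hat{W}_1,\overline{W}_2)\mapsto \hat{W}_1\oplus\overline{W}_2$ is continuous if and only if $\phi\circ(\Proj\times\mathrm{id})$ is continuous on $\DMC_{\ast,\ast}\times \DMC_{\mathcal{X}_2,\mathcal{Y}_2}^{(s)}$.

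Next I would use that products commute with disjoint unions, so $(\DMC_{\ast,\ast}\times \DMC_{\mathcal{X}_2,\mathcal{Y}_2}^{(s)},\mathcal{T}_{s,\ast,\ast}\otimes \mathcal{T}_{\mathcal{X}_2,\mathcal{Y}_2}^{(s)})$ is homeomorphic to the disjoint union $\coprod_{n,m\geq 1}(\DMC_{[n],[m]}\times \DMC_{\mathcal{X}_2,\mathcal{Y}_2}^{(s)})$, and continuity of $\phi\circ(\Proj\times\mathrm{id})$ amounts to continuity on each summand. On the $(n,m)$-summand, $\phi\circ(\Proj\times\mathrm{id})$ factors as
$$\DMC_{[n],[m]}\times \DMC_{\mathcal{X}_2,\mathcal{Y}_2}^{(s)} \xrightarrow{\Proj_{n,m}\times\mathrm{id}} \DMC_{[n],[m]}^{(s)}\times \DMC_{\mathcal{X}_2,\mathcal{Y}_2}^{(s)} \xrightarrow{\oplus} \DMC_{[n]\coprod \mathcal{X}_2,[m]\coprod \mathcal{Y}_2}^{(s)} \hookrightarrow \DMC_{\ast,\ast}^{(s)}.$$
The first arrow is a product of continuous maps, the second is continuous by Proposition \ref{propContOperDMCXYs}, and the inclusion is continuous because the strong topology $\mathcal{T}_{s,\ast,\ast}^{(s)}$ is natural and therefore induces the original quotient topology on each finite-alphabet subspace. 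Composing gives continuity on each piece, and hence continuity of $\phi$.

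The main obstacle to flag is the tempting shortcut of applying Corollary \ref{corQuotientProd} to identify $\DMC_{\ast,\ast}^{(s)}\times \DMC_{\mathcal{X}_2,\mathcal{Y}_2}^{(s)}$ with a single quotient $(\DMC_{\ast,\ast}\times \DMC_{\mathcal{X}_2,\mathcal{Y}_2})/R$. That corollary requires \emph{both} quotient spaces to be locally compact Hausdorff, but the remark following Theorem \ref{theDMCXs} indicates that $(\DMC_{\ast,\ast}^{(s)},\mathcal{T}_{s,\ast,\ast}^{(s)})$ is not locally compact anywhere (conditional on Conjecture \ref{conjShanInterior}), so Corollary \ref{corQuotientProd} is unavailable here. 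Quotienting only the first factor while leaving $\DMC_{\mathcal{X}_2,\mathcal{Y}_2}^{(s)}$ untouched sidesteps this, since the plain Theorem \ref{theQuotientProd} only demands that the unquotiented factor be locally compact Hausdorff, which it is.
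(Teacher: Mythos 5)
Your proof is correct, and it takes a somewhat different route from the paper's. You quotient only the first factor: Theorem \ref{theQuotientProd} (with the locally compact Hausdorff factor being the compact metric space $\DMC_{\mathcal{X}_2,\mathcal{Y}_2}^{(s)}$) makes $\Proj\times\mathrm{id}$ a quotient map, after which you decompose $\DMC_{\ast,\ast}\times\DMC_{\mathcal{X}_2,\mathcal{Y}_2}^{(s)}$ into summands and invoke Proposition \ref{propContOperDMCXYs} as a black box on each one. The paper instead starts one level lower, with the fully unquotiented product $\DMC_{\ast,\ast}\times\DMC_{\mathcal{X}_2,\mathcal{Y}_2}$: it checks continuity of $(W_1,W_2)\mapsto\hat{W}_1\oplus\overline{W}_2$ on each summand $\DMC_{[n],[m]}\times\DMC_{\mathcal{X}_2,\mathcal{Y}_2}$, descends through the simultaneous quotient $R$ of both factors via Lemma \ref{lemQuotientFunction}, and then identifies $(\DMC_{\ast,\ast}\times\DMC_{\mathcal{X}_2,\mathcal{Y}_2})/R$ with $\DMC_{\ast,\ast}^{(s)}\times\DMC_{\mathcal{X}_2,\mathcal{Y}_2}^{(s)}$ by Corollary \ref{corQuotientProd}. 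Your version reuses Proposition \ref{propContOperDMCXYs} more directly and needs only the plain Theorem \ref{theQuotientProd}; the paper's version is marginally more self-contained at the unquotiented level. Both work.

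One correction to your closing remark: Corollary \ref{corQuotientProd} is \emph{not} unavailable for this proposition, and the paper in fact uses it. Its hypothesis is not that both quotient spaces be locally compact Hausdorff; it asks that the \emph{unquotiented} space $S$ and the quotient $T/R_T$ of the \emph{other} factor be locally compact Hausdorff. Taking $T=\DMC_{\mathcal{X}_2,\mathcal{Y}_2}$ (so $T/R_T=\DMC_{\mathcal{X}_2,\mathcal{Y}_2}^{(s)}$ is compact Hausdorff) and $S=\DMC_{\ast,\ast}$ (metrizable and locally compact as a disjoint union of compact metric spaces) satisfies it. The obstruction you have in mind is real only for the full product $\DMC_{\ast,\ast}^{(s)}\times\DMC_{\ast,\ast}^{(s)}$, where one would need $\DMC_{\ast,\ast}^{(s)}$ itself to be locally compact; that is exactly the open issue the paper discusses after the proposition, not an obstacle to the proposition itself.
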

\begin{proof}
We only prove the continuity of the channel sum because the proof of the continuity of the channel product is similar.

Due to the distributivity of the product with respect to disjoint unions, we have:
$$\textstyle\DMC_{\ast,\ast}\times\DMC_{\mathcal{X}_2,\mathcal{Y}_2}={\displaystyle\coprod_{n,m\geq1}}(\DMC_{[n],[m]}\times\DMC_{\mathcal{X}_2,\mathcal{Y}_2}),$$
and
$$\textstyle\mathcal{T}_{s,\ast,\ast}\otimes\mathcal{T}_{\mathcal{X}_2,\mathcal{Y}_2}={\displaystyle\bigoplus_{n,m\geq1}}\left(\mathcal{T}_{[n],[m]}\otimes\mathcal{T}_{\mathcal{X}_2,\mathcal{Y}_2}\right).$$

Therefore, the space $\DMC_{\ast,\ast}\times\DMC_{\mathcal{X}_2,\mathcal{Y}_2}$ is the topological disjoint union of the spaces $(\DMC_{[n],[m]}\times\DMC_{\mathcal{X}_2,\mathcal{Y}_2})_{n,m\geq 1}$.

For every $n,m\geq 1$, let $\Proj_{n,m}$ be the projection onto the $R_{[n]\coprod\mathcal{X}_2,[m]\coprod\mathcal{Y}_2}^{(s)}$-equivalence classes and let $i_{n,m}$ be the canonical injection from $\DMC_{[n]\coprod\mathcal{X}_2,[m]\coprod\mathcal{Y}_2}^{(s)}$ to $\DMC_{\ast,\ast}^{(s)}$. 

Define the mapping $f: \DMC_{\ast,\ast}\times\DMC_{\mathcal{X}_2,\mathcal{Y}_2}\rightarrow \DMC_{\ast,\ast}^{(s)}$ as $$\textstyle f(W_1,W_2)=i_{n,m}(\Proj_{n,m}(W_1\oplus W_2))=\hat{W}_1\oplus\overline{W}_2,$$
where $n$ and $m$ are the unique integers satisfying $W_1\in \DMC_{[n],[m]}$. $\hat{W}_1$ and $\overline{W}_2$ are the $R_{[n],[m]}^{(s)}$ and $R_{\mathcal{X}_2,\mathcal{Y}_2}^{(s)}$-equivalence classes of $W_1$ and $W_2$ respectively.

Clearly, the mapping $f$ is continuous on $\DMC_{[n],[m]}\times\DMC_{\mathcal{X}_2,\mathcal{Y}_2}$ for every $n,m\geq 1$. Therefore, $f$ is continuous on $(\DMC_{\ast,\ast}\times\DMC_{\mathcal{X}_2,\mathcal{Y}_2},\mathcal{T}_{s,\ast,\ast}\otimes\mathcal{T}_{\mathcal{X}_2,\mathcal{Y}_2})$.

Let $R$ be the equivalence relation defined on $\DMC_{\ast,\ast}\times\DMC_{\mathcal{X}_2,\mathcal{Y}_2}$ as follows: $(W_1,W_2)R(W_1',W_2')$ if and only if $W_1 R_{\ast,\ast}^{(s)} W_1'$ and $W_2 R_{\mathcal{X}_2,\mathcal{Y}_2}^{(s)} W_2'$.

Since $f(W_1,W_2)$ depends only on the $R$-equivalence class of $(W_1,W_2)$, Lemma \ref{lemQuotientFunction} implies that the transcendent mapping of $f$ is continuous on $(\DMC_{\ast,\ast}\times\DMC_{\mathcal{X}_2,\mathcal{Y}_2})/R$.

Since $(\DMC_{\ast,\ast},\mathcal{T}_{s,\ast,\ast})$ and $\DMC_{\mathcal{X}_2,\mathcal{Y}_2}^{(s)}=\DMC_{\mathcal{X}_2,\mathcal{Y}_2}/R_{\mathcal{X}_2,\mathcal{Y}_2}^{(s)}$ are Hausdorff and locally compact, Corollary \ref{corQuotientProd} implies that the canonical bijection from $\DMC_{\ast,\ast}^{(s)}\times \DMC_{\mathcal{X}_2,\mathcal{Y}_2}^{(s)}$ to $(\DMC_{\ast,\ast}\times \DMC_{\mathcal{X}_2,\mathcal{Y}_2})/R$ is a homeomorphism. We conclude that the channel sum is continuous on $(\DMC_{\ast,\ast}^{(s)}\times \DMC_{\mathcal{X}_2,\mathcal{Y}_2}^{(s)},\mathcal{T}_{s,\ast,\ast}^{(s)}\otimes\mathcal{T}_{\mathcal{X}_2,\mathcal{Y}_2}^{(s)})$.
\end{proof}

\vspace*{3mm}

The reader might be wondering why the channel sum and the channel product were not shown to be continuous on the whole space $\DMC_{\ast,\ast}^{(s)}\times \DMC_{\ast,\ast}^{(s)}$ instead of the smaller space $\DMC_{\ast,\ast}^{(s)}\times \DMC_{\mathcal{X}_2,\mathcal{Y}_2}^{(s)}$. The reason is because we cannot apply Corollary \ref{corQuotientProd} to $\DMC_{\ast,\ast}\times \DMC_{\ast,\ast}$ and $\DMC_{\ast,\ast}^{(s)}\times \DMC_{\ast,\ast}^{(s)}$ since we do not know whether $(\DMC_{\ast,\ast}^{(s)},\mathcal{T}_{s,\ast,\ast}^{(s)})$ is locally compact or not. Moreover, as we stated in Remark \ref{remNegPropertiesNatTopShan}, if Conjecture \ref{conjShanInterior} is true then $(\DMC_{\ast,\ast}^{(s)},\mathcal{T}_{s,\ast,\ast}^{(s)})$ is not locally compact.

As in the case of the space of equivalent channels \cite{RajContTop}, one potential method to show the continuity of the channel sum on $(\DMC_{\ast,\ast}^{(s)}\times\DMC_{\ast,\ast}^{(s)},\mathcal{T}_{s,\ast,\ast}^{(s)}\otimes \mathcal{T}_{s,\ast,\ast}^{(s)})$ is as follows: let $R$ be the equivalence relation on $\DMC_{\ast,\ast}\times\DMC_{\ast,\ast}$ defined as $(W_1,W_2)R(W_1',W_2')$ if and only if $W_1 R_{\ast,\ast}^{(s)}W_1'$ and $W_2 R_{\ast,\ast}^{(s)}W_2'$. We can identify $(\DMC_{\ast,\ast}\times\DMC_{\ast,\ast})/R$ with $\DMC_{\ast,\ast}^{(s)}\times\DMC_{\ast,\ast}^{(s)}$ through the canonical bijection. Using Lemma \ref{lemQuotientFunction}, it is easy to see that the mapping $(\hat{W}_1,\overline{W}_2)\rightarrow \hat{W}_1\oplus\overline{W}_2$ is continuous from $\big(\DMC_{\ast,\ast}^{(s)}\times\DMC_{\ast,\ast}^{(s)}, (\mathcal{T}_{s,\ast,\ast}\otimes \mathcal{T}_{s,\ast,\ast})/R\big)$ to $(\DMC_{\ast,\ast}^{(s)},\mathcal{T}_{s,\ast,\ast}^{(s)})$.

It was shown in \cite{CompactlyGenerated} that the topology $(\mathcal{T}_{s,\ast,\ast}\otimes \mathcal{T}_{s,\ast,\ast})/R$ is homeomorphic to $\kappa(\mathcal{T}_{s,\ast,\ast}^{(s)}\otimes \mathcal{T}_{s,\ast,\ast}^{(s)})$ through the canonical bijection, where $\kappa(\mathcal{T}_{s,\ast,\ast}^{(s)}\otimes \mathcal{T}_{s,\ast,\ast}^{(s)})$ is the coarsest topology that is both compactly generated and finer than $\mathcal{T}_{s,\ast,\ast}^{(s)}\otimes \mathcal{T}_{s,\ast,\ast}^{(s)}$. Therefore, the mapping $(\hat{W}_1,\overline{W}_2)\rightarrow \hat{W}_1\oplus\overline{W}_2$ is continuous on $\big(\DMC_{\ast,\ast}^{(s)}\times\DMC_{\ast,\ast}^{(s)}, \kappa(\mathcal{T}_{s,\ast,\ast}^{(s)}\otimes \mathcal{T}_{s,\ast,\ast}^{(s)})\big)$. This means that if $\mathcal{T}_{s,\ast,\ast}^{(s)}\otimes \mathcal{T}_{s,\ast,\ast}^{(s)}$ is compactly generated, we will have $\mathcal{T}_{s,\ast,\ast}^{(s)}\otimes \mathcal{T}_{s,\ast,\ast}^{(s)}=\kappa(\mathcal{T}_{s,\ast,\ast}^{(s)}\otimes \mathcal{T}_{s,\ast,\ast}^{(s)})$ and so the channel sum will be continuous on $(\DMC_{\ast,\ast}^{(s)}\times\DMC_{\ast,\ast}^{(s)}, \mathcal{T}_{s,\ast,\ast}^{(s)}\otimes \mathcal{T}_{s,\ast,\ast}^{(s)})$. Note that although $\mathcal{T}_{s,\ast,\ast}^{(s)}$ and $\mathcal{T}_{s,\ast,\ast}^{(s)}$ are compactly generated, their product $\mathcal{T}_{s,\ast,\ast}^{(s)}\otimes \mathcal{T}_{s,\ast,\ast}^{(s)}$ might not be compactly generated.

\section{Discussion and open problems}

The following continuity-related problems remain open:
\begin{itemize}
\item The continuity of the channel parameters $C$ and $P_{e,n,M}$ in the BRM topology $\mathcal{T}_{\ast,\ast}^{(s)}$.
\item The continuity of the channel sum and the channel product on the whole product space $(\DMC_{\ast,\ast}^{(s)}\times \DMC_{\ast,\ast}^{(s)},\mathcal{T}_{s,\ast,\ast}^{(s)}\otimes \mathcal{T}_{s,\ast,\ast}^{(s)})$. As we explained in Section \ref{subsecContOperShan}, it is sufficient to prove that the product topology $\mathcal{T}_{s,\ast,\ast}^{(s)}\otimes \mathcal{T}_{s,\ast,\ast}^{(s)}$ is compactly generated.
\item The continuity of the channel sum and the channel product in the BRM topology.
\end{itemize}

\section*{Acknowledgment}

I would like to thank Emre Telatar for helpful discussions. I am also grateful to Maxim Raginsky for informing me about the work of Blackwell on statistical experiments.

\appendices

\section{Proof of Proposition \ref{propReldXYdXYs}}
\label{appReldXYdXYs}
Fix $n,m\geq 1$ and let $l\in\Delta_{[n]\times[m]}$. Define $\mathcal{G}_1=([n],\mathcal{X},\mathcal{Y},[m],l,W_1)$ and $\mathcal{G}_2=([n],\mathcal{X},\mathcal{Y},[m],l,W_2)$. For every $S\in\mathcal{S}_{[n],\mathcal{X},\mathcal{Y},[m]}$, we have:
\begin{align*}
\hat{\$}&(S,\mathcal{G}_1)\\
&=\frac{1}{n}\sum_{u\in[n]} \hat{\$}(u,S,\mathcal{G}_1)=\frac{1}{n}\sum_{\substack{u\in[n]}}\sum_{i=1}^{n_S}\alpha_S(i)\sum_{y\in\mathcal{Y}}W_1\big(y\big|f_{i,S}(u)\big)l\big(u,g_{i,S}(y)\big)\\
&=\left(\frac{1}{n}\sum_{\substack{u\in[n]}}\sum_{i=1}^{n_S}\alpha_S(i)\sum_{y\in\mathcal{Y}}W_2\big(y\big|f_{i,S}(u)\big)l\big(u,g_{i,S}(y)\big)\right)\\
&\;\;\;+\frac{1}{n}\sum_{\substack{u\in[n]}}\sum_{i=1}^{n_S}\alpha_S(i)\sum_{y\in\mathcal{Y}}\Big(W_1\big(y\big|f_{i,S}(u)\big)-W_2\big(y\big|f_{i,S}(u)\big)\Big)l\big(u,g_{i,S}(y)\big)\\
&\leq \hat{\$}(S,\mathcal{G}_2) + \sum_{i=1}^{n_S}\frac{\alpha_S(i)}{n}\sum_{\substack{u\in[n]}}\sum_{\substack{y\in\mathcal{Y},\\W_1(y|f_{i,S}(u))\geq W_2(y|f_{i,S}(u))}}\Big(W_1\big(y\big|f_{i,S}(u)\big)-W_2\big(y\big|f_{i,S}(u)\big)\Big)l\big(u,g_{i,S}(y)\big)\\
&\stackrel{(a)}{\leq} \hat{\$}(S,\mathcal{G}_2) + \sum_{i=1}^{n_S}\frac{\alpha_S(i)}{n}\sum_{\substack{u\in[n]}}\sum_{\substack{y\in\mathcal{Y},\\W_1(y|f_{i,S}(u))\geq W_2(y|f_{i,S}(u))}}\Big(W_1\big(y\big|f_{i,S}(u)\big)-W_2\big(y\big|f_{i,S}(u)\big)\Big)\\
&= \hat{\$}(S,\mathcal{G}_2) + \sum_{i=1}^{n_S}\frac{\alpha_S(i)}{n}\sum_{\substack{u\in[n]}} \frac{1}{2} \sum_{y\in\mathcal{Y}} \big|W_1\big(y\big|f_{i,S}(u)\big)-W_2\big(y\big|f_{i,S}(u)\big)\big|\\
&\leq \hat{\$}(S,\mathcal{G}_2) + \sum_{i=1}^{n_S}\frac{\alpha_S(i)}{n}\sum_{\substack{u\in[n]}} \max_{x\in\mathcal{X}} \frac{1}{2} \sum_{y\in\mathcal{Y}} |W_1(y|x)-W_2(y|x)|\\
&= \hat{\$}(S,\mathcal{G}_2) + \sum_{i=1}^{n_S}\frac{\alpha_S(i)}{n}\sum_{\substack{u\in[n]}} d_{\mathcal{X},\mathcal{Y}}(W_1,W_2)=\hat{\$}(S,\mathcal{G}_2) + d_{\mathcal{X},\mathcal{Y}}(W_1,W_2)\\
&\leq d_{\mathcal{X},\mathcal{Y}}(W_1,W_2) + \sup_{S'\in\mathcal{S}_{[n],\mathcal{X},\mathcal{Y},[m]}} \hat{\$}(S',\mathcal{G}_2)=d_{\mathcal{X},\mathcal{Y}}(W_1,W_2) + \$_{\opt}(\mathcal{G}_2),
\end{align*}
where (a) follows from the fact that $l(u,g_{i,S}(y))\leq 1$ (because $l\in\Delta_{[n]\times[m]}$). Therefore,
$$\$_{\opt}(\mathcal{G}_1)=\sup_{S\in\mathcal{S}_{[n],\mathcal{X},\mathcal{Y},[m]}} \hat{\$}(S,\mathcal{G}_1)\leq \$_{\opt}(\mathcal{G}_2) + d_{\mathcal{X},\mathcal{Y}}(W_1,W_2),$$
hence
$$\$_{\opt}(\mathcal{G}_1)- \$_{\opt}(\mathcal{G}_2) \leq d_{\mathcal{X},\mathcal{Y}}(W_1,W_2).$$
We can show similarly that $\$_{\opt}(\mathcal{G}_2)- \$_{\opt}(\mathcal{G}_2) \leq d_{\mathcal{X},\mathcal{Y}}(W_1,W_2)$. Therefore, 
$$|\$_{\opt}(l,\hat{W}_1)- \$_{\opt}(l,\hat{W}_2)|=|\$_{\opt}(l,W_1)- \$_{\opt}(l,W_2)|=|\$_{\opt}(\mathcal{G}_1)- \$_{\opt}(\mathcal{G}_2)| \leq d_{\mathcal{X},\mathcal{Y}}(W_1,W_2).$$
We conclude that
$$d_{\mathcal{X},\mathcal{Y}}^{(s)}(\hat{W}_1,\hat{W}_2)=\sup_{\substack{n,m\geq 1,\\l\in\Delta_{[n]\times[m]}}} |\$_{\opt}(l,\hat{W}_1)- \$_{\opt}(l,\hat{W}_2)|\leq d_{\mathcal{X},\mathcal{Y}}(W_1,W_2).$$

\section{Proof of Proposition \ref{propEmbedShanEquiv}}

\label{appEmbedShanEquiv}

Corollary \ref{corEquivChannelSurjInj} implies that $\Proj_2(D_g\circ W\circ D_f)=\Proj_2(D_g\circ W'\circ D_f)$ if and only if $W R_{\mathcal{X}_1,\mathcal{Y}_1}^{(s)}W'$. Therefore, $\Proj_2(D_g\circ W'\circ D_f)$ does not depend on $W'\in\hat{W}$, hence $F$ is well defined. Corollary \ref{corEquivChannelSurjInj} also shows that $\Proj_2(D_g\circ W'\circ D_f)$ does not depend on the particular choice of the surjection $f$ or the injection $g$, hence it is canonical (i.e., it depends only on $\mathcal{X}_1,\mathcal{X}_2,\mathcal{Y}_1$ and $\mathcal{Y}_2$).

On the other hand, the mapping $W\rightarrow D_g\circ W\circ D_f$ is a continuous mapping from $\DMC_{\mathcal{X}_1,\mathcal{Y}_1}$ to $\DMC_{\mathcal{X}_2,\mathcal{Y}_2}$, and $\Proj_2$ is continuous. Therefore, the mapping $W\rightarrow \Proj_2(D_g\circ W\circ D_f)$ is a continuous mapping from $\DMC_{\mathcal{X}_1,\mathcal{Y}_1}$ to $\DMC_{\mathcal{X}_2,\mathcal{Y}_2}^{(s)}$. Now since $\Proj_2(D_g\circ W \circ D_f)$ depends only on the $R_{\mathcal{X}_1,\mathcal{Y}_1}^{(s)}$-equivalence class $\hat{W}$ of $W$, Lemma \ref{lemQuotientFunction} implies that the transcendent mapping of $W\rightarrow \Proj_2(D_g\circ W\circ D_f)$ that is defined on $\DMC_{\mathcal{X}_1,\mathcal{Y}_1}^{(s)}$ is continuous. Therefore, $F$ is a continuous mapping from $(\DMC_{\mathcal{X}_1,\mathcal{Y}_1}^{(s)},\mathcal{T}_{\mathcal{X}_1,\mathcal{Y}_1}^{(s)})$ to $(\DMC_{\mathcal{X}_2,\mathcal{Y}_2}^{(s)},\mathcal{T}_{\mathcal{X}_2,\mathcal{Y}_2}^{(s)})$. Moreover, we can see from Corollary \ref{corEquivChannelSurjInj} that $F$ is an injection.

For every closed subset $B$ of $\DMC_{\mathcal{X}_1,\mathcal{Y}_1}^{(s)}$, $B$ is compact since $\DMC_{\mathcal{X}_1,\mathcal{Y}_1}^{(s)}$ is compact, hence $F(B)$ is compact because $F$ is continuous. This implies that $F(B)$ is closed in $\DMC_{\mathcal{X}_2,\mathcal{Y}_2}^{(s)}$ since $\DMC_{\mathcal{X}_2,\mathcal{Y}_2}^{(s)}$ is Hausdorff (as it is metrizable). Therefore, $F$ is a closed mapping.

Now since $F$ is an injection that is both continuous and closed, $F$ is a homeomorphism between $\DMC_{\mathcal{X}_1,\mathcal{Y}_1}^{(s)}$ and $F\big(\DMC_{\mathcal{X}_1,\mathcal{Y}_1}^{(s)}\big)\subset \DMC_{\mathcal{X}_2,\mathcal{Y}_2}^{(s)}$.

We would like now to show that $F\big(\DMC_{\mathcal{X}_1,\mathcal{Y}_1}^{(s)}\big)$ depends only on $|\mathcal{X}_1|$, $|\mathcal{Y}_1|$, $\mathcal{X}_2$ and $\mathcal{Y}_2$. Let $\mathcal{X}_1'$ and $\mathcal{Y}_1'$ be two finite sets such that $|\mathcal{X}_1|=|\mathcal{X}_1'|$ and $|\mathcal{Y}_1|=|\mathcal{Y}_1'|$. For every $W\in \DMC_{\mathcal{X}_1',\mathcal{Y}_1'}$, let $\overline{W}\in\DMC_{\mathcal{X}_1',\mathcal{Y}_1'}^{(s)}$ be the $R_{\mathcal{X}_1',\mathcal{Y}_1'}^{(s)}$-equivalence class of $W$.

Let $f':\mathcal{X}_1\rightarrow \mathcal{X}_1'$ be a fixed bijection from $\mathcal{X}_1$ to $\mathcal{X}_1'$ and let $f''=f'\circ f$. Also, let $g':\mathcal{Y}_1'\rightarrow\mathcal{Y}_1$ be a fixed bijection from $\mathcal{Y}_1'$ to $\mathcal{Y}_1$ and let $g''=g\circ g'$. Define $F': \DMC_{\mathcal{X}_1',\mathcal{Y}_1'}^{(s)}\rightarrow \DMC_{\mathcal{X}_2,\mathcal{Y}_2}^{(s)}$ as $F'(\overline{W})=\widetilde{D_{g''}\circ W'\circ D_{f''}}=\Proj_2(D_{g''}\circ W'\circ D_{f''}),$ where $W'\in \overline{W}$. As above, $F'$ is well defined, and it is a homeomorphism from $\DMC_{\mathcal{X}_1',\mathcal{Y}_1'}^{(s)}$ to $F'\big(\DMC_{\mathcal{X}_1',\mathcal{Y}_1'}^{(s)}\big)$. We want to show that $F'\big(\DMC_{\mathcal{X}_1',\mathcal{Y}_1'}^{(s)}\big)=F\big(\DMC_{\mathcal{X}_1,\mathcal{Y}_1}^{(s)}\big)$. For every $\overline{W}\in \DMC_{\mathcal{X}_1',\mathcal{Y}_1'}^{(s)}$, let $W'\in\overline{W}$. We have 
\begin{align*}
\textstyle F'(\overline{W})= \textstyle\Proj_2(D_{g''}\circ W'\circ D_{f''})&=\textstyle \Proj_2(D_{g}\circ(D_{g'}\circ W'\circ D_{f'})\circ D_f)\\
&=\textstyle F\left(\widehat{D_{g'}\circ W'\circ D_{f'}}\right)\in F\big(\DMC_{\mathcal{X}_1,\mathcal{Y}_1}^{(s)}\big).
\end{align*}
Since this is true for every $\overline{W}\in \DMC_{\mathcal{X}_1',\mathcal{Y}_1'}^{(s)}$, we deduce that $F'\big(\DMC_{\mathcal{X}_1',\mathcal{Y}_1'}^{(s)}\big)\subset F\big(\DMC_{\mathcal{X}_1,\mathcal{Y}_1}^{(s)}\big)$. By exchanging the roles of $(\mathcal{X}_1,\mathcal{Y}_1)$ and $(\mathcal{X}_1',\mathcal{Y}_1')$ and using the fact that $f=f'^{-1}\circ f''$ and $g=g''\circ g'^{-1}$, we get $F\big(\DMC_{\mathcal{X}_1,\mathcal{Y}_1}^{(s)}\big)\subset F'\big(\DMC_{\mathcal{X}_1',\mathcal{Y}_1'}^{(s)}\big)$. We conclude that $F\big(\DMC_{\mathcal{X}_1,\mathcal{Y}_1}^{(s)}\big)=F'\big(\DMC_{\mathcal{X}_1',\mathcal{Y}_1'}^{(s)}\big)$, which means that $F\big(\DMC_{\mathcal{X}_1,\mathcal{Y}_1}^{(s)}\big)$ depends only on $|\mathcal{X}_1|$, $|\mathcal{Y}_1|$, $\mathcal{X}_2$ and $\mathcal{Y}_2$.

Finally, for every $W'\in\hat{W}$ and every $W''\in F(\hat{W})=\widetilde{D_g\circ W'\circ D_f}$, $W''$ is Shannon-equivalent to $D_g\circ W'\circ D_f$ and $D_g\circ W'\circ D_f$ is Shannon-equivalent to $W'$ (by Lemma \ref{lemEquivChannelSurjInj}), hence $W''$ is Shannon-equivalent to $W'$.

\section{Proof of Lemma \ref{lemDMCXsNorm}}
\label{appDMCXsNorm}

Define $\DMC_{[0],[0]}^{(s)}=\o$, which is strongly closed in $\DMC_{\ast,\ast}^{(s)}$.

Let $A$ and $B$ be two disjoint strongly closed subsets of $\DMC_{\ast,\ast}^{(s)}$. For every $n\geq 0$, let $A_n=A\cap \DMC_{[n],[n]}^{(s)}$ and $B_n=B\cap \DMC_{[n],[n]}^{(s)}$. Since $A$ and $B$ are strongly closed in $\DMC_{\ast,\ast}^{(s)}$, $A_n$ and $B_n$ are closed in $\DMC_{[n],[n]}^{(s)}$. Moreover, $A_n\cap B_n\subset A\cap B=\o$.

Construct the sequences $(U_n)_{n\geq 0},(U_n')_{n\geq 0},(K_n)_{n\geq 0}$ and $(K_n')_{n\geq 0}$ recursively as follows:

$U_0=U_0'=K_0=K_0'=\o\subset\DMC_{[0],[0]}^{(s)}$. Since $A_0=B_0=\o$, we have $A_0\subset U_0\subset K_0$ and $B_0\subset U_0'\subset K_0'$. Moreover, $U_0$ and $U_0'$ are open in $\DMC_{[0],[0]}^{(s)}$, $K_0$ and $K_0'$ are closed in $\DMC_{[0],[0]}^{(s)}$, and $K_0\cap K_0'=\o$.

Now let $n\geq 1$ and assume that we constructed $(U_j)_{0\leq j< n},(U_j')_{0\leq j< n},(K_j)_{0\leq j< n}$ and $(K_j')_{0\leq j< n}$ such that for every $0\leq j< n$, we have $A_j\subset U_j\subset K_j\subset\DMC_{[j],[j]}^{(s)}$, $B_j\subset U_j'\subset K_j'\subset \DMC_{[j],[j]}^{(s)}$, $U_j$ and $U_j'$ are open in $\DMC_{[j],[j]}^{(s)}$, $K_j$ and $K_j'$ are closed in $\DMC_{[j],[j]}^{(s)}$, and $K_j\cap K_j'=\o$. Moreover, assume that $K_j\subset U_{j+1}$ and $K_j'\subset U_{j+1}'$ for every $0\leq j<n-1$.

Let $C_n=A_n\cup K_{n-1}$ and $D_n=B_n\cup K_{n-1}'$. Since $K_{n-1}$ and $K_{n-1}'$ are closed in $\DMC_{[n-1],[n-1]}^{(s)}$ and since $\DMC_{[n-1],[n-1]}^{(s)}$ is closed in $\DMC_{[n],[n]}^{(s)}$, we can see that $K_{n-1}$ and $K_{n-1}'$ are closed in $\DMC_{[n],[n]}^{(s)}$. Therefore, $C_n$ and $D_n$ are closed in $\DMC_{[n],[n]}^{(s)}$. Moreover, we have
\begin{align*}
C_n\cap D_n&=(A_n\cup K_{n-1})\cap(B_n\cup K_{n-1}')\\
&=(A_n\cap B_n)\cup(A_n\cap K_{n-1}')\cup (K_{n-1}\cap B_n)\cup(K_{n-1}\cap K_{n-1}')\\
&\stackrel{(a)}{=}\textstyle \left(A_n\cap K_{n-1}'\cap \DMC_{[n-1],[n-1]}^{(s)}\right)\cup \left(K_{n-1}\cap \DMC_{[n-1],[n-1]}^{(s)}\cap B_n\right)\\
&=(A_{n-1}\cap K_{n-1}')\cup (K_{n-1}\cap B_{n-1})\subset (K_{n-1}\cap K_{n-1}')\cup (K_{n-1}\cap K_{n-1}')=\o,
\end{align*}
where (a) follows from the fact that $A_n\cap B_n=K_{n-1}\cap K_{n-1}'=\o$ and the fact that $K_{n-1}\subset \DMC_{[n-1],[n-1]}^{(s)}$ and $K_{n-1}'\subset \DMC_{[n-1],[n-1]}^{(s)}$.

Since $\DMC_{[n],[n]}^{(s)}$ is normal (because it is metrizable), and since $C_n$ and $D_n$ are closed disjoint subsets of $\DMC_{[n],[n]}^{(s)}$, there exist two sets $U_n,U_n'\subset \DMC_{[n],[n]}^{(s)}$ that are open in $\DMC_{[n],[n]}^{(s)}$ and two sets $K_n,K_n'\subset \DMC_{[n],[n]}^{(s)}$ that are closed in $\DMC_{[n],[n]}^{(s)}$ such that $C_n\subset U_n\subset K_n$, $D_n\subset U_n'\subset K_n'$ and $K_n\cap K_n'=\o$. Clearly, $A_n\subset U_n\subset K_n\subset \DMC_{[n],[n]}^{(s)}$, $B_n\subset U_n'\subset K_n'\subset \DMC_{[n],[n]}^{(s)}$, $K_{n-1}\subset U_n$ and $K_{n-1}'\subset U_n'$. This concludes the recursive construction.

Now define $\displaystyle U=\bigcup_{n\geq 0}U_n=\bigcup_{n\geq 1}U_n$ and $\displaystyle U'=\bigcup_{n\geq 0}U_n'=\bigcup_{n\geq 1}U_n'$. Since $A_n\subset U_n$ for every $n\geq 1$, we have 
\begin{align*}
\textstyle A=A\cap\DMC_{\ast,\ast}^{(s)}=A\cap\left({\displaystyle\bigcup_{n\geq 1}}\DMC_{[n],[n]}^{(s)}\right)={\displaystyle\bigcup_{n\geq 1}}\left(A\cap \DMC_{[n],[n]}^{(s)}\right)={\displaystyle\bigcup_{n\geq 1}} A_n\subset {\displaystyle\bigcup_{n\geq 1}} U_n =U.
\end{align*}
Moreover, for every $n\geq 1$ we have
\begin{align*}
\textstyle U\cap \DMC_{[n],[n]}^{(s)}=\left({\displaystyle\bigcup_{j\geq 1} U_j}\right)\cap \DMC_{[n],[n]}^{(s)}\stackrel{(a)}{=}\left({\displaystyle\bigcup_{j\geq n} U_j}\right)\cap \DMC_{[n],[n]}^{(s)}={\displaystyle\bigcup_{j\geq n} \left(U_j\cap \textstyle\DMC_{[n],[n]}^{(s)}\right)},
\end{align*}
where (a) follows from the fact that $U_j\subset K_j\subset U_{j+1}$ for every $j\geq 0$, which means that the sequence $(U_j)_{j\geq 1}$ is increasing.

For every $j\geq n$, we have $\DMC_{[n],[n]}^{(s)}\subset \DMC_{[j],[j]}^{(s)}$ and $U_j$ is open in $\DMC_{[j],[j]}^{(s)}$, hence $U_j\cap \DMC_{[n],[n]}^{(s)}$ is open in $\DMC_{[n],[n]}^{(s)}$. Therefore, $U\cap \DMC_{[n],[n]}^{(s)}=\displaystyle\bigcup_{j\geq n} \left(U_j\cap \textstyle\DMC_{[n],[n]}^{(s)}\right)$ is open in $\DMC_{[n],[n]}^{(s)}$. Since this is true for every $n\geq 1$, we conclude that $U$ is strongly open in $\DMC_{\ast,\ast}^{(s)}$.

We can show similarly that $B\subset U'$ and that $U'$ is strongly open in $\DMC_{\ast,\ast}^{(s)}$. Finally, we have
\begin{align*}
U\cap U'=\left(\bigcup_{n\geq 1} U_n\right)\cap \left(\bigcup_{n'\geq 1} U_{n'}'\right)=\bigcup_{n\geq 1, n'\geq 1}(U_n\cap U_{n'}')\stackrel{(a)}{=}\bigcup_{n\geq 1}(U_n\cap U_n')
&\subset\bigcup_{n\geq 1}(K_n\cap K_n')=\o,
\end{align*}
where (a) follows from the fact that for every $n\geq 1$ and every $n'\geq 1$, we have $$U_n\cap U_{n'}'\subset U_{\max\{n,n'\}}\cap U_{\max\{n,n'\}}'$$ because $(U_n)_{n\geq 1}$ and $(U_n')_{n\geq 1}$ are increasing. We conclude that $(\DMC_{\ast,\ast}^{(s)},\mathcal{T}_{s,\ast,\ast}^{(s)})$ is normal.

\bibliographystyle{IEEEtran}
\bibliography{bibliofile}

\begin{thebibliography}{10}
\providecommand{\url}[1]{#1}
\csname url@samestyle\endcsname
\providecommand{\newblock}{\relax}
\providecommand{\bibinfo}[2]{#2}
\providecommand{\BIBentrySTDinterwordspacing}{\spaceskip=0pt\relax}
\providecommand{\BIBentryALTinterwordstretchfactor}{4}
\providecommand{\BIBentryALTinterwordspacing}{\spaceskip=\fontdimen2\font plus
\BIBentryALTinterwordstretchfactor\fontdimen3\font minus
  \fontdimen4\font\relax}
\providecommand{\BIBforeignlanguage}[2]{{%
\expandafter\ifx\csname l@#1\endcsname\relax
\typeout{** WARNING: IEEEtran.bst: No hyphenation pattern has been}%
\typeout{** loaded for the language `#1'. Using the pattern for}%
\typeout{** the default language instead.}%
\else
\language=\csname l@#1\endcsname
\fi
#2}}
\providecommand{\BIBdecl}{\relax}
\BIBdecl

\bibitem{ShannonDegrad}
C.~Shannon, ``A note on a partial ordering for communication channels,''
  \emph{Inform. Contr.}, vol.~1, pp. 390--397, 1958.

\bibitem{blackwell1951}
D.~Blackwell, ``Comparison of experiments,'' in \emph{Proceedings of the Second
  Berkeley Symposium on Mathematical Statistics and Probability}.\hskip 1em
  plus 0.5em minus 0.4em\relax University of California Press, 1951, pp.
  93--102.

\bibitem{Sherman}
S.~Sherman, ``On a theorem of hardy, littlewood, polya, and blackwell,''
  \emph{Proceedings of the National Academy of Sciences of the United States of
  America}, vol.~37, no.~12, pp. 826--831, 1951.

\bibitem{Stein}
C.~Stein, ``Notes on a seminar on theoretical statistics. i. comparison of
  experiments,'' \emph{Report, University of Chicago}, 1951.

\bibitem{RajInputDegrad}
\BIBentryALTinterwordspacing
R.~Nasser, ``On the input-degradedness and input-equivalence between
  channels,'' Tech. Rep., 2017. [Online]. Available:
  \url{http://infoscience.epfl.ch/record/225283}
\BIBentrySTDinterwordspacing

\bibitem{RaginskyShannon}
M.~Raginsky, ``Shannon meets blackwell and le cam: Channels, codes, and
  statistical experiments,'' in \emph{2011 IEEE International Symposium on
  Information Theory Proceedings}, July 2011, pp. 1220--1224.

\bibitem{RajDMCTop}
R.~Nasser, ``Topological structures on {DMC} spaces,'' \emph{arXiv:1701.04467},
  Jan 2017.

\bibitem{RajContTop}
------, ``Continuity of channel parameters and operations under various {DMC}
  topologies,'' \emph{arXiv:1701.04466}, Jan 2017.

\bibitem{Engelking}
R.~Engelking, \emph{General topology}, ser. Monografie matematyczne.\hskip 1em
  plus 0.5em minus 0.4em\relax PWN, 1977.

\bibitem{ChannelSumProduct}
C.~Shannon, ``The zero error capacity of a noisy channel,'' \emph{IRE
  Transactions on Information Theory}, vol.~2, no.~3, pp. 8--19, September
  1956.

\bibitem{MiniMax}
D.~Du and P.~Pardalos, \emph{Minimax and Applications}, ser. Nonconvex
  Optimization and Its Applications.\hskip 1em plus 0.5em minus 0.4em\relax
  Springer US, 2013.

\bibitem{CompactlyGenerated}
N.~E. Steenrod, ``A convenient category of topological spaces.'' \emph{Michigan
  Math. J.}, vol.~14, no.~2, pp. 133--152, 05 1967.

\end{thebibliography}
\end{document}